\documentclass[11pt]{article}
\usepackage[left=1in, top=1in, right=1in, bottom=1in]{geometry}

\usepackage{microtype}
\usepackage{graphicx}
\usepackage{subfigure}
\usepackage{booktabs} 
\usepackage{amsmath, amsthm, amssymb, bm}
\usepackage{mathtools}
\usepackage{thm-restate}
\usepackage[authoryear, round]{natbib}
\usepackage{color}

\usepackage{hyperref}
\hypersetup{
    colorlinks=false,
    linkcolor=blue,
    filecolor=magenta,      
    urlcolor=cyan,
}

\newcommand{\AutoAdjust}[3]{\mathchoice{ \left #1 #2  \right #3}{#1 #2 #3}{#1 #2 #3}{#1 #2 #3} }
\newcommand{\Xcomment}[1]{{}}

\newcommand{\InBrackets}[1]{\AutoAdjust{[}{#1}{]}}
\newcommand{\Ex}[2][]{\operatorname{\mathbf E}_{#1}\InBrackets{#2}}
\newcommand{\Prx}[2][]{\operatorname{\mathbf{Pr}}_{#1}\InBrackets{#2}}

\newcommand{\dd}{\mathrm{d}}  
\newcommand{\given}{\;\mid\;}
\newcommand{\eps}{\epsilon}

\newtheorem{thm}{Theorem}[section]

\newtheorem{lemma}[thm]{Lemma}
\newtheorem{defn}[thm]{Definition}
\newtheorem{claim}[thm]{Claim}
\newtheorem{corollary}[thm]{Corollary}

\DeclareMathOperator*{\argmax}{arg\,max}

\newcommand{\noaccents}[1]{#1}
\newcommand{\newagentvar}[3][\noaccents]{%
\expandafter\newcommand\expandafter{\csname #2\endcsname}{#1{#3}}%
\expandafter\newcommand\expandafter{\csname #2s\endcsname}{#1{\boldsymbol{#3}}}%
\expandafter\newcommand\expandafter{\csname #2smi\endcsname}[1][i]{#1{\boldsymbol{#3}}_{-##1}}%
\expandafter\newcommand\expandafter{\csname #2i\endcsname}[1][i]{#1{#3}_{##1}}%
\expandafter\newcommand\expandafter{\csname #2ith\endcsname}[1][i]{#1{#3}_{(##1)}}%
}

\newcommand{\newvecagentvar}[3][\noaccents]{%
\expandafter\newcommand\expandafter{\csname #2\endcsname}{#1{\boldsymbol{#3}}}%
\expandafter\newcommand\expandafter{\csname #2s\endcsname}{#1{\boldsymbol{#3}}}%
\expandafter\newcommand\expandafter{\csname #2smi\endcsname}[1][i]{#1{\boldsymbol{#3}}_{-##1}}%
\expandafter\newcommand\expandafter{\csname #2i\endcsname}[1][i]{#1{\boldsymbol{#3}}_{##1}}%
\expandafter\newcommand\expandafter{\csname #2ith\endcsname}[1][i]{#1{#3}_{(##1)}}%
}

\newcommand{\bluecom}[1]{\textcolor{blue}{[#1]}}


\newagentvar{type}{t}
\newagentvar{alloc}{x}
\newagentvar{pay}{p}
\newagentvar{typespace}{T}
\newagentvar{val}{v}
\newagentvar{util}{u}
\newagentvar{payment}{p}
\newagentvar{dist}{F}
\newagentvar{bid}{b}

\newagentvar{expostU}{U}
\newagentvar{empDist}{E}

\newcommand{\interimalloc}{x}
\newcommand{\interimpay}{p}

\newcommand{\nsample}{m}
\newagentvar{sample}{s}
\newcommand{\alg}{\mathcal{A}}
\newcommand{\strategyset}{\mathcal{B}}

\newcommand{\posdist}{\dist^+}
\newcommand{\negdist}{\dist^-}

\newcommand{\distconst}{c_1}
\newcommand{\posbid}{\bid^+}
\newcommand{\negbid}{\bid^-}
\DeclareMathOperator{\kl}{KL}

\newcommand{\diffalg}{\mathcal{H}}
\newcommand{\diffconst}{c_2}
\newcommand{\diffset}{\theta}
\newcommand{\coordsets}{\mathcal {S}}

\DeclareMathOperator{\emp}{Emp}
\DeclareMathOperator{\empp}{Empp}
\newcommand{\func}{h}
\newcommand{\funcClass}{\mathcal{H}}
\newcommand{\PinputSpace}{\mathcal{X}}
\newagentvar{Pinput}{x}
\newagentvar{Plabel}{l}
\newagentvar{Pwitness}{t}
\DeclareMathOperator{\sgn}{sgn}
\DeclareMathOperator{\Pdim}{Pdim}
\newagentvar{permSample}{{\tilde{s}}}

\newagentvar{cost}{c}
\newcommand{\highval}{H}

\newagentvar{reserve}{r}
\newcommand{\hypoclass}{\funcClass}
\newcommand{\PDim}{\Pdim}

\newagentvar{threshold}{r}
\newcommand{\DtF}{\mu}
\newcommand{\FtD}{\lambda}
\newagentvar{fstrat}{\beta}
\newagentvar{dstrat}{\alpha}
\newagentvar{dtime}{t}
\newagentvar{dbid}{b}
\DeclareMathOperator{\DA}{DA}
\DeclareMathOperator{\FPA}{FPA}

\DeclareMathOperator{\SW}{SW}
\DeclareMathOperator{\OPT}{OPT}

\newcommand{\varalloc}{{\mathbb{A}}}
\newcommand{\varinspect}{{\mathbb{I}}}

\newcommand{\old}[1]{}

\title{Learning Utilities and Equilibria in Non-Truthful Auctions\footnote{
A previous version of this paper has been accepted to NeurIPS 2020.
We would like to thank Anna Karlin, Zhe Feng, Zihe Wang, Weiran Shen for helpful discussions.
The work was funded by an NSERC Discovery Grant, NSERC Discovery Acceleration Grant, and a Canadian Research Chair stipend.}}

\author{Hu Fu\thanks{University of British Columbia. \href{mailto:hufu@cs.ubc.ca}{\texttt{hufu@cs.ubc.ca}}}
\and Tao Lin\thanks{Harvard University. \href{mailto:tlin@g.harvard.edu}{\texttt{tlin@g.harvard.edu}}. The work was done when Tao Lin was at the Center on Frontiers of Computing Studies (CFCS), Department of Computer Science, Peking University, and during a visit to the University of British Columbia.}
}

\date{October, 2022}

\begin{document}
\maketitle

\begin{abstract}
In non-truthful auctions, agents' utility for a strategy depends on the strategies of the opponents and also the prior distribution over their private types; the set of Bayes Nash equilibria generally has an intricate dependence on the prior.  Using the First Price Auction as our main demonstrating example, we show that $\tilde O(n / \epsilon^2)$ samples from the prior with $n$ agents suffice for an algorithm to learn the interim utilities for all monotone bidding strategies.  As a consequence, this number of samples suffice for learning all approximate equilibria.  We give almost matching (up to polylog factors) lower bound on the sample complexity for learning utilities.  We also consider a setting where agents must pay a search cost to discover their own types.  Drawing on a connection between this setting and the first price auction, discovered recently by Kleinberg et al.~(2016), we show that $\tilde O(n / \epsilon^2)$ samples suffice for utilities and equilibria to be estimated in a near welfare-optimal descending auction in this setting.  En route, we improve the sample complexity bound, recently obtained by Guo et al.~(2021), for the Pandora's Box problem, which is a classical model for sequential consumer search.
\end{abstract}

\section{Introduction}
\label{sec:intro}

Mechanism design devises systems in which multiple agents take strategic actions based on their private preferences (designated as \emph{types}).  
For example, an auctioneer devises rules that determine an auction's winner and payments, based on bidders' actions (the bids); the bidders then, knowing the rule and their types --- in this case their own values for the item at sale --- strategize over their bids.
The following task is central to many aspects of mechanism design: given agents' strategies, evaluate each agent's performance, or \emph{utility}.
To start with, agents are most often interested in predicting the performance of their strategies given what the other agents do; nowadays, auctioneers and third-party service providers often give guidance to bidding, and are interested in such evaluations as well.
An auctioneer often would like to find out if a profile of strategies best respond to each other and are hence \emph{at equilibrium}; revenue, welfare and surplus analysis at equilibrium is all based on utility estimation.
Recent development in online ad auctions (such as the oCPX auctions) sees growing popularity of delegated bidding, where bidders entrust the auctioneer/platform with the task of bidding.  Auctioneers in this scenario must estimate the bidders' utilities given their bidding strategies.
As another application of utility estimation, one may be interested in knowing how far a mechanism is from being \emph{incentive compatible}, i.e., supposing all other agents bid their true values, how much incentive an agent would have to lie about her value \citep[see, e.g.,][]{BSV19}.



Since a bidder's individual type is known only to herself but may greatly affect her actions and the payoff of everyone else --- a scenario known as an \emph{incomplete information game} --- more is needed for utility prediction.
The canonical paradigm assumes that each agent's type is drawn from a distribution \citep{Harsanyi67}.
The utility prediction problem becomes, supposing agents other than~$i$ use given strategies and their types are drawn from the distributions, what is agent~$i$'s utility \emph{in expectation} for using a certain strategy?  
This so-called \emph{interim} utility is at heart of much modern mechanism design literature from both economics and computer science  \citep[e.g.][]{Mye81, LOV95, ST13, HL15}.


As is natural for many other settings with distributional prior knowledge, 
it is important to know how sensitive the problem is 
to the accuracy of the prior distributions; in particular, given sample access to the distribution, how many samples are needed to estimate the interim utilities of agents, for any profile of strategies?
This is the question we study in this work.


As we discuss below, without appealing to the structure of a game, it is hopeless to learn all utilities with a finite number of samples.  
In this work we take the First Price Auction (FPA) and the All Pay auction as two examples of non-truthful auction, and show that a small number of samples from the type distributions suffice to learn interim utilities for all \emph{monotone} bidding strategies.

The first price auction is one of the most ubiquitous and fundamental auctions, with a large literature on its various aspects, and is gaining more popularity recently for various practical reasons \citep[see e.g.][]{fpa1, fpa2, AL18}.
In the FPA, the item is sold to the bidder with the highest bid, who then pays her own bid; all other bidders pay nothing.
Our main result is that, for an FPA with $n$ bidders, $\tilde O(n / \eps^2)$ samples from the type distribution suffice to learn, with high probability, the interim utilities up to $\eps$ additive error for any profile of monotone bidding strategies.\footnote{The $\tilde O(\cdot)$ notation omits polylogarithmic factors.}  
As a corollary, this number of samples also suffice to learn the set of all $\eps$-Bayes Nash equilibria of an FPA, and to compute an $\eps$-Bayes equilibrium in polynomial time.
We also show that our bound is tight up to a logarithmic factor.




\paragraph{Sampling from Type Distributions.}
The assumption of having sample access to the underlying type distribution is standard in a long line of work on learning in mechanism design \citep[e.g.][]{CR14, MR15, MR16, BSV16, BSV18, GN17, Syrgkanis17, GW18, guo2019settling, brustle_multi-item_2020, yang_learning_2021, guo_generalizing_2021}.  
In particular, type samples have been assumed in the context of learning in non-truthful auctions \citep{BSV19, vitercik2021automated}.  
Just as in classical microeconomics, prior knowledge (in the form of samples here) comes from market research, survey, simulation etc., and is not assumed to be from past bidding history.  
We distance our approach from the line of work on learning non-truthful auctions where samples are from past bidding history \citep{CHN17, HT19}.  
This latter approach, with obvious merits, has its limitations.  
Crucially, it assumes that the observed bidding in a non-truthful mechanism is at equilibrium, which may not be the case in reality; also, to avoid strategic issues between auctions, the bidders need to be short lived or myopic.  
The two approaches (type samples vs.\@ bid samples) complement each other even in learning non-truthful mechanisms; this work takes the first approach, and leaves the direction with bid samples as an enticing open question.  

\paragraph{Our Techniques.}
Let us draw an analogy with classical PAC learning, where one typically has a class of functions on an input space and an unknown distribution over the input; with samples from the input distribution, one is to learn the expectation of the functions on the unknown distribution.
In our setting, every profile of strategies, one for each agent, defines a function which maps a profile of types to a utility for each agent; given the underlying value distribution, the expectation of this function is the agents' interim utilities.
We would like to have a number of samples such that, with high probability, the empirical estimation of utilities for \emph{all} bidding strategies is approximately correct.



With any finite number of samples, 
it is impossible to learn utilities for all arbitrary bidding strategies. 
Consider the following example: in an auction with two bidders, the first bidder having value~$1$ and bidding~$\frac 1 2$, and the second bidder's value~$\vali[2]$ uniformly drawn from $[0, 1]$; 
any finite set of samples of~$\vali[2]$ has probability measure~$0$ in the distribution of~$\vali[2]$; 
therefore on any set of samples, there are bidding strategies of bidder~$2$ that look the same on the sampled values but give bidder~$1$ drastically different utilities in expectation on the value distribution.

To overcome this, we observe that non-monotone bidding strategies are (weakly) dominated by monotone ones in a first price auction ---  a bidding strategy is monotone if its bid is non-decreasing with the value.  
For all practical purposes, therefore, learning utilities for monotone bidding strategies suffices.
A key technical insight of this work is that the restriction to monotone strategies makes it possible to learn utilities with a small number of samples.


Our sample complexity bound comes from bounding the \emph{pseudo-dimension} of the class of utility functions for monotone bidding strategies.
Pseudo-dimension is a generalization of the VC-dimension, and has been applied to learning in mechanism design \citep{MR15, MR16, BSV18, BSV19}.
Another key step is a lemma (Lemma~\ref{lem:relation-uniform-convergence}), highly related to a concentration inequality shown by \citet{DHP16}, that translates learning on a correlated distribution from samples to learning on a natural empirical product distribution. 


We extend our main result to the \emph{all pay auction}.  
In the all pay auction, the item is sold to the highest bidder, but all bidders, including the losers, pay their bids.  
This auction is a good model for crowdsourcing \citep{CHS12}.

\paragraph{Lower Bound.} We give a nontrivial, information-theoretic lower bound for the utility learning problem, showing that $\Omega(n/ \eps^2)$ samples are needed for learning utilities with monotone bidding strategies.  
The polynomial dependence on~$n$, the number of bidders, is interesting.
In particular, this highlights the difference between learning utilities for a fixed profile of bidding strategies and our task, which is to simultaneously learn the utilities for all strategies.  
For the first task, one may simply estimate, for each bidder, the cumulative density function of the highest bid among the opponents; it takes only $\Theta(1 / \eps^2)$ samples for the estimated CDF to be accurate everywhere.

\paragraph{Auctions with Search Costs.}
We extend our sample complexity bound to auctions with \emph{search costs}. 
In this setting, bidders know their value distributions but need to incur a cost to learn their true values.
Real estate and insurance markets exhibit this feature.
\citet{KWW16} showed that the Dutch auction beats the English auction and is near optimal at equilibrium for welfare in this setting.
%
%
If the value distributions are unknown, how many samples are needed for equilibria of a Dutch auction with search costs to be learned? 
A real estate market maker may be interested in this: he may collect market information and suggest search and bidding strategies for the participants.  
We show that $\tilde O(n / \eps^2)$ samples again suffice.
Specifically, with high probability we can reduce a Dutch auction with search costs on the true distribution to a first price auction without search costs on a transformed empirical distribution consisting of $\tilde O(n / \eps^2)$ value samples; any equilibrium in the first price auction can be transformed to an approximate equilibrium in the Dutch auction whose welfare is at least $(1-1/e)$-fraction of the maximal achievable welfare, up to an additive error of~$\eps$.

The maximal achievable welfare is obtained by Weitzman's algorithm for the \emph{Pandora's Box} problem  \citep{Weitzman79}, a classical model for sequential costly search.
En route, we improve a sample complexity bound for Pandora's Box problem recently obtained by \citet{guo_generalizing_2021}.  
%




\paragraph{Additional Related Works.}
\label{sec:related}
Most works cited above on learning in mechanisms are on learning revenue optimal mechanisms.  The following two exceptions are particularly close to our work.
Given a non-truthful auction, \citet{BSV19} studied the number of value samples needed to learn the maximal interim utility a bidder with some value could gain by non-truthful bidding, when all other bidders are truthful.  
In comparison, we learn interim utilities when all bidders use arbitrary monotone bidding strategies; this suffices for the study of virtually all properties of an auction, including the task of \citet{BSV19}
\footnote{
Our results imply that the maximal interim utility (w.r.t~the opponents' value distribution) obtained by non-truthful bidding can be approximated by the maximal obtainable interim utility w.r.t.~the empirical distribution, which can be computed by enumerating the samples in the empirical distribution because a best-responding bid must be equal to (or slightly more than) some opponent's value from the empirical distribution. 
}.
Our results rely on the monotonicity of the bidding strategy, which is natural in single item auctions; whether extension is possible to multi-parameter settings in a similar manner is an interesting question.

\citet{AGCU19} studied learning utilities in simulation-based games. 
They considered abstract conditional normal-form games, where the conditions play the role of types in auctions.
They used Rademacher complexity, similar to pseudo-dimension, to bound the number of sample conditions, but did not give such bounds for concrete games.
Bounding these complexity measures is generally challenging if not impossible.  
For example, without monotonicity in bidding strategies, the pseudo-dimension of utilities even in first price auctions is unbounded, as we discussed above.

There is a large literature on the computation of equilibrium in the first price auction \citep[e.g.][]{Marshall94, Gayle08, EMS09, filos-ratsikas_complexity_2021}.  
There has been major recent progress on the problem for discrete distributions \citep{SWZ20}.  
Our work provides additional motivation to study the problem on discrete distributions.

\citet{guo_generalizing_2021} gave a general method to bound sample complexity for learning over product distributions.  
Their main result requires a ``strong monotonicity'' property on the problem, which is satisfied by the Pandora Box problem.
Utilities for monotone bidding strategies, however, do not satisfy their property; our bound for the Pandora Box problem also slightly improves over theirs.

\section{Preliminaries on Auctions}
\label{sec:prelim}
\paragraph{Auctions.}
In a single item auction, $n$ bidders compete for the item.
We use $[n]$ to denote $\{1, \ldots, n\}$.
Each bidder~$i\in[n]$ has a private value $\vali$ drawn from a distribution $\disti$ supported on $\typespacei \subseteq \mathbb R_+$.
$|\typespacei|$ can be infinite. 
Distributions of different bidders are independent and can be non-identical.
In the auction, each bidder~$i$ makes a sealed-envelope bid of~$\bidi \geq 0$, and the auction maps the vector of bids~$\bids$ to an \emph{allocation} and \emph{payments}. 
$\alloci(\bids) \in [0, 1]$ denotes the probability with which bidder~$i$ receives the item at bid vector~$\bids$, with $\sum_{i=1}^n \alloci(\bids) \leq 1$, and $\payi(\bids)$ is the (expected) payment made by bidder~$i$.
In the \emph{first price auction} (FPA), the highest bidder (assuming there is no tie) wins the item and pays her bid, and no other bidder pays anything.  In the \emph{all pay auction}, the highest bidder wins the item but every bidder pays her bid.

Bidder~$i$'s \emph{ex post utility} is
$
\expostUi(\vali, \bidi, \bidsmi)\coloneqq \vali \alloci(\bids) - \payi(\bids).
$

As is standard in auction theory, we use $\valsmi$ to denote the vector $(\vali[1], \ldots, \vali[i-1], \vali[i+1], \ldots, \vali[n])$.  
Other vectors with subscript ``$-i$'' are similarly defined.

\paragraph{Strategies and Equilibrium.}
A \emph{bidding strategy} maps a bidder's value to a bid.
With slight abuse of notation, we denote bidder $i$'s strategy as $\bidi: \typespacei \to \mathbb R_+$.
To distinguish a bidding strategy (a mapping) and a bid (a number), we write the former as $\bidi(\cdot)$ and the latter as $\bidi$.
We write a vector of bids $(\bidi[1](\vali[1]), \cdots, \bidi[n](\vali[n]))$ as $\bids(\vals)$, and denote by $\bidsmi(\valsmi)$ the bids made by bidders except bidder~$i$. 

Bidder~$i$ with value~$\vali$ and bidding~$\bidi$, while the other bidders use bidding strategies $\bidsmi(\cdot)$, 
has \emph{interim utility} 
\begin{align}
\utili(\vali, \bidi, \bidsmi(\cdot)) \coloneqq{}& \Ex[\valsmi\sim \distsmi]{\expostUi(\vali, \bidi, \bidsmi(\valsmi))}     
 \label{eq:interim-util} \\
={}& \Ex[\valsmi \sim \distsmi]{\vali \alloci(\bidi, \bidsmi(\valsmi)) - \payi(\bidi, \bidsmi(\valsmi))}. \nonumber
\end{align}

\begin{defn}
\label{def:bne}
For $\eps \geq 0$, a profile of bidding strategies $\bids(\cdot) = (\bidi[1](\cdot), \cdots, \bidi[n](\cdot))$ in an auction is an \emph{$\eps$-Bayes Nash equilibrium} ($\eps$-BNE) for value distribution $\dists = \prod_{i=1}^n \disti$ if for each bidder~$i\in[n]$ and each $\vali \in \typespacei$, for any $\bidi'\in\mathbb R_+$,
\[
\utili(\vali, \bidi(\vali), \bidsmi(\cdot)) \geq 
\utili(\vali, \bidi', \bidsmi(\cdot)) - \eps.
\]
If $\eps = 0$, $\bids(\cdot)$ is a \emph{Bayes Nash equilibrium}.
\end{defn}

A bidding strategy $\bidi(\cdot)$ is said to be \emph{monotone} if $\val \geq \val' \Rightarrow \bidi(\val) \geq \bidi(\val')$.

\begin{restatable}{prop}{propmonotone}
\label{prop:monotone}
In a first price auction or an all pay auction, for any bidding strategy $\bidi(\cdot)$ of bidder~$i$, for any value distributions $\distsmi$, and any bidding strategies $\bidsmi(\cdot)$, there is a monotone bidding strategy $\bidi'(\cdot)$ such that $\forall \vali \in \typespacei$, $\utili(\vali, \bidi'(\vali), \bidsmi(\cdot)) \geq \utili(\vali, \bidi(\vali), \bidsmi(\cdot))$.
\end{restatable}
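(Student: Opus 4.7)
The plan is to construct the monotone replacement $\bidi'(\cdot)$ as a suitable monotone selection of best responses, using the fact that the interim utility is supermodular in $(\val,\bid)$. Let $g(\val,\bid) := \utili(\val,\bid,\bidsmi(\cdot))$. In both auction formats this has the form $g(\val,\bid) = \val\,W(\bid) - P(\bid)$, where $W(\bid) = \Pr_{\valsmi\sim\distsmi}[\text{bidding } \bid \text{ wins against } \bidsmi(\valsmi)]$ (under any fixed tie-breaking rule) and $P(\bid)$ is the expected payment, equal to $\bid\,W(\bid)$ in the first price auction and to $\bid$ in the all pay auction. The key structural fact is that $W(\cdot)$ is non-decreasing, so $g$ is affine in $\val$ with a slope that is itself non-decreasing in $\bid$. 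This yields the \emph{increasing differences} property: for $\val_2>\val_1$ and $\bid_2>\bid_1$,
$$g(\val_2,\bid_2) - g(\val_1,\bid_2) - g(\val_2,\bid_1) + g(\val_1,\bid_1) \;=\; (\val_2-\val_1)\bigl(W(\bid_2)-W(\bid_1)\bigr) \;\geq\; 0.$$

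In either auction, any bid $\bid > \val$ yields utility at most $0$ (the winning-contingent profit $\val-\bid$ is negative in FPA, and $\val\,W(\bid)-\bid \leq \val-\bid < 0$ in APA), whereas bidding $0$ always yields utility $\geq 0$; thus it is without loss of generality to restrict attention to $\bid \in [0,\val]$. I would then define $\bidi'(\val)$ as the smallest element of $\arg\max_{\bid\in[0,\val]} g(\val,\bid)$. By Topkis's monotone selection theorem applied to the supermodular function $g$ on the chain-monotone constraint correspondence $\val\mapsto[0,\val]$, this argmax is monotone in the strong set order, and the smallest-element selection $\bidi'(\cdot)$ is non-decreasing in $\val$. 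By optimality, $g(\val,\bidi'(\val)) \geq g(\val,\bidi(\val))$ for every $\val\in\typespacei$, which is exactly the inequality in the statement.

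The main obstacle is purely technical: when opponents' strategies induce atoms in $W$, the supremum of $g(\val,\cdot)$ on $[0,\val]$ need not be attained (e.g.\ the best response wants to bid ``just above'' an atom). This is handled by passing to the upper semi-continuous envelope of $g$ (equivalently, using right-limits of $W$), which coincides with $g$ up to a set of measure zero in $\bid$ and preserves the increasing-differences property, so the Topkis step applies verbatim. The combinatorial heart is really the \emph{two-point swap}: if $\val_1<\val_2$ yet $\bidi(\val_1)>\bidi(\val_2)$, the increasing-differences inequality above immediately gives
$$g(\val_1,\bidi(\val_2)) + g(\val_2,\bidi(\val_1)) \;\geq\; g(\val_1,\bidi(\val_1)) + g(\val_2,\bidi(\val_2)),$$
so exchanging the two bids weakly raises the utility at \emph{both} values. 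An alternative proof, which entirely sidesteps argmax-existence, would iterate this swap via a monotone rearrangement of $\bidi(\cdot)$ with respect to the value distribution; I would default to the Topkis route since it is cleaner and directly yields the desired pointwise dominance.
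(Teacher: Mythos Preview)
Your main route via Topkis is correct and is genuinely different from the paper's argument. The paper does not take a global best response over $[0,\val]$; instead it first replaces $\bidi(\vali)$ by the best bid \emph{within the range} $\bidi(\typespacei)$ (this already gives a pointwise improvement), and then shows directly that any residual non-monotonicity forces the interim allocation at both values to be~$0$ (using that in FPA $\payi=\bid\cdot\alloci$ and in APA $\payi=\bid$), so those bids can be reset to~$0$ without loss. Your approach buys generality: it works for any auction with non-decreasing interim allocation (hence increasing differences), and the monotonicity of $\bidi'$ comes for free from Topkis rather than from a case analysis. The paper's approach buys elementarity: no lattice machinery, no semicontinuity patch, and the argmax is taken over a finite/compact image set rather than over $[0,\val]$, so the attainment issue you flag never arises.

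One genuine slip: in your ``alternative'' two-point swap, the displayed inequality
\[
g(\val_1,\bidi(\val_2)) + g(\val_2,\bidi(\val_1)) \;\geq\; g(\val_1,\bidi(\val_1)) + g(\val_2,\bidi(\val_2))
\]
only says the \emph{sum} of utilities weakly increases after the swap; it does \emph{not} say the utility at each of $\val_1$ and $\val_2$ weakly increases, which is what the proposition requires. A bare rearrangement/iterated-swap argument therefore does not by itself yield pointwise dominance. (The paper sidesteps this by optimizing within $\bidi(\typespacei)$ first, which already secures pointwise dominance, and only then proving monotonicity.) Since your primary Topkis argument does give pointwise dominance directly (via $\bidi'(\val)\in\arg\max_{[0,\val]}g(\val,\cdot)$), this is a defect only in the alternative you sketch, not in your main proof.
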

\noindent The proof of Proposition~\ref{prop:monotone} is in Appendix~\ref{sec:proof-prop:monotone}. 


Throughout the paper we assume that there is an upper bound $H$ on $\typespacei$. Therefore, no bidder is willing to bid $\bidi >H$ and we only consider bidding strategies $\bidi(\cdot)$ that map $\typespacei$ to $[0, H]$.

\section{Sample Complexity of Utility Estimation}
\label{sec:fpa}


We are given a set of $\nsample$ samples,  $\samples = (\samples^1, \ldots, \samples^{\nsample})$ from the value distribution $\dists = \prod_i \disti$.  
 Note that each sample $\samples^j$ is a vector of values $(\samplei[1]^j, \ldots, \samplei[n]^j)$ drawn from~$\dists$.  

Given value samples~$\samples$, 
a \emph{utility learning algorithm}, denoted by $\alg$,
for each bidder~$i$ with value~$\vali$ and bidding strategies $\bids(\cdot) = (\bidi[1](\cdot), \cdots, \bidi[n](\cdot))$, 
outputs $\alg(\samples, i, \vali, \bids(\cdot))$, which estimates bidder~$i$'s interim utility when her value is~$\vali$ and the bidders use bidding strategies $\bids(\cdot)$. 



\begin{defn}
\label{def:util-learn-ensemble}
Let $\strategyset$ be a set of bidding strategy profiles. 
For $\eps>0, \delta \in (0, 1)$, a utility learning algorithm~$\alg$ \emph{$(\eps, \delta)$-learns with $\nsample$~samples the utilities over $\strategyset$} if,
for any product value distribution~$\dists$,
with probability at least $1 - \delta$, 
for any bidding strategy profile $\bids(\cdot) \in \strategyset$, 
for each bidder~$i\in[n]$ and any $\vali \in \typespacei$, \[
\left| \alg(\samples, i, \vali, \bids(\cdot)) - \utili(\vali, \bidi(\vali), \bidsmi(\cdot)) \right| < \eps,
\]
where the randomness is over the random draw of samples and the randomness of $\alg$ if it is randomized. 
\end{defn}

Throughout the paper we will take $\strategyset$ to be the set of profiles of monotone bidding strategies  
(see Proposition~\ref{prop:monotone}).


\subsection{Upper Bound on Sample Complexity}
\label{sec:fpa-upper-bound}
In this section, we show that $\tilde O(n / \eps^2)$ value samples suffice for learning the interim utilities for all monotone bidding strategies.  
The learning algorithm is the empirical distribution estimator, which outputs expected utilities on the uniform distribution over the samples. 


\begin{defn}
\label{def:emp}
The \emph{empirical distribution estimator}, denoted by $\emp$, 
estimates interim utilities on the uniform distribution over the samples.
Formally, on samples $\samples = (\samples^1, \ldots, \samples^{\nsample})$, for bidder~$i$ with value~$\vali$, for bidding strategies $\bids(\cdot)$,
\begin{equation*}
\emp(\samples, i, \vali, \bids(\cdot)) \coloneqq \frac{1}{\nsample} \sum_{j=1}^\nsample  \expostUi(\vali, \bidi(\vali), \bidsmi(\samplesmi^j)).
\end{equation*}
\end{defn}


One more technicality concerns what happens when more than one bidder makes the highest bid.  
Even though in any equilibrium of a first price auction, this must happen with probability~$0$, for the utility learning problem to be well defined we need to specify tie-breaking rules.
As we shall see, the tie-breaking rule does not affect the sample complexity.
We consider here two tie-breaking rules: by \emph{random-allocation} rule, the item is assigned to one of the highest bidders uniformly at random; 
by \emph{no-allocation} rule, no one wins the item whenever there is a tie for the highest bid.\footnote{
In fact, our results hold for all tie-breaking rules by which any bidder~$i$'s allocation is deterministically determined by the comparisons of her bid $\bidi$ with the other bidders' bids, and not affected by any other information (such as the specific value of~$\bidi$).  
More formally, the allocation for bidder~$i$ is either $0$ or $1$ and is determined by a vector in $\{<, =, >\}^{n-1}$, which records the comparison between $\bidi$ and every other bidder's bid.
Rules satisfying this condition include no-allocation, breaking ties lexicographically, etc.
} 

\begin{thm}\label{thm:util-learn-upper-bound}
Suppose $\typespacei\subseteq[0, H]$ for each $i\in[n]$, and the tie-breaking rule is random-allocation or no-allocation.
For any $\eps>0, \delta \in (0, 1)$, there is
\begin{equation}
M = O \left(\frac{H^2}{\eps^2} \left[n\log n\log \left(\frac{H}{\eps} \right) + \log\left(\frac{n}{\delta}\right)\right]\right),
\label{eq:util-learn-upper-bound}
\end{equation}
such that for any $\nsample \geq M$, 
the empirical distribution estimator $\emp$ $(\eps, \delta)$-learns with $\nsample$ samples
the utilities over the set of all monotone bidding strategies.
\end{thm}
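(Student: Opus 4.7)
The plan is a uniform-convergence argument via pseudo-dimension.  For each bidder~$i$, consider the function class
\[
\mathcal{F}_i = \bigl\{f_{\vali,\bids(\cdot)} : \vali \in \typespacei,\ \bids(\cdot)\text{ a monotone profile}\bigr\},\qquad f_{\vali,\bids(\cdot)}(\valsmi) = \expostUi\bigl(\vali,\bidi(\vali),\bidsmi(\valsmi)\bigr).
\]
Since $\samplesmi^1,\ldots,\samplesmi^\nsample$ are i.i.d.\ draws from the product $\distsmi$, it suffices to bound, uniformly over $f \in \mathcal{F}_i$, the gap $\bigl|\frac{1}{\nsample}\sum_j f(\samplesmi^j) - \Ex[\valsmi\sim\distsmi]{f(\valsmi)}\bigr|$ with probability $\geq 1-\delta/n$, and then union-bound over $i \in [n]$.

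The crucial structural step is to use monotonicity to collapse $\mathcal{F}_i$ to a class indexed by only $O(n)$ real parameters.  Under no-allocation tie-breaking,
\[
\expostUi(\vali,\bidi,\bidsmi(\valsmi)) = \bigl(\vali-\bidi\bigr)\cdot\prod_{k\neq i}\mathbb{1}\bigl[\bidi[k](\vali[k])<\bidi\bigr];
\]
since each $\bidi[k](\cdot)$ is monotone, the event $\{\bidi[k](\vali[k])<\bidi\}$ coincides with $\{\vali[k]<t_k\}$ for some threshold $t_k \in [0,H]$ determined by $\bidi[k](\cdot)$ and $\bidi$.  Hence every $f \in \mathcal{F}_i$ takes the form $c\cdot\prod_{k\neq i}\mathbb{1}[\vali[k]<t_k]$ with scalar $c = \vali-\bidi(\vali)\in[-H,H]$ and $n-1$ thresholds $(t_k)_{k\neq i}$.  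The random-allocation rule adds an additional boundary term captured by a class of the same parametric shape.  Thus $\mathcal{F}_i$ embeds into a class indexed by $n$ real parameters with output in $[-H,H]$.

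This embedded class is a scalar multiple of the indicator of a ``lower orthant'' in $\mathbb{R}^{n-1}$; standard pseudo-dimension bounds for parametric function classes of this kind give $\Pdim(\mathcal{F}_i) = \tilde O(n)$.  Combining with the classical uniform-convergence theorem for pseudo-dimension and using Lemma~\ref{lem:relation-uniform-convergence} to pass from the interim expectation on $\distsmi$ to the empirical average on the joint samples, we obtain that
\[
\nsample = O\InParentheses{\tfrac{H^2}{\eps^2}\bigl(n\log n\cdot\log(H/\eps)+\log(n/\delta)\bigr)}
\]
samples suffice for $(\eps,\delta/n)$-uniform convergence on $\mathcal{F}_i$; union-bounding over the $n$ bidders yields~\eqref{eq:util-learn-upper-bound}.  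The main obstacle is the clean extraction of the $n$ effective parameters from the infinite-dimensional monotone strategy set together with a careful pseudo-dimension bound: without monotonicity an opponent's strategy $\bidi[k](\cdot)$ could encode arbitrary complexity in $\vali[k]$, blowing up the pseudo-dimension and matching the impossibility example of Section~\ref{sec:intro}; monotonicity is precisely what collapses each opponent's strategy at any fixed bid level into a single threshold and makes the class learnable.
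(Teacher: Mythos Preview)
Your overall plan matches the paper's: bound $\Pdim(\mathcal F_i)$, apply the classical uniform-convergence theorem (Theorem~\ref{thm:pseudo-dimension}), and union-bound over $i\in[n]$.  Your route to the pseudo-dimension bound is framed differently from the paper's Lemma~\ref{lem:pseudo-dimension-utility-class}: you embed $\mathcal F_i$ into the class of scalar multiples of lower-orthant indicators, indexed by $n$ real parameters $(c,t_1,\ldots,t_{n-1})$, and invoke a generic $\tilde O(n)$ bound, whereas the paper counts label vectors directly by partitioning $\mathcal F_i$ according to where $b_i(v_i)$ falls among the sorted values $\{b_k(s_k^j)\}_j$ for each opponent~$k$.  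These are two views of the same hyperplane-counting argument and both yield $O(n\log n)$.  Two cautions if you flesh out your version: the down-set $\{v_k:b_k(v_k)<b_i\}$ may be open or closed at its right endpoint, so the embedding must allow both strict and weak thresholds; and under random allocation you need a \emph{second} threshold per opponent to delimit the tie region $\{v_k:b_k(v_k)=b_i\}$---your sentence about ``an additional boundary term'' glosses over this, though it still leaves the parameter count at $O(n)$ and the conclusion intact.

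There is one actual error: do \emph{not} invoke Lemma~\ref{lem:relation-uniform-convergence} here.  That lemma passes from uniform convergence on i.i.d.\ samples to uniform convergence on the empirical \emph{product} distribution~$\empDists$, and is needed only for Theorem~\ref{thm:util-learn-upper-bound-product} (the $\empp$ estimator).  For Theorem~\ref{thm:util-learn-upper-bound} you are comparing $\Ex[\valsmi\sim\distsmi]{f(\valsmi)}$ with $\frac{1}{m}\sum_j f(\samplesmi^j)$, and the projections $\samplesmi^1,\ldots,\samplesmi^m$ are already i.i.d.\ draws from~$\distsmi$; Theorem~\ref{thm:pseudo-dimension} applies directly, with no intermediate lemma required.
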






\subsubsection{Pseudo-dimension and the Proof of Theorem \ref{thm:util-learn-upper-bound}}
\label{sec:pseudodim}

Pseudo-dimension is a well known tool for upper bounding sample complexity \citep[see, e.g.][]{anthony2009neural}, and has been applied to learning in mechanism design \citep{MR15, MR16, BSV18, BSV19}.

\begin{defn}
	\label{def:pseudo-dimension}
	Given a class $\funcClass$ of real-valued functions on input space $\PinputSpace$, a set of input $\Pinputi[1], \ldots, \Pinputi[m]$ is said to be \emph{pseudo-shattered} if there exist \emph{witnesses} $\Pwitnessi[1], \ldots, \Pwitnessi[m] \in \mathbb R$ such that for any label vector $\Plabels\in\{1, -1\}^m$, there exists $\func_{\Plabels}\in \funcClass$ such that $\sgn(\func_{\Plabels}(\Pinputi) - \Pwitnessi)  = \Plabeli$ for each $i=1, \ldots, m$, where $\sgn(y)=1$ if $y>0$ and $-1$ if $y<0$. The \emph{pseudo-dimension} of $\funcClass$, $\Pdim(\funcClass)$, is the size of the largest set of inputs that can be pseudo-shattered by $\funcClass$. 
\end{defn}

\begin{defn}
	\label{def:uniform-convergence}
	For $\eps>0, \delta \in (0, 1)$, a class of functions $\funcClass: \PinputSpace \to \mathbb R$ is \emph{$(\eps, \delta)$-uniformly convergent with sample complexity $M$} if
		for any $\nsample \geq M$, 
	for any distribution $\dist$ on~$\PinputSpace$, 
	if $\sample^1, \ldots, \sample^{\nsample}$ are i.i.d.\@ samples from~$\dist$, 
	with probability at least $1 - \delta$, for every $\func \in \funcClass$,
	$
		\left| \Ex[\Pinput \sim \dist]{\func(\Pinput)} - \frac 1 {\nsample} \sum_{j = 1}^{\nsample} \func(\sample^j) \right| < \eps.
	$
\end{defn}

\begin{thm}[See \citealp{anthony2009neural}]
	\label{thm:pseudo-dimension}
	Let $\funcClass$ be a class of functions with range $[0, H]$ and pseudo-dimension $d=\Pdim(\funcClass)$, 
	for any $\eps>0$, $\delta\in(0, 1)$, 
	$\funcClass$ is $(\eps, \delta)$-uniformly convergent with sample complexity $O\left( (\frac{H}{\eps})^2 [d\log(\frac{H}{\eps}) + \log(\frac{1}{\delta})] \right)$.
\end{thm}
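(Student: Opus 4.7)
The approach is to recast the desired uniform-convergence guarantee as a pseudo-dimension bound and then invoke Theorem~\ref{thm:pseudo-dimension}. For each bidder $i$, I define the class of functions on $[0, H]^{n-1}$ by
\[
\funcClass_i = \left\{ f_{\vali, b, \bidsmi} : \valsmi \mapsto \expostUi(\vali, b, \bidsmi(\valsmi)) \;\middle|\; \vali \in [0,H],\; b \in [0,H],\; \bidsmi(\cdot) \text{ a profile of monotone strategies} \right\},
\]
noting that each $f \in \funcClass_i$ takes values in $[-H, H]$. The empirical estimator $\emp(\samples, i, \vali, \bids(\cdot))$ is exactly the empirical mean of the corresponding $f \in \funcClass_i$ evaluated at $\samplesmi^1, \ldots, \samplesmi^{\nsample}$. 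Because $\dists = \prod_i \disti$ is a product measure, the coordinate projections $\samplesmi^j$ are i.i.d.\ draws from $\distsmi$, so uniform convergence over $\funcClass_i$ is exactly what is needed for bidder~$i$. A union bound over the $n$ bidders at failure probability $\delta / n$, together with Theorem~\ref{thm:pseudo-dimension}, reduces the problem to proving $\Pdim(\funcClass_i) = O(n \log n)$.

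The crux is this pseudo-dimension bound. The structural observation is that for a monotone opponent strategy $\bidi[j](\cdot)$ and a bid $b$, the comparison ``$b$ vs.\ $\bidi[j](\valsmi[j])$'' is equivalent to a threshold test on $\valsmi[j]$ alone. Hence the comparison vector $(\sgn(b - \bidi[j](\valsmi[j])))_{j \neq i} \in \{<,=,>\}^{n-1}$, and consequently $\alloci(b, \bidsmi(\valsmi))$ under either tie-breaking rule permitted by the theorem (or any rule of the footnote type), depends on $\valsmi$ only through which thresholds each coordinate $\valsmi[j]$ crosses. In the first-price auction, $f_{\vali, b, \bidsmi}(\valsmi) = (\vali - b) \cdot \alloci(b, \bidsmi(\valsmi))$ is thus the product of a scalar in $[-H, H]$ and a piecewise-constant ``box'' indicator on $[0, H]^{n-1}$ cut out by $n-1$ thresholds.

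To bound $\Pdim(\funcClass_i)$, I fix $m$ candidate shattered points $\valsmi^1, \ldots, \valsmi^m$ with witnesses $t_1, \ldots, t_m$ and count the sign patterns $(\sgn(f(\valsmi^k) - t_k))_k$ achievable over $f \in \funcClass_i$. For each coordinate $j$, sweeping the single threshold $c_j \in [0, H]$ induced by $(b, \bidi[j])$ produces only $O(m)$ distinct prefixes of the sample values along that coordinate (Sauer--Shelah for VC-$1$ prefixes); allowing an equality interval to absorb the tie case contributes at most an extra polynomial factor. Hence ranging over all monotone opponent strategies and bids yields at most $O(m^{n-1})$ distinct allocation patterns on the $m$ samples. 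Given one such pattern, sweeping the scalar $c = \vali - b$ through the finitely many critical witness values produces $O(m)$ further refinements of the sign pattern. The total count is $O(m^n)$, so pseudo-shattering forces $2^m \leq O(m^n)$, giving $m = O(n \log n)$ as claimed. Substituting $d = O(n \log n)$ into Theorem~\ref{thm:pseudo-dimension} with failure $\delta/n$ yields the stated bound.

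The main obstacle I anticipate is the sign-pattern accounting in step~3: one must disentangle the contribution of the $n-1$ monotone opponent strategies (supplying the box thresholds) from that of the scalar $\vali - b$, and absorb the combinatorial complication from ties at sample points, which under random-allocation gives fractional allocations in $\{0, \tfrac{1}{n}, \ldots, 1\}$ at the sample coordinates. Monotonicity is what makes this count work: without it, each allocation indicator, as a function of $\valsmi[j]$, could realize an arbitrary subset of $[0, H]$, destroying the per-coordinate $O(m)$ bound and, as the two-bidder example in the introduction shows, making uniform learning information-theoretically impossible with any finite sample size.
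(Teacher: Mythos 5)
Your proposal does not prove the statement in question. The statement is the abstract, classical result of statistical learning theory: for \emph{any} class $\funcClass$ of real-valued functions with range $[0,H]$ and pseudo-dimension $d$, the empirical mean converges to the true mean uniformly over $\funcClass$ with $O\bigl((\frac{H}{\eps})^2[d\log(\frac{H}{\eps}) + \log(\frac{1}{\delta})]\bigr)$ samples. The paper does not prove this; it cites it from the Anthony--Bartlett textbook. What you have written is instead a proof sketch of the paper's Theorem~\ref{thm:util-learn-upper-bound} (the FPA utility-learning upper bound), obtained by bounding $\Pdim(\funcClass_i) = O(n\log n)$ for the auction-specific utility classes and then \emph{invoking} Theorem~\ref{thm:pseudo-dimension} as a black box. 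Since the theorem you were asked to prove is the very tool your argument relies on, your proposal is circular with respect to the task: it establishes the application, not the underlying uniform-convergence principle. Nothing in the statement mentions bidders, monotone strategies, or tie-breaking rules; those belong to Lemma~\ref{lem:pseudo-dimension-utility-class} and Theorem~\ref{thm:util-learn-upper-bound}, and your counting argument there is in fact close in spirit to the paper's own proof of Lemma~\ref{lem:pseudo-dimension-utility-class}.

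A proof of the actual statement would require the standard machinery: (i) symmetrization with a ghost sample to reduce deviation of empirical from true means to deviation between two empirical means; (ii) the Pollard/Haussler bound controlling the $L_1$ covering number of $\funcClass$ at scale $\eps$ by roughly $(\frac{H}{\eps})^{O(d)}$ in terms of the pseudo-dimension $d$; and (iii) a Hoeffding bound combined with a union bound over the cover, which yields the stated $\frac{H^2}{\eps^2}\bigl[d\log(\frac{H}{\eps}) + \log(\frac{1}{\delta})\bigr]$ sample complexity. None of these steps appear in your proposal.
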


We show Theorem~\ref{thm:util-learn-upper-bound} by treating the utilities on monotone bidding strategies as a class of functions, whose uniform convergence implies that $\emp$ learns the interim utilities.


For each bidder $i$, let $\func^{\vali, \bids(\cdot)}$ be the function that maps the opponents' values to bidder~$i$'s ex post utility, that is, 
\[\func^{\vali, \bids(\cdot)}(\valsmi) = \expostUi(\vali, \bidi(\vali), \bidsmi(\valsmi)).\]
Let $\funcClass_i$ be the set of all such functions corresponding to the set of monotone strategies, 
\[\funcClass_i = \left\{\func^{\vali, \bids(\cdot) }(\cdot) \given \vali \in \typespacei,~~ \bids(\cdot) \text{ is monotone} \right\}. \]

By \eqref{eq:interim-util}, the expectation of $\func^{\vali, \bids(\cdot)}(\cdot)$ over~$\distsmi$ is the interim utility of bidder $i$: 
\[\Ex[\valsmi\sim \distsmi]{\func^{\vali, \bids(\cdot)}(\valsmi)} = \utili(\vali, \bidi(\vali), \bidsmi(\cdot)). \]
By Definition~\ref{def:emp}, on samples $\samples = (\samples^1, \ldots, \samples^{\nsample})$, 
\[\emp(\samples, i, \vali, \bids(\cdot)) = \frac 1 {\nsample} \sum_{j = 1}^{\nsample} \func^{\vali, \bids(\cdot)}(\samples^j_{-i}).
\]

Thus, 
\begin{align}
  \left| \emp(\samples, i, \vali, \bids(\cdot)) -  \utili(\vali, \bidi(\vali), \bidsmi(\cdot)) \right| 
 =
        \left| \Ex[\valsmi]{\func^{\vali, \bids(\cdot)}(\valsmi)} - \frac 1 {\nsample} \sum_{j = 1}^{\nsample} \func^{\vali, \bids(\cdot)}(\samples^j_{-i})\right|.
	\label{eq:emp-func}
\end{align}

The right hand side of~\eqref{eq:emp-func} is the difference between the expectation of $\func^{\vali, \bids(\cdot)}$ on the distribution $\distsmi$ and that on the empirical distribution with samples drawn from $\distsmi$.
Now by Theorem~\ref{thm:pseudo-dimension},
to bound the number of samples needed by $\emp$ to $(\eps, \delta)$-learn the utilities over monotone strategies, 
it suffices to bound the pseudo-dimension of~$\funcClass_i$.
With the following key lemma, the proof is completed by observing that the range of each $\func^{\vali, \bids(\cdot)}$ is within $[-H, H]$ and by taking a union bound over $i \in [n]$.



\begin{restatable}{lemma}{pdimutil}
\label{lem:pseudo-dimension-utility-class}
If tie breaking is random-allocation or no-allocation, then $\Pdim(\funcClass_i) = O(n \log n)$.
\end{restatable}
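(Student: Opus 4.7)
The plan is to bound $\Pdim(\funcClass_i)$ by a Sauer--Shelah-style counting argument. Fix any candidate set of $m$ input vectors $\valsmi^{(1)},\ldots,\valsmi^{(m)} \in \typespacesmi$ and any witnesses $\Pwitnessi[1],\ldots,\Pwitnessi[m] \in \mathbb R$, and upper bound the number of distinct sign vectors $\bigl(\sgn(\func^{\vali,\bids(\cdot)}(\valsmi^{(k)}) - \Pwitnessi[k])\bigr)_{k=1}^m$ that are achievable as $(\vali, \bids(\cdot))$ ranges over all values and all profiles of monotone bidding strategies. If this count grows only polynomially in $m$ with degree $O(n)$, then pseudo-shattering forces $2^m = O(m^{O(n)})$ and hence $m = O(n \log n)$.

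The first step is to reduce each opponent's infinite-dimensional monotone strategy to a finite combinatorial parameter. Once we fix the single bid $\bidi := \bidi(\vali)$, the ex post utility $\expostUi(\vali,\bidi,\bidsmi(\valsmi^{(k)}))$ depends on opponent $j$'s strategy only through the sign vector $c^{(j)} \in \{<,=,>\}^m$ with $c^{(j)}_k = \sgn(\bidi[j](\vali[j]^{(k)}) - \bidi)$. Because $\bidi[j](\cdot)$ is monotone, after sorting the samples by $\vali[j]^{(k)}$ this vector is weakly increasing in the natural order $<,=,>$, hence is determined by at most two ``transition indices'' and lies in a family of size at most $\binom{m+2}{2}=O(m^2)$. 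Across the $n-1$ opponents this gives at most $O(m^{2(n-1)})$ joint comparison patterns.

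Given a joint comparison pattern and setting $y := \vali - \bidi$, the ex post utility on sample $k$ simplifies to $\alpha_k\, y$, where $\alpha_k \in \{0, 1/(r_k+1), 1\}$ depends only on the pattern: $\alpha_k = 0$ if some opponent has $c^{(j)}_k = >$ (bidder $i$ loses), $\alpha_k = 1/(r_k+1)$ if the top is an $(r_k+1)$-way tie under random-allocation (or $0$ under no-allocation), and $\alpha_k = 1$ if bidder $i$ wins outright. Thus for each fixed joint pattern, $\sgn(\alpha_k y - \Pwitnessi[k])$ is a one-dimensional threshold function of the single real parameter $y$, and as $y$ ranges over $\mathbb R$ the sign vector takes at most $m+1$ distinct values. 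Multiplying, the total number of achievable sign vectors is $O(m^{2(n-1)})\cdot(m+1) = O(m^{2n-1})$. Setting this $\geq 2^m$ forces $m = O(n\log n)$, which is the claimed pseudo-dimension bound.

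The main obstacle is the Step 1 reduction: without monotonicity a single opponent's strategy could realize any of $3^m$ sign patterns on a fixed sample set, which (as foreshadowed by the example in the introduction) renders the pseudo-dimension infinite. Monotonicity is exactly what converts the ``each opponent'' count from $3^m$ down to $O(m^2)$. Once this reduction is in place, handling the two tie-breaking rules uniformly through the $\alpha_k$-representation and collapsing the $\vali,\bidi$ dependence to the scalar $y$ is routine, and the final Sauer--Shelah inversion gives the $O(n\log n)$ bound.
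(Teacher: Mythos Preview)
Your proposal is correct and follows essentially the same approach as the paper: partition hypotheses by the per-opponent comparison patterns $(\,<,=,>\,)$ on the $m$ samples, use monotonicity of each $\bidi[j](\cdot)$ to bound the number of such patterns by $O(m^2)$ per opponent (the paper does this via the pair of transition indices $k_{j,1},k_{j,2}$), and then count sign vectors within each pattern. Your within-pattern count is in fact a bit sharper than the paper's---you exploit that all outputs are $\alpha_k y$ for the single scalar $y=\vali-\bidi$, yielding $m+1$ sign vectors per pattern and a total of $O(m^{2n-1})$, whereas the paper groups samples by tie-multiplicity and gets $(m+1)^n$ per pattern and $(m+1)^{3n}$ total; both give $m=O(n\log n)$.
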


The proof of Lemma~\ref{lem:pseudo-dimension-utility-class} follows a powerful framework introduced by \citet{MR16} and \citet{BSV18} for bounding the pseudo-dimension of a class $\funcClass$ of functions: given samples that are to be pseudo-shattered and for any (fixed) witnesses, one classifies the functions in~$\funcClass$ into categories, so that functions in the same category must output the same label on all the samples; by counting and bounding the number of such categories, one can bound the number of shattered samples.
Our proof follows this strategy.  To bound the number of categories, we make use of monotonicity of bidding functions, which is specific to our problem.

We give a proof below for the simplest case with two bidders and no-allocation tie-breaking rule, and relegate the full proof to Appendix~\ref{sec:proof-lem:pseudo-dimension-utility-class}. 

\begin{proof}[Proof of Lemma~\ref{lem:pseudo-dimension-utility-class} for a special case.] 
Consider $n=2$ and no-allocation tie-breaking rule. Fix an arbitrary set of $\nsample$ samples $\samplesmi^1, \ldots, \samplesmi^{\nsample}$.  
Consider any set of potential witnesses $(\Pwitnessi[1], \Pwitnessi[2], \ldots, \Pwitnessi[\nsample])$. 
Each hypothesis in $\funcClass_i$ then gives every sample~$\samplesmi^j$ a label according to the witness~$\Pwitnessi[j]$, giving rise to a label vector in $\{-1, +1\}^\nsample$.
We show that $\funcClass_i$ can be divided into $\nsample+1$ sub-classes $\funcClass_i^0, \ldots, \funcClass_i^{\nsample}$, such that each sub-class $\funcClass_i^k$ generates at most $\nsample+1$ different label vectors. 
Thus $\funcClass_i$ generates at most $(\nsample+1)^2$ label vectors in total. 
To pseudo-shatter $\samplesmi^1, \ldots, \samplesmi^\nsample$, we need $2^\nsample$ different label vectors; therefore $(m+1)^2\ge 2^\nsample$, which implies $\nsample = O(1)$. 

We now show how $\funcClass_i$ is thus divided.
Note that, for $n = 2$, each $\samplesmi^k$ is just a real number and we can sort them; for ease of notation let 
$\Pinput^k$ denote $\samplei[-i]^{k}$ for $k=1, \ldots, \nsample$ 
and suppose $\Pinput^1\le \Pinput^2\le \cdots \le \Pinput^\nsample$. 
We put hypothesis $\func^{\vali, \bids(\cdot)}$ into the $k^{\text{-th}}$ sub-class, $\funcClass_i^k$, if
\[ \bidi[-i](\Pinput^k) < \bidi(\vali)  \ \text{ and }\ \bidi[-i](\Pinput^{k+1}) \ge \bidi(\vali).\]
This is well defined because, by assumption, $\bidi[-i](\Pinput)$ is monotone non-decreasing in $\Pinput$.

We now show that each sub-class $\funcClass_i^k$ gives rise to at most $\nsample+1$ label vectors.
For any $\func^{\vali, \bids(\cdot)}\in \funcClass_i^k$, we have $\func^{\vali, \bids(\cdot)}(\Pinput^{j}) =  \vali - \bidi(\vali)$ for any $j \le k$ (because bidder~$i$'s bid $\bidi(\vali)$ is higher than the opponent's),
and $\func^{\vali, \bids(\cdot)}(\Pinput^{j}) =  0$ for any $j > k$.
On the first $k$ samples $\Pinput^{1}, \ldots, \Pinput^{k}$, 
any fixed hypothesis $\func^{\vali, \bids(\cdot)}(\cdot) \in 
\funcClass_i^k$ 
outputs a constant $\vali - \bidi(\vali)$;
as one varies this constant and compares it with the $k$ witnesses $\Pwitnessi[1], \ldots, \Pwitnessi[k]$, there are only $k+1$ possible results from the comparisons.
On the remaining $\nsample - k$ samples, only one pattern is possible, since all hypotheses in $\funcClass_i^k$ output $0$ on these samples.
Therefore, at most $k+1 \le \nsample+1$ label vectors can be generated by $\funcClass_i^k$. 
\end{proof}





\subsubsection{Learning on Empirical Product Distributions and Equilibrium Preservation}
\label{sec:empp}

The empirical distribution estimator approximates interim utilities with high probability, but this does not immediately imply that one may take the first price auction on the empirical distribution as a close approximation to the auction on the original distribution. 
This is because the empirical distribution over samples is \emph{correlated} --- the values $\samplei[1]^j, \ldots, \samplei[n]^j$ are drawn as a vector, instead of independently. 
Standard notions, such as Bayes Nash equilibria, defined on product distributions become intricate on correlated distributions, and there is no reason to expect the latter to correspond to the equilibria in the original auction.
Therefore, it is desirable that utilities can also be learned on a \emph{product} distribution arising from the samples, where each bidder's value is independently drawn, uniformly from the $\nsample$ samples of her value.  
We show that this can indeed be done, without substantial increase in the number of samples.
The key technical step, Lemma~\ref{lem:relation-uniform-convergence},
is a reduction from learning on empirical distribution to learning on empirical product distribution.  We believe this lemma is of independent interest.
In fact, in Section~\ref{sec:search} we invoke Lemma~\ref{lem:relation-uniform-convergence} in a different context, that of learning in Pandora's Box problem; the reduction is crucial there for obtaining a polynomial-time learning algorithm.

\begin{defn}
\label{def:empp}
Given samples $\samples = (\samples^1, \ldots, \samples^{\nsample})$, 
$\empDisti$ is defined to be the uniform distribution over $\{\samplei^1, \ldots, \samplei^\nsample\}$. The \emph{empirical product distribution} $\empDists$ is the product distribution
    $\empDists\coloneqq\prod_{i=1}^n \empDisti$.
\end{defn}

	\begin{defn}
		\label{def:uniform-convergence-product}
		For $\eps>0, \delta \in (0, 1)$, a class of functions $\funcClass: \prod_{i=1}^n \typespacei \to \mathbb R$ is \emph{$(\eps, \delta)$-uniformly convergent on product distribution with sample complexity $M$} if
		for any $\nsample \geq M$, 
	for any product distribution $\dists$ on~$\prod_{i=1}^n \typespacei$, 
	if $\samples^1, \ldots, \samples^{\nsample}$ are i.i.d.\@ samples from~$\dists$, 
	with probability at least $1 - \delta$, for every $\func \in \funcClass$,
	\[
		\left| \Ex[\types \sim \dists]{\func(\types)} - \Ex[\types \sim \empDists]{\func(\types)} \right| < \eps,
	\]
	where $\empDists$ is the empirical product distribution. 
	\end{defn}

\begin{restatable}{lemma}{uniformprod}
	\label{lem:relation-uniform-convergence}
Let $\funcClass$ be a class of functions from a product space $\typespaces$ to $[0, H]$. 
If $\funcClass$ is $(\eps, \delta)$-uniformly convergent with sample complexity $\nsample=\nsample(\eps, \delta)$, then $\funcClass$ is $\left(2\eps, \frac{H\delta}{\eps}\right)$-uniformly convergent on product distribution with sample complexity $\nsample$. 
\end{restatable}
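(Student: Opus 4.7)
The plan is to construct a coupling that re-interprets the empirical product distribution as an expectation of empirical distributions built from i.i.d.\ samples of $\dists$, and then to combine this with the uniform convergence hypothesis via a two-level averaging argument that exploits the uniform bound $H$.

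First I would introduce $n$ independent, uniformly random permutations $\sigma_1, \ldots, \sigma_n$ of $[\nsample]$, and for each $j\in[\nsample]$ define the permuted sample $\tilde\samples^j$ whose $k$-th coordinate is $\samplei[k]^{\sigma_k(j)}$. Two observations drive the argument. \emph{Observation~(i):} Since $\dists$ is a product distribution, the $n$ coordinate sequences $(\samplei[k]^1,\ldots,\samplei[k]^\nsample)$ for $k=1,\ldots,n$ are mutually independent and each i.i.d., so permuting within each coordinate independently leaves the joint law of $(\tilde\samples^1, \ldots, \tilde\samples^\nsample)$ identical to that of $(\samples^1, \ldots, \samples^\nsample)$, for every fixed $\sigma=(\sigma_1,\ldots,\sigma_n)$. \emph{Observation~(ii):} For any fixed $\samples$ and $j$, the tuple $(\sigma_1(j), \ldots, \sigma_n(j))$ is uniform on $[\nsample]^n$ (each $\sigma_k(j)$ is uniform on $[\nsample]$ and the $\sigma_k$'s are independent), hence
\[
\Ex[\sigma]{\frac{1}{\nsample}\sum_{j=1}^{\nsample}\func(\tilde\samples^j)} \;=\; \Ex[\types\sim\empDists]{\func(\types)}.
\]

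Next I would transfer the uniform convergence bound from $\samples$ to $\tilde\samples$. Let
\[
D(\samples, \sigma) \;:=\; \sup_{\func\in\funcClass}\Bigl|\,\tfrac{1}{\nsample}\sum_{j=1}^{\nsample} \func(\tilde\samples^j) \;-\; \Ex[\types\sim\dists]{\func(\types)}\,\Bigr|.
\]
Subtracting $\Ex[\dists]{\func}$ from both sides of Observation~(ii), applying Jensen to pull the absolute value inside the $\sigma$-expectation, and then taking the supremum over $\func$ gives, for every $\samples$,
\[
\sup_{\func\in\funcClass}\bigl|\Ex[\empDists]{\func} - \Ex[\dists]{\func}\bigr|\;\le\;\Ex[\sigma]{D(\samples, \sigma)}.
\]
By Observation~(i) and the lemma's hypothesis applied to $\tilde\samples$, for every fixed $\sigma$ the event $\{D(\samples,\sigma)\ge\eps\}$ has $\samples$-probability at most $\delta$; by Fubini, $\Prx[\samples,\sigma]{D\ge\eps}\le\delta$.

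The hard step is converting this joint probability bound into a high-probability bound on $\Ex[\sigma]{D}$ itself: a one-shot Markov application loses a factor of $1/2$ and does not yield the target failure probability $H\delta/\eps$. The fix is a two-level averaging that uses $D\le H$. Setting $p := H\delta/\eps$ and applying Markov to $\samples \mapsto \Prx[\sigma]{D(\samples,\sigma)\ge\eps}$ (whose $\samples$-expectation is at most $\delta$) shows that the bad event $F := \{\Prx[\sigma]{D\ge\eps}\ge \delta/p\}$ satisfies $\Prx[\samples]{F}\le p = H\delta/\eps$. When $F$ does not occur, at most a $\delta/p = \eps/H$ fraction of the $\sigma$'s have $D\ge\eps$; using $D\le H$ there and $D<\eps$ elsewhere gives $\Ex[\sigma]{D(\samples,\sigma)} < \eps + H\cdot(\eps/H) = 2\eps$, which combined with the displayed inequality yields the required $(2\eps, H\delta/\eps)$-uniform convergence on the product distribution.
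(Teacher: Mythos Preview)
Your proposal is correct and follows essentially the same approach as the paper's own proof: both introduce independent uniform permutations of the coordinates, use that the permuted samples are equal in law to fresh i.i.d.\ samples from $\dists$ while their $\sigma$-average reproduces the empirical product expectation, and then combine the hypothesis with the crude bound $D\le H$ via a Markov step on $\samples\mapsto\Prx[\sigma]{D\ge\eps}$ to get the $(2\eps, H\delta/\eps)$ conclusion. The only cosmetic difference is that you package everything through the uniform deviation $D(\samples,\sigma)=\sup_{\func}|\cdot|$, whereas the paper keeps track of a per-$\func$ ``Bad'' event and pushes the existential quantifier over $\func$ through the chain of inequalities at the end; the underlying two-level averaging argument is identical.
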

\noindent Lemma~\ref{lem:relation-uniform-convergence} is closely related to a concentration inequality by \citet{DHP16}.
\citeauthor{DHP16} show that for any single function $h:\typespaces\to[0, H]$, the expectation of $h$ on the empirical product distribution is close to its expectation on any product distribution with high probability.
Our lemma generalizes this to show a simultaneous concentration for a family of functions,   
and seems more handy for applications such as ours.

Combining Theorem~\ref{thm:util-learn-upper-bound} with Lemma~\ref{lem:relation-uniform-convergence}, we derive our learning results on the empirical product distribution.

\begin{defn}
The \emph{empirical product distribution estimator} $\empp$ estimates interim utilities of a bidding strategy on the empirical product distribution~$\empDists$.  Formally, 
for bidder~$i$ with value~$\vali$, for bidding strategy profile $\bids(\cdot)$,
\begin{equation}
\empp(\samples, i, \vali, \bids(\cdot)) \coloneqq 
\Ex[\valsmi\sim \empDistsmi] { \expostUi(\vali, \bidi(\vali), \bidsmi(\valsmi)) }. \label{eq:def_empp}
\end{equation}
\end{defn}

\begin{thm}\label{thm:util-learn-upper-bound-product}
Suppose $\typespacei\subseteq[0, H]$ for each $i\in[n]$, and the tie-breaking rule is random-allocation or no-allocation.
For any $\eps>0, \delta \in (0, 1)$, there is
\begin{equation}
M = O \left(\frac{H^2}{\eps^2} \left[n\log n\log \left(\frac{H}{\eps} \right) + \log\left(\frac{n}{\delta}\right)\right]\right),
\label{eq:util-learn-upper-bound-product}
\end{equation}
such that for any $\nsample \geq M$, 
the empirical distribution estimator $\empp$ $(\eps, \delta)$-learns with $\nsample$ samples
the utilities over the set of all monotone bidding strategies.
\end{thm}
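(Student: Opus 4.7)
The plan is to reduce Theorem~\ref{thm:util-learn-upper-bound-product} to Theorem~\ref{thm:util-learn-upper-bound} (and the underlying pseudo-dimension analysis) by means of Lemma~\ref{lem:relation-uniform-convergence}. Concretely, I would reuse the function class $\funcClass_i = \{\func^{\vali, \bids(\cdot)} : \vali \in \typespacei,\ \bids(\cdot) \text{ monotone}\}$ from Section~\ref{sec:pseudodim}, whose ex post utility values lie in $[-H, H]$ (a shift by $H$ brings the range into $[0, 2H]$, which only affects constants). Lemma~\ref{lem:pseudo-dimension-utility-class} gives $\Pdim(\funcClass_i) = O(n\log n)$, and Theorem~\ref{thm:pseudo-dimension} therefore yields that $\funcClass_i$ is $(\eps', \delta')$-uniformly convergent with sample complexity $\nsample(\eps', \delta') = O\bigl((H/\eps')^2 [n\log n\log(H/\eps') + \log(1/\delta')]\bigr)$ for any $\eps' > 0$ and $\delta' \in (0,1)$.

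Next I would apply Lemma~\ref{lem:relation-uniform-convergence} to transfer this to the empirical product distribution. The lemma hands us $(2\eps', H\delta'/\eps')$-uniform convergence on the empirical product distribution at the same sample complexity $\nsample(\eps', \delta')$. Since we want an overall $(\eps, \delta)$ guarantee that holds simultaneously for all $n$ bidders, I would set $\eps' = \eps/2$ and $\delta' = \eps\delta/(2nH)$, so that after a union bound over $i \in [n]$, with probability at least $1-\delta$, for every $i$ and every monotone $\bids(\cdot)$ and every $\vali \in \typespacei$,
\[
\bigl|\,\Ex[\valsmi \sim \empDistsmi]{\func^{\vali, \bids(\cdot)}(\valsmi)} - \Ex[\valsmi \sim \distsmi]{\func^{\vali, \bids(\cdot)}(\valsmi)}\,\bigr| < \eps.
\]
The left-hand side is exactly $|\empp(\samples, i, \vali, \bids(\cdot)) - \utili(\vali, \bidi(\vali), \bidsmi(\cdot))|$ by \eqref{eq:def_empp} and \eqref{eq:interim-util}, recalling that $\empDistsmi$ is a product of single-coordinate empirical distributions $\empDisti[j]$ for $j\neq i$, each of which is a marginal of $\empDists$.

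Finally I would verify the parameter bookkeeping: plugging $\eps' = \eps/2$ and $\delta' = \eps\delta/(2nH)$ into $\nsample(\eps', \delta')$ gives
\[
O\!\left(\frac{H^2}{\eps^2}\left[n\log n\log\!\frac{H}{\eps} + \log\!\frac{nH}{\eps\delta}\right]\right)
= O\!\left(\frac{H^2}{\eps^2}\left[n\log n\log\!\frac{H}{\eps} + \log\!\frac{n}{\delta}\right]\right),
\]
where the $\log(H/\eps)$ term arising from $\delta'$ is absorbed into the $n\log n\log(H/\eps)$ term. This matches \eqref{eq:util-learn-upper-bound-product}. The only conceptually subtle step is the appeal to Lemma~\ref{lem:relation-uniform-convergence}; everything else is a direct consequence of the pseudo-dimension analysis already carried out for $\emp$. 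The main obstacle, such as it is, lies in making sure the $(2\eps', H\delta'/\eps')$ blow-up in the lemma (plus the union bound over $n$ bidders) does not degrade the sample complexity by more than a polylogarithmic factor, which the choice of $\delta' = \eps\delta/(2nH)$ precisely guarantees.
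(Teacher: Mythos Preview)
Your proposal is correct and follows exactly the approach the paper itself takes: the paper states that Theorem~\ref{thm:util-learn-upper-bound-product} is obtained by ``combining Theorem~\ref{thm:util-learn-upper-bound} with Lemma~\ref{lem:relation-uniform-convergence},'' and you have supplied precisely the details of that combination (including the parameter choices $\eps' = \eps/2$, $\delta' = \Theta(\eps\delta/(nH))$ and the union bound over bidders). The only thing the paper adds is brevity; your bookkeeping verifying that the $\log(H/\eps)$ term introduced by the blown-up~$\delta'$ is absorbed into the leading $n\log n\log(H/\eps)$ term is exactly what is needed.
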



By Theorem~\ref{thm:util-learn-upper-bound-product}, utilities in the FPA on the empirical product distribution approximate those in the FPA on the original distribution, therefore the two auctions share the same set of approximate equilibria:

\begin{corollary}\label{cor:find-BNE}
Suppose $\typespacei\subseteq[0, H]$ for each $i\in[n]$ and the tie-breaking rule is random-allocation or no-allocation. 
For any $\eps, \eps'>0, \delta \in (0, 1)$, for $m$ satisfying~\eqref{eq:util-learn-upper-bound-product}, 
 with probability at least $1-\delta$ over random draws of $\samples$, for any monotone bidding strategy profile $\bids(\cdot)$, if $\bids(\cdot)$ is an $\eps'$-BNE in the first price auction on value distribution $\empDists=\prod_i\empDisti$, then $\bids(\cdot)$ is an $(\eps'+2\eps)$-BNE in the first price auction on value distribution $\dists = \prod_i \disti$. 
 Conversely, if $\bids(\cdot)$ is an $\eps'$-BNE in the first price auction on value distribution~$\dists$, then $\bids(\cdot)$ is an $(\eps'+2\eps)$-BNE in the first price auction on value distribution~$\empDists$. 
\end{corollary}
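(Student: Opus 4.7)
The plan is to derive the corollary almost directly from Theorem \ref{thm:util-learn-upper-bound-product}: the theorem yields a uniform $\eps$-closeness of interim utilities on $\dists$ and on $\empDists$ across every monotone profile, and this lets us transfer the BNE inequalities between the two distributions at a cost of an additive $2\eps$ — one $\eps$ for the on-path utility term and one for the deviation term.

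First I would condition on the high-probability event from Theorem \ref{thm:util-learn-upper-bound-product}: with probability at least $1 - \delta$, for every monotone profile $\bids(\cdot) \in \strategyset$, every $i \in [n]$, and every $\vali \in \typespacei$,
\[
  \left|\empp(\samples, i, \vali, \bids(\cdot)) - \utili(\vali, \bidi(\vali), \bidsmi(\cdot))\right| < \eps.
\]
By \eqref{eq:def_empp}, $\empp(\samples, i, \vali, \bids(\cdot)) = \Ex[\valsmi\sim\empDistsmi]{\expostUi(\vali, \bidi(\vali), \bidsmi(\valsmi))}$ is literally the interim utility of bidder $i$ with bid $\bidi(\vali)$ against $\bidsmi(\cdot)$ on the product distribution $\empDists$.

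The one subtlety is that the $\eps'$-BNE condition quantifies deviations over arbitrary bids $\bidi' \in \mathbb{R}_+$, whereas Theorem \ref{thm:util-learn-upper-bound-product} only governs \emph{monotone} profiles. I would handle this by noting that bidding above $H$ is weakly dominated in the first price auction, so we may restrict to $\bidi' \in [0, H]$; then the constant strategy $\bidi''(\cdot) \equiv \bidi'$ is monotone, the profile $(\bidi''(\cdot), \bidsmi(\cdot))$ lies in $\strategyset$, and applying the same high-probability event to this profile gives
\[
  \left|\Ex[\valsmi\sim\empDistsmi]{\expostUi(\vali, \bidi', \bidsmi(\valsmi))} - \utili(\vali, \bidi', \bidsmi(\cdot))\right| < \eps.
\]

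Suppose now that $\bids(\cdot)$ is an $\eps'$-BNE on $\empDists$. For any $i$, any $\vali \in \typespacei$, and any $\bidi' \in [0, H]$, a three-step triangle chain — invoking the on-path approximation, then the BNE inequality on $\empDists$, then the deviation approximation — yields
\[
  \utili(\vali, \bidi(\vali), \bidsmi(\cdot)) \;\geq\; \utili(\vali, \bidi', \bidsmi(\cdot)) - (\eps' + 2\eps),
\]
so $\bids(\cdot)$ is an $(\eps' + 2\eps)$-BNE on $\dists$. The converse direction is obtained by reversing the roles of $\dists$ and $\empDists$ in exactly the same chain. There is essentially no obstacle beyond the remark that single-bid deviations can be extended to monotone (constant) strategies so that the pseudo-dimension-based learning guarantee of Theorem \ref{thm:util-learn-upper-bound-product} applies to both sides of the BNE comparison.
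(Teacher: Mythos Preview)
Your proposal is correct and matches the paper's approach: the paper presents Corollary~\ref{cor:find-BNE} as an immediate consequence of Theorem~\ref{thm:util-learn-upper-bound-product} without a written-out proof, and your triangle-inequality argument (together with the observation that a single-bid deviation extends to a monotone constant strategy, so the uniform-approximation guarantee covers both the on-path and deviation utilities) is exactly the intended reasoning.
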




Corollary~\ref{cor:find-BNE} has an interesting consequence: if there exists an algorithm that can compute Bayes Nash equilibria for the first price auction on \emph{discrete} value distributions, then there also exists an algorithm that can compute approximate Bayes Nash equilibria for \emph{any} distributions, by simply sampling from the distribution and running the former algorithm on the empirical product distribution (which is discrete).\footnote{In particular, \citet{SWZ20} give an ``algorithm'' that outputs a strategy profile that approaches a Bayes Nash equilibrium in the first price auction on any discrete distributions if the algorithm is allowed to run for infinite time.  It is unclear whether their algorithm can output an $\eps$-Bayes Nash equilibrium if run for finite time.}
\begin{corollary}
\label{cor:polytime-equilibria}
If there exists an algorithm that computes monotone $\eps'$-BNE for the first price auction on any discrete product value distributions $\bm{D} = \prod_{i=1}^n D_i$, then there exists a sample-access algorithm that computes $(\eps'+\eps)$-BNE for the first price auction on any product distributions $\bm F = \prod_{i=1}^n F_i$ with high probability.
If the running time of the former algorithm is polynomial in $\frac{1}{\eps'}$ and the support size of each component distribution $D_i$, then the running time of the latter algorithm is $\mathrm{poly}(\frac{1}{\eps}, \frac{1}{\eps'})$, which does not depend on the support size of $F_i$ (and $F_i$ can be continuous). 
\end{corollary}

Results very similar to Theorem~\ref{thm:util-learn-upper-bound}, Theorem~\ref{thm:util-learn-upper-bound-product}, Corollary~\ref{cor:find-BNE}, and Corollary~\ref{cor:polytime-equilibria}, apply to the all pay auction, with the same bounds on the number of samples. 
The proofs are almost identical and so are omitted.

\subsection{Lower Bound of Sample Complexity}
\label{sec:lower-bound}

We give an information theoretic lower bound on the number of samples needed for any algorithm to learn monotone utilities in a first price auction. 
The lower bound matches our upper bound up to polylog factors.

\begin{restatable}{thm}{lowerbound}
\label{thm:lower-bound-learning-util}
For any $\eps < \frac 1 {4000}, \delta < \frac 1 {20}$, 
there is a family of product distributions for which no algorithm $(\eps, \delta)$-learns, with $\nsample$ samples, utilities over the set of all monotone bidding strategies, for any $m \leq \frac{1}{4\times10^8}\cdot \frac{n}{\eps^2}$. 
\end{restatable}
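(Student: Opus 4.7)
My plan is to use an information-theoretic multi-hypothesis testing argument (Fano's inequality applied to a packing of distributions) combined with an adversarial choice of monotone bidding strategies.

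First, I would construct a family $\{\dists_\xi : \xi \in \{-1,+1\}^n\}$ of product distributions in which bidder~$i$'s value is drawn from a Bernoulli with $\Prx{\vali = 1} = p + \alpha\xi_i$, for a baseline $p$ and a small perturbation $\alpha$. These are valid product distributions as required by the theorem. For each $\xi$ in a packing $\mathcal{C} \subseteq \{-1,+1\}^n$ of size $2^{\Omega(n)}$, I would then identify a monotone bidding strategy profile $\bids^\xi$ under which bidder~$1$'s interim utility $\utili[1](\vali[1], \bidi[1]^\xi(\vali[1]), \bidsmi[1]^\xi)$ separates $\xi$ from the other configurations in $\mathcal{C}$ by at least $2\eps$. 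A natural candidate is to let bidder~$j$'s strategy depend on $\xi_j$ (for example, truthful bidding if $\xi_j = +1$ and dropping out if $\xi_j = -1$), so that the winning-probability product $\prod_{j:\xi_j=+1}(1-p_j)$ appearing in the FPA utility becomes $\xi$-sensitive.

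Next, I would bound the KL divergence between sample distributions using the product structure: $\kl(\dists_\xi^m \,\|\, \dists_{\xi'}^m) \;=\; m\sum_i \kl(\text{Ber}(p+\alpha\xi_i) \,\|\, \text{Ber}(p+\alpha\xi_i')) \;=\; O\!\left(m\alpha^2 d_H(\xi,\xi') / [p(1-p)]\right)$. By Fano's inequality, any algorithm that outputs $\eps$-accurate utility estimates over all monotone strategy profiles must effectively identify the configuration in $\mathcal{C}$ (otherwise its estimates will be off by more than $\eps$ on some pair in $\mathcal{C}$), and this forces $m \cdot \max_{\xi,\xi'\in\mathcal{C}} \kl(\dists_\xi\|\dists_{\xi'}) \;\gtrsim\; \log|\mathcal{C}|$. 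Tuning $\alpha = \Theta(\eps/\sqrt{n})$ so that the utility gap is $\Omega(\eps)$ while the per-sample KL divergence is $O(\eps^2)$ yields $m = \Omega(n/\eps^2)$, matching the theorem's bound up to the explicit constants.

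The main obstacle is the first step. Under the monotonicity constraint, each bidder's strategy can only induce a ``prefix event'' on its own value distribution, so the utility queries accessible to the adversary are structurally limited: they probe only certain aggregate statistics of $\xi$ rather than arbitrary functionals. Designing monotone profiles $\bids^\xi$ whose induced interim utilities pairwise distinguish an exponential-size packing of configurations to accuracy $2\eps$---while keeping $\alpha$ small enough for the Fano budget---is the delicate combinatorial core of the argument. It may be necessary to exploit asymmetry among bidders (for example, distinct value scales $v_j^* = j$ or nested supports) so that different adversarial queries probe disjoint coordinates of the hidden configuration $\xi$, thereby amplifying the per-bit sensitivity of the utility from $\alpha$ up to $\alpha\sqrt{n}$ as needed to match the KL-based lower bound.
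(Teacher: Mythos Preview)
Your high-level plan---encode a hidden sign vector into $n$ slightly biased Bernoulli coordinates and argue that any $\eps$-accurate utility learner must decode the vector---is exactly the paper's starting point.  The gap is in the reduction and in the parameter scaling.

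First, the scaling $\alpha=\Theta(\eps/\sqrt n)$ with a constant baseline $p$ cannot give utility gaps of order $\eps$.  In the FPA, bidder~$1$'s interim utility against a subset $T$ of ``active'' opponents is $(\vali[1]-\bidi[1])\prod_{j\in T}(1-p_j)$; with $p$ bounded away from $0$ and $|T|=\Theta(n)$ this product is $e^{-\Theta(n)}$, so absolute utility gaps are exponentially small regardless of how many coordinates of $\xi$ you flip.  To keep the utility of constant order you are forced to take $p=\Theta(1/n)$, but then the validity constraint $\alpha<p$ rules out $\alpha=\eps/\sqrt n$ except when $\eps\le 1/\sqrt n$.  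The paper instead sets $p\approx 1/n$ and $\alpha\approx c_1\eps/n$, so the per-coordinate KL is $O(\eps^2/n)$ and the per-coordinate utility sensitivity is only $O(\eps/n)$.

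Second, because the per-coordinate sensitivity is $O(\eps/n)$, no single monotone strategy profile can separate two nearby sign vectors by $2\eps$; this is precisely the obstacle you flag.  The paper does \emph{not} resolve it by finding clever queries that amplify sensitivity (your suggestions of nested supports or $v_j^*=j$ are not needed).  Instead, it exploits the requirement that the learner be simultaneously accurate on \emph{all} monotone profiles: the decoder takes the argmax, over all subsets $T$ of size $\lceil n/2\rceil$, of the learner's estimate for the utility of bidder~$n$ when exactly the bidders in $T$ bid high.  Since the true maximizer is $T=[n-1]\setminus S$ and the utility falls off linearly in $|T\cap S|$, $\eps$-accuracy forces the argmax $W$ to satisfy $|W\cap S|\le c_2(n-1)$ for a small constant $c_2$.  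Thus the learner recovers $S$ up to a constant fraction of errors.  A pigeonhole argument over coordinates then produces a single index $i$ and a pair $S,S'$ differing only at $i$ such that the decoder distinguishes $\dists_S$ from $\dists_{S'}$ with probability at least $2/3$, contradicting the two-hypothesis lower bound from the $O(\eps^2/n)$ KL divergence.  So the paper replaces your Fano-with-packing step by ``approximate recovery via argmax'' followed by a reduction to a single Le Cam--style binary test; this sidesteps the need for any one query to have $\Omega(\eps)$ separation.
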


The proof of Theorem~\ref{thm:lower-bound-learning-util} is in Appendix~\ref{sec:proof-thm:lower-bound-learning-utility}.  
As a sketch, the product distributions we construct encode length~$n-1$ binary strings by having slightly unfair Bernoulli distribution for each bidder, the bias shrinking as $n$ grows large.
We then show that, if with few samples a learning algorithm learns utilities for all monotone bidding strategies, then there must exist two product distributions from the family which differ at only one coordinate, and yet they can be told apart by the learning algorithm. 
This must violate the well-known information theoretic lower bound for distinguishing two distributions \cite[e.g.][]{MansourNotes}.

\section{Auctions with Costly Search}
\label{sec:search}
We extend our sample complexity results to auctions in which bidders need to incur a cost to know precisely their values, a model proposed and studied by \citet{KWW16}.

In this model, each bidder~$i$ knows the distribution~$\disti$ from which her value is drawn, but gets to know her value~$\vali$ only after incurring a cost~$\costi$.
This models well, for example, a real estate market, where $\costi$ is an inspection cost.  
\citet{KWW16} showed that, due to the search costs, the English auction can have low efficiency, whereas the Dutch auction, with its descending price, can coordinate the bidders' searching in an almost efficient way.  
Intuitively, a bidder does not inspect her value until the price drops to a certain level, and then, after inspection at this threshold, either claims the item at the threshold price, or waits till later.  
In fact, absent incentive issues, this is the procedure a central authority would follow to maximize the welfare; the elegant algorithm is known as the Pandora's Box algorithm \citep{Weitzman79}.  
With incentives, bidders shade their bids just as in a first price auction, and there is efficiency loss. 
This was made precise by \citeauthor{KWW16}, who showed a correspondence between the equilibria in a Dutch auction with search costs and the equilibria in a first price auction without search costs but with transformed value distributions.
The near efficiency of the Dutch auction therefore follows from Price of Anarchy results on the first price auction \citep{ST13, HTW18, jin_first_2022}.

In this section, we first review in Section~\ref{sec:pandora} Pandora's Box algorithm, necessary for understanding the correspondence observed by \citet{KWW16}.  
En route, we show that $\tilde O(n / \eps^2)$ samples from the value distributions suffice for the algorithm to be $\eps$-close to optimal when the distributions are unknown.  
Our bound slightly improves a recent result by \citet{guo_generalizing_2021}.

We then review, in Section~\ref{sec:fpa-pandora}, the correspondence between the Dutch auction with search costs and the FPA without search costs.  
The correspondence between auctions involves  a mapping between the strategies in the two auctions and a transformation of the value distributions.  The mapping and transformation depend on the  value distributions and the search costs. 
We show that, when the value distributions are unknown, with $\tilde O(1 / \eps^2)$ value samples, an ``empirical correspondence'' can be established such that all monotone bidding strategies in the Dutch auction have approximately the same utilities as their image bidding strategies in the FPA.
Combining with our learning results on the FPA, we show that, with $\tilde O(n / \eps^2)$ samples, any equilibrium of the FPA without search costs on a transformed empirical distribution can be mapped to an approximate equilibrium of the Dutch auction on the true distribution.
Moreover, due to the bounded Price of Anarchy of the FPA \citep[e.g.][]{ST13}, the Dutch auction with this approximate equilibrium is approximately optimal in terms of welfare.



\subsection{Pandora’s Box Problem and Its Sample Complexity}
\label{sec:pandora}
Absent search costs, the welfare (a.k.a.\@ the efficiency) of a single item auction is the value of the bidder who is allocated the item.  
The maximum expected welfare is therefore simply the expectation of the largest value among the bidders.  
Auctions that sell to the highest bidder and charges the winner a price equal to the second highest bid gives bidders correct incentives to bid their true values and maximizes the welfare.
The sealed-bid second price auction and the ascending price auction (the English auction) both achieve this, whereas the first price auction and the equivalent descending price auction (the Dutch auction) obtain only approximately optimal welfare at equilibrium \citep{ST13, HTW18, jin_first_2022}.
With search costs, the welfare of an auction is the value of the bidder winning the item minus all the search costs paid.
Even without incentive considerations, the problem is nontrivial algorithmically.

\paragraph{The Pandora's Box Problem.}
The following Pandora's Box problem, named by \citet{Weitzman79}, abstracts the welfare maximization problem in the presence of search costs.
We are given $n$ boxes, each box~$i$ containing a value~$\vali$ drawn independently from a known distribution~$\disti$;  
to open box~$i$ and see~$\vali$, we must pay a cost of $\costi$;
at any point, we can take any box that has been opened and quit, or open a closed box at a cost, or quit without taking anything.
Our payoff is the value in the box taken (if any) minus the costs we paid along the way.
Given $\disti[1], \ldots, \disti[n]$ and $\costi[1], \ldots, \costi[n]$, our goal is to compute a procedure that maximizes the expected payoff.

\citet{Weitzman79} used this setting to model a consumer searching for an item to purchase; he gave an optimal algorithm, which is in turn a special case of Gittins Index algorithm from Bayesian bandits \citep{Gittins79}.  

We describe his algorithm below.  
To facilitate discussion of learning, we treat search costs as given, and algorithms as mappings from (unseen) values $\vali[1], \ldots, \vali[n]$ to a payoff.  
Certainly, only mappings that correspond to valid search procedures are meaningful; in particular, the procedure's decision (e.g., to open which box) cannot depend on values that have not been revealed.
It is the associated search procedure that we are interested in.

\begin{defn}[Index Based Algorithms/Mappings]
\label{def:index-based}
Given search costs $(\costi[1], \ldots, \costi[n])$, a mapping~$\alg$ from $(\vali[1], \ldots, \vali[n]) \in [0, \highval]^n$ to~$\mathbb R$ 
is \emph{index based} if there exist \emph{indices} $\reservei[1], \ldots, \reservei[n] \in \mathbb R$ such that on any vector of values $(\vali[1], \ldots, \vali[n])$, the output of~$\alg$ is given by the following procedure:
\begin{itemize}
\item Open boxes in descending order of $\reservei$.  Without loss of generality assume $\reservei[1]\ge \cdots \ge \reservei[n]$.
\item If $\reservei[1]<0$, do not open any box and output~$0$.
\item Otherwise, after opening box $i$, if $\max\{\vali[1], \ldots, \vali[i]\} \ge \reservei[i+1]$ or $i=n$, stop immediately and output $\max\{\vali[1], \ldots, \vali[i]\}-(c_1+\cdots+c_i)$. 
\end{itemize}
\end{defn}

\begin{thm}[\citealp{Weitzman79}] \label{thm:optimal-pandora}
The optimal algorithm corresponds to an index-based mapping; 
the index $\reservei$ for box~$i$ is the unique solution to $\Ex[\val \sim \disti]{\max(\val - \reservei, 0)} = \costi$.
\end{thm}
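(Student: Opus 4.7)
The plan is to follow the clean \emph{surrogate value} argument popularized by \citet{KWW16}. First I establish existence and uniqueness of the indices: the map $r \mapsto \Ex[\val \sim \disti]{\max(\val - r, 0)}$ is continuous, non-increasing, and strictly decreasing on $(-\infty, H]$---it equals $\Ex{\val}$ at $r = 0$, decreases to $0$ at $r = H$, and grows unboundedly as $r \to -\infty$---so the defining equation has a unique solution $\reservei \in \mathbb R$ for every $\costi \ge 0$. I then define the \emph{surrogate} value $\kappa_i := \min(\vali, \reservei)$. Since $\vali - \kappa_i = \max(\vali - \reservei, 0)$, the defining equation rewrites as $\Ex{\vali - \kappa_i} = \costi$; this is the central identity I will use throughout.

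Next I prove an upper bound: every valid search procedure has expected payoff at most $\Ex{\max_i \max(\kappa_i, 0)}$ (where the outer max is taken to be $0$ if no box is selected). Write the payoff as $\vali[i^*] - \sum_{i \in S} \costi$, where $S$ is the random set of opened boxes and $i^*$ is the taken box (or none). The decision to include $i \in S$ depends only on values of previously-opened boxes and is therefore independent of $\vali$; this independence together with the surrogate identity above gives $\Ex{\sum_{i \in S} \costi} = \Ex{\sum_{i \in S} \max(\vali - \reservei, 0)}$, and hence
\[\Ex{\text{Payoff}} = \Ex{\vali[i^*] - \sum_{i \in S} \max(\vali - \reservei, 0)}.\]
Since $i^* \in S$ and every term of the sum is non-negative, the integrand is pointwise at most $\vali[i^*] - \max(\vali[i^*] - \reservei[i^*], 0) = \kappa_{i^*} \le \max_i \max(\kappa_i, 0)$, giving the bound.

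Finally I show that Weitzman's algorithm attains the bound, which simultaneously confirms that the index stated in the theorem is correct. Order the boxes so that $\reservei[1] \ge \cdots \ge \reservei[n]$, and let $k$ be the (random) last opened box and $i^*$ the taken one. A short case analysis on the stopping rule yields two pointwise facts: (i) for every $i < k$, the algorithm continued past box $i$, forcing $\vali < \reservei[i+1] \le \reservei$, so $\kappa_i = \vali$ and the only potentially positive term in $\sum_{i \in S}\max(\vali - \reservei, 0)$ is the one at $i = k$; and (ii) if $\vali[k] \ge \reservei[k]$ then necessarily $i^* = k$, since any alternative $i^* < k$ would give $\vali[i^*] \le \max_{j \le k-1} \vali[j] < \reservei[k] \le \vali[k]$, contradicting $\vali[i^*] \ge \vali[k]$. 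Combining (i), (ii), and the fact that $\kappa_i \le \reservei[k+1] \le \vali[i^*]$ for $i > k$ (using the stopping condition $\max_{j \le k} \vali[j] \ge \reservei[k+1]$), the integrand $\vali[i^*] - \max(\vali[k] - \reservei[k], 0)$ equals $\max_i \max(\kappa_i, 0)$ pointwise on both the $\vali[k] < \reservei[k]$ branch (where it equals $\vali[i^*]$) and the $\vali[k] \ge \reservei[k]$ branch (where it equals $\reservei[k]$). Taking expectation closes the gap with the upper bound. The main obstacle is this last pointwise identification; once completed, the matching upper and lower bounds force Weitzman's algorithm to be optimal and pin down the index uniquely.
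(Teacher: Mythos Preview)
The paper does not prove this theorem; it is cited from \citet{Weitzman79} and used as a black box. Your surrogate-value argument (following \citet{KWW16}) is correct, and is in fact the technique the paper itself employs for the companion results Lemma~\ref{lem:optimal-welfare} and Claim~\ref{cl:utility-pandora}. Two minor remarks worth tightening: (a) in the matching step on the branch $\vali[k] \ge \reservei[k]$ you implicitly need $\reservei[k] \ge 0$ to identify $\reservei[k]$ with $\max_i \max(\kappa_i, 0)$---this holds because the stopping rule never opens a box with non-positive index (if $\reservei[i+1] \le 0$ then $\max_{j \le i} \vali[j] \ge 0 \ge \reservei[i+1]$ forces a stop), but it deserves one sentence; (b) the map $r \mapsto \Ex{\max(\val - r, 0)}$ is strictly decreasing only below the essential supremum of~$\disti$, not necessarily on all of $(-\infty, H]$, though this does not affect uniqueness for $\costi > 0$, and the paper handles $\costi = 0$ by convention in Definition~\ref{def:fpa-pandora-index}.
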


\paragraph{Learning.}
 We now answer the following learning question: if the distributions $\disti[1], \cdots, \disti[n]$ are unknown, how many samples from them suffice for us to devise an algorithm that is close to optimal on the original distribution?
Recently, \citet{guo_generalizing_2021} gave a polynomial bound for the problem; we give an alternative analysis using pseudo-dimension, which leads to a slightly improved bound.  
We make use of a technical lemmas of theirs (Lemma~\ref{lem:truncate}).
For our learning algorithm to be run in polynomial time, we invoke Lemma~\ref{lem:relation-uniform-convergence} to perform learning on the empirical product distribution.

Given our view of the algorithms as mappings from value vectors to the payoff, the expected payoff of an algorithm is then the expectation of its output on the value distributions.  
Given Theorem~\ref{thm:optimal-pandora}, it suffices to learn the expected payoff of all index-based algorithms.  
The problem then boils down to bounding the pseudo-dimension of the class of index-based mappings.
Modulo a technical issue which calls for truncating the index-based algorithms, that is an outline of the proof of the following sample complexity theorem.

\begin{thm}
	\label{thm:pandora}
	Given search costs $\costi[1], \ldots, \costi[n]$, for any $\eps > 0, \delta \in (0, 1)$, there is 
	\[M = O\left(\frac{H^2}{\eps^2}\log^2 (\frac H \eps) \left[n\log n\log (\frac H \eps) + \log (\frac {1}{\delta})\right]\right),\]
	such that given $\nsample \ge M$ samples, 
	a search procedure computed on these samples has expected payoff within additive $\eps$ to the optimal algorithm with probability at least $1 - \delta$.
	Moreover, the procedure can be computed in polynomial time.
\end{thm}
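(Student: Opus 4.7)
The plan is to parallel the strategy of Section~\ref{sec:fpa-upper-bound}: identify a class $\funcClass$ of real-valued mappings on $[0,H]^n$ whose expectation under the value distribution is the payoff of an algorithm, bound its pseudo-dimension, and apply Theorem~\ref{thm:pseudo-dimension}. By Weitzman's theorem (Theorem~\ref{thm:optimal-pandora}) the optimal search procedure is an index-based mapping $\alg$ parameterized by $(\reservei[1], \ldots, \reservei[n]) \in \mathbb R^n$, so it suffices to learn the expected payoff of every index-based mapping uniformly. I would first apply Lemma~\ref{lem:truncate} of \citet{GHTZ19} to pass to a truncated class whose outputs lie in $[-H^*, H^*]$ with $H^* = O(H \log(H/\eps))$, losing at most $\eps/2$ in expected payoff. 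Uniform convergence over this truncated class implies that the index-based procedure maximizing empirical payoff is $\eps$-optimal on the true distribution.

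The main technical step is to bound $\Pdim(\funcClass) = O(n \log n)$. Fix $m$ samples $\samples^1, \ldots, \samples^m \in [0,H]^n$ and any witnesses. Partition the parameter space $\mathbb R^n$ into $n!$ regions by the ordering of the coordinates of $(\reservei[1], \ldots, \reservei[n])$. Within a single ordering region, the opening order is fixed, and on each sample $l$ the output depends only on the stopping step, which is in turn determined by the $n-1$ comparisons between an $\reservei$ value and a prefix-max of the sample values in that ordering. These give $O(nm)$ hyperplanes active within the region, which partition it into at most $O((nm)^n)$ cells by the standard cell-count for hyperplane arrangements in $\mathbb R^n$. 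Summing across the $n!$ ordering regions gives $n! \cdot O((nm)^n)$ cells, and within each cell the label vector is fixed. Requiring $n! \cdot (nm)^n \ge 2^m$ forces $m = O(n \log n)$. Applying Theorem~\ref{thm:pseudo-dimension} to $H^*$-bounded functions with pseudo-dimension $O(n \log n)$ then yields the sample complexity claimed in the theorem, with the extra $\log^2(H/\eps)$ factor coming from $(H^*)^2$.

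For polynomial-time computability, I would invoke Lemma~\ref{lem:relation-uniform-convergence} to transfer the uniform convergence guarantee from the (correlated) empirical distribution to the empirical product distribution $\empDists$, paying only a $\log(H/\eps)$ cost inside $\log(1/\delta)$. Since each marginal $\empDisti$ of $\empDists$ is supported on at most $m$ atoms, Weitzman's algorithm computes the optimal index-based procedure for $\empDists$ in polynomial time: each $\reservei$ is the unique root of $\Ex[\vali \sim \empDisti]{\max(\vali - \reservei, 0)} = \costi$, a piecewise-linear equation in $\reservei$ solvable exactly in $O(m)$ time. Uniform convergence then guarantees that this empirically computed procedure is $\eps$-optimal on $\dists$ with probability at least $1-\delta$.

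The main obstacle is the pseudo-dimension argument. A naive count that treats all potential stopping hyperplanes $\reservei[i] = \max_{j \in S} \samplei[j]^l$ for arbitrary $S \subseteq [n] \setminus \{i\}$ on equal footing admits $\Omega(2^n m)$ hyperplanes in $\mathbb R^n$ and produces only a weak pseudo-dimension bound of $O(n^2)$. The key structural insight is that within any one ordering region only prefix subsets of that ordering can appear in a stopping condition, so only $O(nm)$ hyperplanes are relevant there; carrying out the cell count separately per ordering region and then summing across orderings is what yields the tight $O(n \log n)$ bound.
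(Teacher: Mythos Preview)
Your proposal is correct and follows the same high-level architecture as the paper: truncate via Lemma~\ref{lem:truncate} to get range $H^* = O(H\log(H/\eps))$, bound the pseudo-dimension of (truncated) index-based mappings by $O(n\log n)$, apply Theorem~\ref{thm:pseudo-dimension}, and then invoke Lemma~\ref{lem:relation-uniform-convergence} to pass to the empirical product distribution so that Weitzman's indices can be solved in polynomial time on finitely supported marginals.

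The one place where your argument differs from the paper is the organization of the pseudo-dimension bound. You first split $\mathbb R^n$ into the $n!$ ordering cones and observe that within each cone only the $O(nm)$ ``prefix-max'' stopping hyperplanes matter, giving $n!\cdot O((nm)^n) \le (Cn^2 m)^n$ cells. The paper instead notes, without first fixing the order, that the output of an index-based mapping on a fixed value profile is already determined by the $O(n^2)$ pairwise comparisons $\reservei\lessgtr\reservei[j]$ and $\reservei\lessgtr\vali[j]$; over $m$ samples this is $O(mn^2)$ hyperplanes in $\mathbb R^n$, yielding $(Cmn^2)^n$ cells directly. Both routes give the same count and hence $m=O(n\log n)$, but the paper's observation sidesteps the ``$2^n$ subsets'' detour entirely: the stopping condition $\max_{j\in S}\vali[j]\ge \reservei$ never needs to be treated as a single hyperplane, because it decomposes into the individual comparisons $\vali[j]\ge \reservei$ which are already among the $O(n^2)$ pairwise inequalities. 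So the obstacle you flag is real for your particular framing but has a cleaner resolution than the per-ordering decomposition.
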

Compared with \citet{guo_generalizing_2021}'s bound $O(\frac n {\eps^2} \log^2 (\frac 1 \eps) \log (\frac n \eps) \log (\frac n {\eps \delta}))$ (where $\highval$ is normalized to~$1$), our bound is better: 
theirs has a $\frac{n}{\eps^2}\log^2(\frac{1}{\eps})(\log^2 n + \log\frac{1}{\eps}\log\frac{1}{\eps\delta})$ term while we do not.

We devote the rest of this subsection to the proof of Theorem~\ref{thm:pandora}.  
Let $\funcClass_P$ be the class of all index-based mappings.  
The technical centerpiece is a bound on the pseudo-dimension of~$\funcClass_P$.



\begin{restatable}{lemma}{pdimpandora}
\label{lem:pandora-pdim}
$\PDim(\funcClass_P) = O(n \log n)$.
\end{restatable}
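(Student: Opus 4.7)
The plan is to follow the same pseudo-shattering framework used in the proof of Lemma~\ref{lem:pseudo-dimension-utility-class}. I would fix any $m$ sample vectors $\vals^1,\ldots,\vals^m \in [0,\highval]^n$ together with arbitrary witnesses $\Pwitnessi[1],\ldots,\Pwitnessi[m] \in \mathbb{R}$, partition $\funcClass_P$ into a small number of sub-classes each producing a small number of distinct label vectors, and conclude that the total number of label vectors cannot reach $2^m$ unless $m = O(n\log n)$.

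The partition I would use is by the sorted order (permutation) $\pi$ of the indices: let $\funcClass_P^\pi$ consist of those index-based mappings whose indices satisfy $r_{\pi(1)} \geq \cdots \geq r_{\pi(n)}$, yielding $n!$ sub-classes. Within a fixed $\funcClass_P^\pi$, the behavior on sample $\vals^j$ is captured by the stopping time $k^j \in \{1,\ldots,n\}$, i.e.\ the smallest $i$ for which $\max(v^j_{\pi(1)}, \ldots, v^j_{\pi(i)}) \geq r_{\pi(i+1)}$ (or $n$ if no such $i$ exists), together with a single bit recording whether $r_{\pi(1)} \geq 0$ (which controls whether the algorithm opens any box at all). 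Crucially, once $\pi$ and the stopping-time vector $(k^1, \ldots, k^m)$ are pinned down, the output on each sample is the deterministic quantity $\max(v^j_{\pi(1)}, \ldots, v^j_{\pi(k^j)}) - \sum_{i \leq k^j} \costi[\pi(i)]$, and therefore so is the label.

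The key counting step is to show that within $\funcClass_P^\pi$ at most $2(m+1)^{n-1}$ distinct stopping-time vectors can be realized. For each $i \in \{1,\ldots,n-1\}$, the ``did we continue past step $i$?'' question on sample $j$ is decided by the sign of $r_{\pi(i+1)} - \max(v^j_{\pi(1)}, \ldots, v^j_{\pi(i)})$; as the single real number $r_{\pi(i+1)}$ varies, at most $m+1$ sign patterns on the $m$ samples are possible (corresponding to the $m+1$ gaps in the sorted list of running maxes). Multiplying over the $n-1$ levels and absorbing the start/no-start bit into the factor of $2$ gives at most $2(m+1)^{n-1}$ output vectors per sub-class, hence at most $2\,n!\,(m+1)^{n-1}$ label vectors in total. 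Pseudo-shattering $m$ samples then requires $2^m \leq 2\,n!\,(m+1)^{n-1}$, which after taking logarithms and using $\log(n!) = O(n\log n)$ yields $m = O(n \log n)$.

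The main obstacle I anticipate is justifying that the per-level counts multiply cleanly. The ordering constraint $r_{\pi(1)} \geq \cdots \geq r_{\pi(n)}$ restricts which tuples of indices are feasible within $\funcClass_P^\pi$, but this can only reduce the number of achievable stopping-time vectors, so the product bound still holds; and the stopping decisions at different levels $i$ are governed by disjoint comparisons (one for each $r_{\pi(i+1)}$), so the counts genuinely multiply. A secondary subtlety is the boundary case $r_{\pi(1)} < 0$, in which the algorithm opens no box and outputs $0$ on every sample; this contributes only one additional behavior per permutation and is safely absorbed in the factor of $2$.
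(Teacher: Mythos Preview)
Your argument is correct and reaches the same $O(n\log n)$ bound, but the route differs from the paper's. The paper does not partition by permutation or track stopping times; instead it observes directly that, for a fixed value profile, the output of an index-based mapping is determined by the signs of $O(n^2)$ linear forms in the index vector $\reserves \in \mathbb{R}^n$ (the comparisons $\reservei$ vs.\ $\reservei[j]$ and $\reservei$ vs.\ $\vali[j]$). Over $m$ samples this gives $O(mn^2)$ hyperplanes, which partition $\mathbb{R}^n$ into at most $(Cmn^2)^n$ cells on each of which the entire output vector is constant; requiring $2^m \le (Cmn^2)^n$ yields $m = O(n\log n)$. Your approach is more combinatorial and mirrors the structure of Lemma~\ref{lem:pseudo-dimension-utility-class}: first split off the $n!$ orderings, then exploit that within a fixed ordering the ``continue past level $i$'' pattern across the $m$ samples depends on the single real parameter $\reservei[\pi(i+1)]$, so the counts multiply to $(m+1)^{n-1}$. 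The paper's hyperplane-arrangement argument is shorter and more readily portable to other parameterized families whose outputs are piecewise constant in polyhedral cells, whereas your argument makes the sequential structure of the Pandora procedure explicit and yields a marginally sharper count ($2\,n!\,(m+1)^{n-1}$ versus $(Cmn^2)^n$), though of course both give the same asymptotic bound.
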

\begin{proof}
Given any profile of values $(\vali[1], \ldots, \vali[n]) \in \mathbb [0, H]^n$, the output of any index-based mapping with indices $(\reservei)_i$ is fully determined by the following $O(n^2)$ linear inequalities: for any $i, j \in [n]$, whether $\reservei \geq \reservei[j]$ or $\reservei < \reservei[j]$; for any $i, j \in [n]$, whether $\reservei \geq \vali[j]$ or $\reservei < \vali[j]$.  
That is, the space of indices is partitioned by the hyperplanes given by these $O(n^2)$ inequalities, and within each region the corresponding index-based mapping remains a constant for this profile of values.  
Consider any $m$ value profiles that are pseudo-shattered by~$\hypoclass_P$.  
Each of these $m$ value profiles imposes $O(n^2)$ linear inequalities on the space of indices, and we will have altogether $O(mn^2)$ inequalities.  
A crucial observation is that, for any positive integer~$t$, the space $\mathbb R^n$ can be partitioned by $t$ hyperplanes into at most $O(t^n)$ regions.
Therefore the space of indices, which is $\mathbb R^n$, can be divided into at most $(C mn^2)^n$ regions, for some constant $C > 0$.  
Any index-based algorithm within such a region gives the same outputs on all these $m$ value profiles, and therefore cannot give different signs for any profile no matter what the corresponding witness is.  
To shatter $m$ profiles we need at least $2^m$ regions.  
Therefore $2^m \leq (Cmn^2)^n$, which gives $m \leq C' n \log n$ for some $C' >0$.
\end{proof}

Note that, if the values are between $0$ and~$\highval$, without loss of generality we may assume $\costi \leq \highval$ for each~$i$. (Otherwise the box should be discarded by any reasonable algorithm.)
With this, directly combining Lemma~\ref{lem:pandora-pdim} and Theorem~\ref{thm:pseudo-dimension} would still yield a bound having a cubic dependence on~$n$, because the output of an index-based mapping may span the range $[-n\highval, \highval]$.  
A similar problem also arose in the approach of \citet{guo_generalizing_2021}, who remedied this by observing that the performance of the optimal index-based algorithm is not affected much if it is truncated: to \emph{truncate} an algorithm for the Pandora's Box problem, the algorithm is terminated immediately when its cumulated cost exceeds $\Omega(H\log \frac H \eps)$.

\begin{lemma}[\citealp{guo_generalizing_2021}]
	\label{lem:truncate}
On an instance of the Pandora's Box problem, the expected payoff of the optimal index-based algorithm exceeds that of its truncated version by no more than~$\eps$.
\end{lemma}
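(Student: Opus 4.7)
The plan is to compare the payoff $V^*$ of Weitzman's optimal algorithm $\alg^*$ with the payoff $V_T$ of its truncated version $\alg^*_T$. I would first observe that the two algorithms produce identical outcomes unless the total cost paid by $\alg^*$ exceeds $T$; writing each payoff as reward minus cost and letting $E_T$ denote this bad event, the pointwise loss on $E_T$ satisfies $V^* - V_T \le R^* - R_T \le H$, because $\alg^*$ opens a (weak) superset of the boxes opened by $\alg^*_T$, so $R^* \ge R_T$ and $C^* \ge C_T$. This gives $\Ex{V^* - V_T} \le H \cdot \Prx{E_T}$ and reduces the lemma to showing $\Prx{E_T} \le \eps/H$ for $T = \Theta(H \log(H/\eps))$.

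The main obstacle is establishing this tail bound. A naive Markov estimate using only $\Ex{C^*} \le H$ (which follows from $V^* \in [0,H]$ in expectation: $\Ex{V^*} \in [0,H]$ so $\Ex{C^*} = \Ex{R^*} - \Ex{V^*} \le H$) would give $\Prx{E_T} \le H/T$, forcing $T = \Omega(H^2/\eps)$, which is too large to yield the claimed $\log^2(H/\eps)$ factor in the final sample complexity. To extract an exponential tail I would exploit the defining relation of the Weitzman indices. Let $k_T$ be the smallest integer with $\sum_{i \le k_T} \costi > T$. The event $E_T$ coincides with the event that $\alg^*$ opens box $k_T$, and by the stopping rule this requires every previously observed value $\vali[j]$ ($j < k_T$) to lie below $\reservei[k_T]$. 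Since the indices are non-increasing, $\reservei[k_T] \le \reservei$ for every $i < k_T$, so $\Prx{E_T} \le \prod_{i<k_T}\disti(\reservei[k_T]) \le \prod_{i<k_T}\disti(\reservei)$.

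The crucial ingredient is the inequality $\disti(\reservei) \le 1 - \costi/H$, which I would derive directly from the defining relation of the index: $\costi = \Ex[\vali\sim \disti]{(\vali - \reservei)^+} \le (H-\reservei)(1 - \disti(\reservei)) \le H\bigl(1 - \disti(\reservei)\bigr)$. Substituting this into the previous display and applying $1 - x \le e^{-x}$ yields $\Prx{E_T} \le \exp\!\bigl(-\tfrac{1}{H}\sum_{i<k_T}\costi\bigr) \le e\cdot \exp(-T/H)$, where the last inequality uses the minimality of $k_T$ together with $\costi[k_T] \le H$ to conclude $\sum_{i<k_T}\costi \ge T - H$. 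Choosing $T = H(1 + \log(H/\eps))$ then makes $\Prx{E_T} \le \eps/H$, and combining with the first paragraph gives $\Ex{V^* - V_T} \le \eps$, as desired.
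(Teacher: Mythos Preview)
The paper does not prove this lemma; it is quoted verbatim as Lemma~28 of \citet{GHTZ19} and used as a black box in the analysis of Theorem~\ref{thm:pandora}. So there is no ``paper's own proof'' to compare against.

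That said, your argument is correct and self-contained. The decomposition $\Ex{V^* - V_T} = \Ex{(V^* - V_T)\,\mathbb{1}[E_T]} \le H\cdot \Prx{E_T}$ is valid because the two procedures coincide on $E_T^c$ and because $C^* \ge C_T$, $R^* \le H$, $R_T \ge 0$ on~$E_T$. The tail bound is the substantive part, and your derivation is clean: the event $E_T$ is exactly the event that Weitzman's rule reaches box~$k_T$, which forces $v_i < r_{k_T} \le r_i$ for all $i < k_T$; the index equation $c_i = \Ex{(v_i - r_i)^+} \le H(1 - F_i(r_i))$ gives $F_i(r_i) \le 1 - c_i/H$; and the product bound then yields the exponential decay $\Prx{E_T} \le e\cdot e^{-T/H}$. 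Two implicit assumptions are worth stating explicitly for completeness: (i) if $\sum_i c_i \le T$ then $k_T$ does not exist and $E_T = \emptyset$, so the bound is trivial; (ii) the step $H - r_i \le H$ uses $r_i \ge 0$, which is automatic for $i \le k_T$ since otherwise the stopping rule $\max_{j \le i-1} v_j \ge 0 \ge r_i$ would prevent box~$i$ from ever being opened and $E_T$ would again be empty. With these edge cases noted, the proof goes through and recovers the truncation threshold $T = \Theta(H\log(H/\eps))$ stated in the paper.
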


The proof of Lemma~\ref{lem:pandora-pdim} is easily modified to give the same bound on the pseudo-dimension of mappings corresponding to truncated index-based algorithms.  
With this, we can now combine Theorem~\ref{thm:pseudo-dimension} and Lemma~\ref{lem:relation-uniform-convergence} to obtain a sample complexity upper bound.

We remark that in Theorem~\ref{thm:pandora} we show the sample complexity for uniform convergence \emph{on product distribution}, because this yields a fast algorithm given samples: simply running the optimal truncated index-based algorithm on the empirical product distribution is guaranteed to be approximately optimal on~$\dists$ with high probability.  
On the other hand, picking out the best index-based algorithm on the empirical distribution, which is a correlated distribution, appears computationally challenging.

\subsection{Descending Auction with Search Costs}
\label{sec:fpa-pandora}
In this section, we briefly review the main results by \citet{KWW16} in Section~\ref{sec:KWW16}, and then in Section~\ref{sec:sample-KWW16} present our learning results in auctions with search costs.
Recall that in this setting,  we consider a single-item auction, where each bidder~$i$ has a value~$\vali \in [0, H]$ drawn independently from distribution~$\disti$, but 
$\vali$ is not known to anyone at the beginning of the auction. 
In order to observe the value, bidder~$i$ needs to pay a known search cost $\costi\in[0, \highval]$.

\subsubsection{Transformation with Distributional Knowledge}
\label{sec:KWW16}

\paragraph{Descending auction with search costs.}  
In a \emph{descending auction} (or Dutch auction), a publicly visible price descends continuously from~$H$.  
At any point, any bidder may claim the item at the current price.  
With search cost, a bidder's strategy $\dstrati$ consists of two parts:\footnote{Note that there is no private information at the beginning of the auction.} a threshold price $\dtimei$ and a mapping $\dbidi(\cdot)$ from values to bids.  
Concretely, bidder~$i$ decides to inspect when the price descends to~$\dtimei$, at which point she pays the search cost and immediately learns her value~$\vali$.  
After seeing her value, the bidder chooses another a purchase price $\dbidi(\vali) \leq \dtimei$ at which to claim the item.
The latter is equivalent to submitting a bid $\dbidi(\vali)\le\dtimei$. 

We say a strategy $\dstrati=(\dtimei, \dbidi(\cdot))$ is \emph{monotone} if $\dbidi(\cdot)$ is monotone non-decreasing. A strategy is \emph{mixed} if it is a distribution over pure strategies $\dstrati$'s. Mixed strategies allow bidders to randomize over the threshold price $\dtimei$ and the purchase price $\dbidi(\vali)$. Abusing notations, we also use $\dstrati$ to denote a mixed strategy.
We say a \emph{mixed} strategy $\dstrati$ is \emph{monotone} if it is a distribution over monotone pure strategies. 

We use $\DA(\dists, \costs)$ to denote a descending auction on value distributions $\dists$ with search costs $\costs$, and let $\utili^{\DA(\dists, \costs)}(\dstrati, \dstratsmi)$ be the expected utility of bidder $i$ when bidders use strategies $\dstrats=(\dstrati, \dstratsmi)$ and their values are drawn from $\dists$. Note that this utility is ex ante, since the value is unknown until the bidder searches. 
The solution concept we consider is therefore a Nash equilibrium instead of a Bayes Nash equilibrium. 

\begin{defn} 
In $\DA(\dists, \costs)$, a (mixed) strategy profile $\dstrats$ is an \emph{$\eps$-Nash equilibrium} ($\eps$-NE) if for each bidder~$i$ and any strategy $\dstrati'$, 
\[\utili^{\DA(\dists, \costs)}(\dstrati, \dstratsmi) \ge \utili^{\DA(\dists, \costs)}(\dstrati', \dstratsmi) - \eps.\]
If $\eps = 0$, $\dstrats$ is a \emph{Nash equilibrium}.
\end{defn}

We use $\FPA(\dists)$ to denote the first price auction with value distributions~$\dists$.  
Denote by $\utili^{\FPA(\dists)}(\fstrats)$ the (ex ante) expected utility of bidder~$i$ in $\FPA(\dists)$, when the bidders use strategy profile~$\fstrats$.  
We can similarly define the Nash equilibrium for a first price auction. 
\begin{defn} 
In $\FPA(\dists)$, a (mixed) strategy profile $\fstrats$ is an \emph{$\eps$-Nash equilibrium} ($\eps$-NE) if for each bidder~$i$ and any strategy $\fstrati'$,
\[\utili^{\FPA(\dists)}(\fstrati, \fstratsmi) \ge \utili^{\FPA(\dists)}(\fstrati', \fstratsmi) - \eps.\]
If $\eps = 0$, $\fstrats$ is a \emph{Nash equilibrium}.
\end{defn}

Note that Nash equilibrium is an ex ante notion, in contrast to BNE (Definition~\ref{def:bne}), which is an interim notion and requires every type to best respond.
In a first price auction, an $\eps$-BNE  must be an $\eps$-NE, but the reverse is not true.

With no search cost, the Dutch auction is well known to be equivalent to a first price auction. 
Given a Dutch auction with search costs,   \citet{KWW16} constructed a first price auction with transformed value distributions and no search costs, and showed that an NE of this FPA corresponds to an NE of the Dutch auction with search costs.  


\begin{defn}\label{def:fpa-pandora-index}
Given a distribution $\disti$ and a search cost $\costi$, define the \emph{index} $\thresholdi$ of $(\disti, \costi)$ to be the unique solution to $\Ex[\vali\sim\disti]{ \max\{\vali - \thresholdi, 0\} } = \costi$.  If $\costi=0$, let $\thresholdi=H$. 
We always assume $\Ex[\vali\sim\disti]{\vali}\ge \costi$, so that $\thresholdi\in[0, H]$.  (Otherwise the search cost would be so high that the bidder should never search for the value.)
\end{defn}

For a distribution $\dist$ and some $\threshold\in \mathbb R$, we define 
$\dist^{\threshold}$ to be the distribution of $\min\{\val, \threshold\}$ where $\val\sim F$. 
For a product distribution~$\dists$ and a vector $\thresholds$, we use $\dists^{\thresholds}$ to denote the product distribution whose $i^{\text{-th}}$ component is $\disti^{\thresholdi}$.  
A key insight of \citet{KWW16} is a pair of utility-preserving mappings between strategies in DA$(\dists, \costs)$ and FPA$(\dists^{\thresholds})$, where $\thresholds$ is the vector of indices for $(\dists, \costs)$.


\begin{defn}\label{def:strategy-mappings}
For each bidder $i$, given distribution $\disti$ and $\thresholdi \in [0, H]$, define two mappings:\footnote{We describe mappings for pure strategies here.  For mixed strategies, their images are naturally distributions over the images of pure strategies under $\FtD$ and $\DtF$.}
\begin{enumerate} 
\item $\FtD^{\thresholdi}$: a monotone strategy $\fstrati:[0, \thresholdi]\to\mathbb R_+$ in $\FPA(\dists^{\thresholds})$ is mapped by $\FtD^{\thresholdi}$ to 
the strategy in $\DA(\dists, \costs)$ with threshold price $\dtimei=\fstrati(\thresholdi)$ and bidding function $\dbidi(\vali) = \fstrati(\min \{\vali, \thresholdi\})$. 
(By the monotonicity of~$\fstrati$, we have $\dbidi(\vali)\le \dtimei$). 

\item $\DtF^{(\disti, \thresholdi)}$: a strategy $\dstrati = (\dtimei, \dbidi(\cdot))$ in $\DA(\dists, \costs)$ is mapped by $\DtF^{(\disti, \thresholdi)}$ to a strategy $\fstrati=\DtF^{(\disti, \thresholdi)}(\dstrati)$ in $\FPA(\dists^{\thresholds})$, with $\fstrati(\vali) = \dbidi(\vali)$ for $\vali<\thresholdi$ and $\fstrati(\thresholdi)=\dbidi(\vali')$, where $\vali'$ is a random variable drawn from~$\disti$ conditioning on $\vali' \ge \thresholdi$.
\end{enumerate}
\end{defn}

\noindent The superscripts $\thresholdi$ and $(\disti, \thresholdi)$ should make it clear that the mapping~$\FtD^{\thresholdi}$ is determined solely by~$\thresholdi$ while $\DtF^{(\disti, \thresholdi)}$ is related to both the distribution and~$\thresholdi$. 

A strategy $\dstrati = (\dtimei, \dbidi(\cdot))$ in a descending auction is said to \emph{claim above $\thresholdi$}  
if $\dbidi(\vali) =\dtimei$ for all $\vali\ge \thresholdi$, i.e., the bidder claims the item immediately if she finds the value of the item greater than or equal to~$\thresholdi$. 

\begin{claim}[Claim 2 of \citealp{KWW16}]
\label{claim:strategy-equivalence}
Given distribution $\disti$ whose index is $\thresholdi$, 
\begin{enumerate}
    \item If $\dstrati$ claims above $\thresholdi$, then $\dstrati = \FtD^{\thresholdi}(\DtF^{(\disti, \thresholdi)}(\dstrati))$. 
    \item If $\fstrati$ is monotone, then $\fstrati = \DtF^{(\disti, \thresholdi)}(\FtD^{\thresholdi}(\fstrati))$. 
\end{enumerate}
\end{claim}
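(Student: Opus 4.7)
The plan is to verify both identities by direct substitution: compute the composition in each direction and check pointwise that the result agrees with the input under the stated hypothesis. The proof is essentially unpacking definitions, so the main task is to track where each hypothesis (``claims above $\thresholdi$'' in Part~1, monotonicity of $\fstrati$ in Part~2) is used, and in particular why the random draw $\vali' \sim \disti$ conditioned on $\vali' \geq \thresholdi$ appearing in the definition of $\DtF^{(\disti, \thresholdi)}$ collapses to a deterministic value in each case.

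For Part~1, I would begin with a pure strategy $\dstrati = (\dtimei, \dbidi(\cdot))$ satisfying $\dbidi(\vali) = \dtimei$ for all $\vali \geq \thresholdi$. Applying $\DtF^{(\disti, \thresholdi)}$ produces some $\fstrati$ with $\fstrati(\vali) = \dbidi(\vali)$ for $\vali < \thresholdi$ and $\fstrati(\thresholdi) = \dbidi(\vali')$ for $\vali' \geq \thresholdi$; the claims-above condition makes the latter deterministically equal to $\dtimei$. Applying $\FtD^{\thresholdi}$ to the resulting $\fstrati$ then yields a DA strategy with threshold $\fstrati(\thresholdi) = \dtimei$ and bid function $\vali \mapsto \fstrati(\min\{\vali, \thresholdi\})$, which by construction equals $\dbidi(\vali)$ for $\vali \leq \thresholdi$ and equals $\dtimei = \dbidi(\vali)$ for $\vali > \thresholdi$ by invoking the claims-above condition one more time. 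For mixed strategies the same argument is applied realization by realization in the support.

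For Part~2, I would start with a monotone $\fstrati: [0, \thresholdi] \to \mathbb R_+$ and apply $\FtD^{\thresholdi}$ to obtain $\dstrati = (\dtimei, \dbidi(\cdot))$ with $\dtimei = \fstrati(\thresholdi)$ and $\dbidi(\vali) = \fstrati(\min\{\vali, \thresholdi\})$. Monotonicity of $\fstrati$ guarantees $\dbidi(\vali) \leq \dtimei$, so $\dstrati$ is a well-defined DA strategy, and moreover by construction it claims above $\thresholdi$. Applying $\DtF^{(\disti, \thresholdi)}$ then gives $\fstrati'(\vali) = \dbidi(\vali) = \fstrati(\vali)$ for $\vali < \thresholdi$, while $\fstrati'(\thresholdi) = \dbidi(\vali') = \fstrati(\thresholdi)$ almost surely, since $\vali' \geq \thresholdi$ lands in the region where $\dbidi$ is constant at $\fstrati(\thresholdi)$. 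The main and essentially only subtle point, in both parts, is this collapse of the conditional draw $\vali'$ to a deterministic value, which is exactly what the two hypotheses are designed to ensure; without either condition one would produce a genuinely mixed strategy and the pointwise identity would fail.
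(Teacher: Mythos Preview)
Your proposal is correct: both directions follow by direct substitution into the definitions of $\FtD^{\thresholdi}$ and $\DtF^{(\disti,\thresholdi)}$, and you correctly identify that the hypotheses ``claims above $\thresholdi$'' and ``$\fstrati$ monotone'' are precisely what force the conditional draw $\vali'\sim\disti\mid\vali'\ge\thresholdi$ to collapse to a deterministic bid. The paper does not supply its own proof of this claim---it is quoted as Claim~2 of \citet{KWW16} and used as a black box---so there is no in-paper argument to compare against; your unpacking is exactly the standard verification one would give.
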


\begin{thm}[Claim 3 of \citealp{KWW16}]
\label{thm:DA_FPA_transform}
Suppose $\thresholds$ is the vector of indices of $(\dists, \costs)$ (Definition~\ref{def:fpa-pandora-index}). 
\begin{enumerate}
 \item For any monotone mixed strategy profile $\fstrats = (\fstrati, \fstratsmi)$ for $\FPA(\dists^{\thresholds})$,
 for each bidder~$i$, 
\[  \utili^{\FPA(\dists^{\thresholds})}(\fstrats) = \utili^{\DA(\dists, \costs)}(\FtD^{\thresholds}(\fstrats)).\]
\item For any mixed (not necessarily monotone) strategy profile $\dstrats = (\dstrati, \dstratsmi)$ for $\DA(\dists, \costs)$, for each bidder~$i$,
\[ \utili^{\DA(\dists, \costs)}(\dstrats) \le \utili^{\FPA(\dists^{\thresholds})}(\DtF^{(\dists, \thresholds)}(\dstrats)),\]
where ``='' obtains if $\dstrati$ claims above $\thresholdi$.
\end{enumerate}
\end{thm}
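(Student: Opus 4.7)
The plan is to exploit the fact that the mappings $\FtD^{\thresholds}$ and $\DtF^{(\dists,\thresholds)}$ are designed so that, for each bidder $i$, her claim price in $\DA(\dists,\costs)$ has the same marginal distribution as her bid in $\FPA(\dists^{\thresholds})$. Once this distributional matching is in place, both auctions share the same ``highest bid wins and pays'' selection rule, and the utility comparison reduces to two remaining differences: the winner's value, which is $\vali[i]$ in the DA but the truncated value $\min\{\vali[i],\thresholdi[i]\}$ in the FPA, and the search cost $\costi$ incurred only in the DA. The defining equation for the index, $\Ex[\val\sim\disti]{(\val-\thresholdi[i])^+}=\costi$, is precisely what causes these two differences to cancel in expectation.

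For Part~1, I unfold $\FtD^{\thresholds}(\fstrats)$ to observe that each bidder~$j$'s claim price in the DA is $\fstrati[j](\min\{\vali[j],\thresholdi[j]\})$, matching her bid in the FPA on $\dists^{\thresholds}$. Using monotonicity of $\fstrati[j]$, the bidder with the largest claim price has threshold $\dtimei[j]=\fstrati[j](\thresholdi[j])$ at least as large as her own claim price, hence at least as large as every other claim price, so she cannot be preempted and is guaranteed to search; consequently the two auctions produce identical joint distributions over winners and payments. Letting $S_i$ denote the event that $i$ searches in the DA, the utility difference for bidder $i$ then equals $\Ex{(\vali[i]-\thresholdi[i])^+\,\mathbf{1}[i\text{ wins}]}-\costi\Prx{S_i}$. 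The key observation is that whenever $\vali[i]\geq\thresholdi[i]$, bidder $i$'s claim price equals exactly $\dtimei[i]$, and conditional on $S_i$ this dominates every other claim price; so on $\{\vali[i]\geq\thresholdi[i]\}$ the indicators of ``$i$ wins'' and of $S_i$ coincide up to measure-zero ties. Since $S_i$ depends only on $\valsmi$ and hence is independent of $\vali[i]$, the index equation gives $\Ex{(\vali[i]-\thresholdi[i])^+\mathbf{1}[S_i]}=\costi\Prx{S_i}$, which exactly cancels the search-cost term. Mixed strategies follow by linearity.

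For Part~2, the equality case when $\dstrati$ claims above $\thresholdi[i]$ is immediate from Claim~\ref{claim:strategy-equivalence}, which gives $\dstrati=\FtD^{\thresholdi[i]}(\DtF^{(\disti,\thresholdi[i])}(\dstrati))$: I would verify that $\DtF^{(\disti,\thresholdi[i])}(\dstrati)$ is a monotone (mixed) strategy whenever $\dbidi$ is monotone on $[0,\thresholdi[i])$, and then Part~1 applies verbatim. For the general inequality, I decompose $\utili^{\DA(\dists,\costs)}(\dstrats)-\utili^{\FPA(\dists^{\thresholds})}(\DtF^{(\dists,\thresholds)}(\dstrats))$ by conditioning on $\vali[i]<\thresholdi[i]$ versus $\vali[i]\geq\thresholdi[i]$. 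In the first region both auctions have identical bids, identical values, and identical joint distributions of the other bidders' bids/claim prices, so the contributions cancel. In the second region, writing $B_{-i}$ for the maximum of the other bidders' claim prices, the difference reduces to $\Ex{(\vali[i]-\thresholdi[i])\,\mathbf{1}[\vali[i]\geq\thresholdi[i]]\,\mathbf{1}[\dbidi(\vali[i])>B_{-i}]}-\costi\Prx{S_i}$.

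The main obstacle is bounding this residual, which I would handle via the inclusion $\mathbf{1}[\dbidi(\vali[i])>B_{-i}]\leq\mathbf{1}[S_i]$: having the largest claim price, which is at most $\dtimei[i]$, forces every bidder with a higher threshold to have claim price strictly below $\dtimei[i]$, so the descending price must have reached $\dtimei[i]$. Combining with the independence of $S_i$ from $\vali[i]$ and the index equation gives an upper bound of $\costi\Prx{S_i}$, which cancels the search-cost term and establishes $\utili^{\DA(\dists,\costs)}(\dstrats)\leq\utili^{\FPA(\dists^{\thresholds})}(\DtF^{(\dists,\thresholds)}(\dstrats))$. Equality forces $\mathbf{1}[\dbidi(\vali[i])>B_{-i}]=\mathbf{1}[S_i]$ on $\{\vali[i]\geq\thresholdi[i]\}$, which in turn forces $\dbidi(\vali[i])=\dtimei[i]$ for every $\vali[i]\geq\thresholdi[i]$, i.e., $\dstrati$ claims above $\thresholdi[i]$. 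Ties occur with probability zero under continuous distributions and otherwise are handled by the auction's tie-breaking rule.
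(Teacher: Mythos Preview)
The paper does not provide a proof; the result is cited as Claim~3 of \citet{KWW16} and used as a black box. Your argument is the standard one and is correct: couple the claim-price distribution in $\DA(\dists,\costs)$ with the bid distribution in $\FPA(\dists^{\thresholds})$ (the definition of $\DtF$ is rigged exactly so that each bidder's FPA bid has the same law as her DA claim price), observe that the bidder with the largest claim price always reaches her inspection threshold (so the winner and the payment coincide across the two auctions), and then use the index identity $\Ex[\vali\sim\disti]{(\vali-\thresholdi)^+}=\costi$ together with the independence of the inspection event $S_i$ from $\vali$ to cancel the value-truncation term $\Ex{(\vali-\thresholdi)^+\mathbf{1}[W_i]}$ against the search-cost term $\costi\Prx{S_i}$. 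The paper itself carries out exactly this cancellation in the proof of the neighboring Claim~\ref{cl:utility-pandora}.

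One genuine flaw: your opening move for Part~2, reducing the equality case to Part~1 via Claim~\ref{claim:strategy-equivalence}, does not go through. To apply Part~1 with $\fstrats=\DtF^{(\dists,\thresholds)}(\dstrats)$ you would need the entire profile $\fstrats$ to be monotone and to satisfy $\dstrats=\FtD^{\thresholds}(\fstrats)$; by Claim~\ref{claim:strategy-equivalence} the latter requires \emph{every} $\dstrati[k]$ to claim above $\thresholdi[k]$, whereas the hypothesis of Part~2 imposes this only on bidder~$i$, and Part~2 places no monotonicity assumption on any $\dbidi[k]$. This is harmless in the end, because your direct computation already delivers equality: when $\dstrati$ claims above $\thresholdi$, on $\{\vali\ge\thresholdi\}$ the claim price equals $\dtimei$, so conditional on $S_i$ bidder~$i$ claims at the current price and wins, giving $\mathbf{1}[W_i]=\mathbf{1}[S_i]$ on that event and hence a zero residual. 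You should simply delete the reduction via Claim~\ref{claim:strategy-equivalence} and let the direct argument carry both the inequality and the equality case.
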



\begin{thm}[\citealp{KWW16}]\label{thm:fpa-pandora-NE-BNE}
Given $\DA(\dists, \costs)$ and $\FPA(\dists^{\thresholds})$ where $\thresholds$ is the indices of $(\dists, \costs)$.
If $\fstrats$ is a BNE in $\FPA(\dists^{\thresholds})$, then $\FtD^{\thresholds}(\fstrats)$ is an NE in $\DA(\dists, \costs)$. Conversely, if $\dstrats$ is an NE in $\DA(\dists, \costs)$, then $\DtF^{(\dists, \thresholds)}(\dstrats)$ is an NE in $\FPA(\dists^{\thresholds})$. 
\end{thm}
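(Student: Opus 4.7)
The plan is to chain the utility identities of Theorem~\ref{thm:DA_FPA_transform} with the equilibrium best-response condition, taking care that each invocation of that theorem has its hypothesis (monotonicity for part~(1), or claim-above-threshold for the equality case of part~(2)) verified on the relevant strategy profile.

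\textbf{Forward direction.} Let $\fstrats$ be a BNE in $\FPA(\dists^{\thresholds})$; by Proposition~\ref{prop:monotone} we may assume it is monotone. Set $\dstrats = \FtD^{\thresholds}(\fstrats)$, so that Theorem~\ref{thm:DA_FPA_transform}(1) gives $\utili^{\DA(\dists, \costs)}(\dstrats) = \utili^{\FPA(\dists^{\thresholds})}(\fstrats)$. For any deviation $\dstrati'$ in the descending auction, Theorem~\ref{thm:DA_FPA_transform}(2) applied to the profile $(\dstrati', \dstratsmi)$ yields
\[
\utili^{\DA(\dists, \costs)}(\dstrati', \dstratsmi) \;\le\; \utili^{\FPA(\dists^{\thresholds})}\bigl(\DtF^{(\disti, \thresholdi)}(\dstrati'),\, \DtF^{(\dists_{-i}, \thresholds_{-i})}(\dstratsmi)\bigr).
\]
Claim~\ref{claim:strategy-equivalence}(2) applied coordinatewise to each monotone $\fstrati[j]$ collapses the second slot back to $\fstratsmi$. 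Since a BNE best response is also an ex ante best response (take expectation over $\vali$), the right-hand side is at most $\utili^{\FPA}(\fstrats) = \utili^{\DA}(\dstrats)$, giving the required NE inequality.

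\textbf{Reverse direction.} Let $\dstrats$ be an NE in $\DA(\dists, \costs)$. A preliminary reduction shows that each $\dstrati$ may be taken to have a monotone bidding function and to claim above $\thresholdi$: the monotonicity of $\dbidi(\cdot)$ follows from a DA-analog of Proposition~\ref{prop:monotone}; and, granted monotonicity, replacing $\dstrati$ by $\FtD^{\thresholdi}(\DtF^{(\disti, \thresholdi)}(\dstrati))$ produces a strategy that claims above $\thresholdi$ by construction and yields weakly higher utility against $\dstratsmi$, via Theorem~\ref{thm:DA_FPA_transform}(2) composed with the equality case of Theorem~\ref{thm:DA_FPA_transform}(1). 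Setting $\fstrats = \DtF^{(\dists, \thresholds)}(\dstrats)$, Claim~\ref{claim:strategy-equivalence}(1) recovers $\dstrats = \FtD^{\thresholds}(\fstrats)$, and the equality case of Theorem~\ref{thm:DA_FPA_transform}(2) gives $\utili^{\DA}(\dstrats) = \utili^{\FPA}(\fstrats)$. For any FPA deviation $\fstrati'$, which we may assume monotone by Proposition~\ref{prop:monotone}, the DA strategy $\dstrati' = \FtD^{\thresholdi}(\fstrati')$ claims above $\thresholdi$ by construction, so Theorem~\ref{thm:DA_FPA_transform}(2) with equality gives $\utili^{\DA}(\dstrati', \dstratsmi) = \utili^{\FPA}(\fstrati', \fstratsmi)$. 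NE-ness of $\dstrats$ in $\DA$ then transfers directly into NE-ness of $\fstrats$ in $\FPA$.

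The main obstacle is the claim-above-threshold reduction in the reverse direction. It is what lets the equality case of Theorem~\ref{thm:DA_FPA_transform}(2) apply symmetrically to both the candidate equilibrium strategy and any deviation, so that the utility landscapes of $\DA(\dists, \costs)$ and $\FPA(\dists^{\thresholds})$ coincide at the relevant points and the NE property crosses between them without loss. The remainder is careful bookkeeping of which monotonicity or claim-above-threshold hypothesis is in force whenever Theorem~\ref{thm:DA_FPA_transform} is invoked.
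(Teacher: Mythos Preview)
The paper does not prove this theorem; it is cited from \citet{KWW16}. The closest proved analogue in the paper is Corollary~\ref{lem:fpa-pandora-NE-intermediate}. Your forward direction is correct and follows exactly the pattern of item~1 of that corollary: apply Theorem~\ref{thm:DA_FPA_transform}(2) to the deviation, use Claim~\ref{claim:strategy-equivalence}(2) to collapse $\DtF^{(\distsmi,\thresholdsmi)}(\FtD^{\thresholdsmi}(\fstratsmi))$ back to~$\fstratsmi$, then invoke the equilibrium property and Theorem~\ref{thm:DA_FPA_transform}(1).

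The reverse direction has a gap in the claim-above-threshold reduction. You assert that replacing $\dstrati$ by $\FtD^{\thresholdi}(\DtF^{(\disti,\thresholdi)}(\dstrati))$ yields weakly higher utility \emph{against the fixed opponent profile~$\dstratsmi$}, via Theorem~\ref{thm:DA_FPA_transform}(2) composed with Theorem~\ref{thm:DA_FPA_transform}(1). But both parts of Theorem~\ref{thm:DA_FPA_transform} act on the \emph{entire} profile, so that composition only delivers
\[
\utili^{\DA(\dists,\costs)}(\dstrati,\dstratsmi) \;\le\; \utili^{\DA(\dists,\costs)}\Bigl(\FtD^{\thresholdi}(\DtF^{(\disti,\thresholdi)}(\dstrati)),\;\FtD^{\thresholdsmi}(\DtF^{(\distsmi,\thresholdsmi)}(\dstratsmi))\Bigr),
\]
and the opponent slot equals $\dstratsmi$ only when $\dstratsmi$ already claims above~$\thresholdsmi$ (Claim~\ref{claim:strategy-equivalence}(1))---precisely the property you are trying to establish. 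The reduction is therefore circular. Note also that even if the unilateral replacement were justified for each~$i$, replacing all coordinates simultaneously need not preserve the NE property, and the theorem concerns $\DtF^{(\dists,\thresholds)}(\dstrats)$ for the \emph{given}~$\dstrats$, not for a modified profile.

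The paper sidesteps this entirely in Corollary~\ref{lem:fpa-pandora-NE-intermediate}(2) by taking ``$\dstrats$ claims above the threshold'' as a hypothesis. For the unconditional statement of Theorem~\ref{thm:fpa-pandora-NE-BNE} one needs a direct, unilateral-in-$i$ argument that a best response in $\DA(\dists,\costs)$ can always be taken to claim above~$\thresholdi$ regardless of~$\dstratsmi$; this uses the covered-call-value accounting of \citet{KWW16} (cf.\ Claim~\ref{cl:utility-pandora} in the appendix) and is not obtainable just by composing Theorem~\ref{thm:DA_FPA_transform} with itself.
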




Finally, we review a welfare guarantee shown by \citeauthor{KWW16}. 
Combining Theorem~\ref{thm:fpa-pandora-NE-BNE} with known bound on the Price of Anarchy for the first price auction \citep{ST13}, \citeauthor{KWW16} concluded  that the welfare of an NE in a Dutch auction with search costs is at least a $(1-1/e)$-fraction of the maximum expected welfare.\footnote{\cite{jin_first_2022} recently improve the Price of Anarchy bound for the first price auction to $1-1/e^2$.  We do not know whether this bound also holds for the Dutch auction with search costs.}

For our purpose, we generalize their conclusion to $\eps$-NE.  
Formally, let $\varalloc_i(\dstrats)$ be an indicator variable for whether bidder~$i$ receives the item, and let $\varinspect_i(\dstrats)$ be an indicator variable for whether bidder~$i$ inspects her value.
The social welfare of a strategy profile $\dstrats$ is 
\begin{equation}
    \SW^{\DA(\dists, \costs)}(\dstrats)  = \Ex{\sum_{i=1}^n \left(\varalloc_i(\dstrats)\vali - \varinspect_i(\dstrats)\costi \right)},
\end{equation}
where the randomness is over $\vals\sim\dists$ and the randomness of mixed strategies.
Let $\OPT^{(\dists, \costs)}$ be the maximum expected welfare, obtained by Pandora's Box algorithm (Theorem~\ref{thm:optimal-pandora}) on distributions $\disti[1], \ldots, \disti[n]$ and costs $\costi[1], \ldots, \costi[n]$.
\begin{restatable}[Corollary 1 and Theorem 1 of \citealp{KWW16}]{lemma}{optimalwelfare} \label{lem:optimal-welfare}
Let $\thresholdi$ be the index of $(\disti, \costi)$ and $\kappa_i=\min\{\vali, \thresholdi\}$, then 
$\OPT^{(\dists, \costs)} = \Ex{\max_{i\in[n]}\kappa_i}$. 
\end{restatable}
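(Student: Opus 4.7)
The plan is to establish matching upper and lower bounds on the optimal expected payoff, both equal to $\Ex{\max_{i\in[n]}\kappa_i}$; since $\OPT^{(\dists,\costs)}$ is, by definition, the value of the optimal Pandora's Box procedure, combining the two bounds gives the lemma.

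For the upper bound, let $S\subseteq[n]$ denote the (random) set of boxes opened and $\tau$ the (random) box chosen by an arbitrary valid procedure (set $\vali[\tau]=0$ if nothing is taken). The expected payoff equals $\Ex{\vali[\tau]-\sum_{i\in S}\costi}$. The central decomposition is the pointwise identity $\vali[\tau]=\kappa_\tau+(\vali[\tau]-\thresholdi[\tau])^+$. Crucially, because a valid procedure decides whether to open box~$i$ based only on values revealed \emph{before} box~$i$ is opened, the event $\{i\in S\}$ is independent of $\vali$; combined with the definitional identity $\costi=\Ex[\vali\sim\disti]{(\vali-\thresholdi)^+}$ (Definition~\ref{def:fpa-pandora-index}), this gives
\[
\Ex{\sum_{i\in S}\costi}=\sum_{i=1}^n\Pr[i\in S]\cdot\costi=\Ex{\sum_{i\in S}(\vali-\thresholdi)^+}\ge\Ex{(\vali[\tau]-\thresholdi[\tau])^+}.
\]
Substituting back yields $\Ex{\vali[\tau]-\sum_{i\in S}\costi}\le\Ex{\kappa_\tau}\le\Ex{\max_{i\in[n]}\kappa_i}$.

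For the matching achievability, sort boxes so that $\thresholdi[1]\ge\cdots\ge\thresholdi[n]$, and let $k^*$ be the (random) index of the last box opened by the optimal index-based algorithm of Theorem~\ref{thm:optimal-pandora}. I would show two things that upgrade both inequalities above to equalities. First, for every $i<k^*$ the algorithm did not stop after step~$i$, so $\vali\le\max_{j\le i}\vali[j]<\thresholdi[i+1]\le\thresholdi$ and hence $(\vali-\thresholdi)^+=0$; moreover, if $\tau<k^*$ then $\vali[\tau]=\max_{j<k^*}\vali[j]<\thresholdi[k^*]$ together with $\vali[k^*]\le\vali[\tau]$ forces $(\vali[k^*]-\thresholdi[k^*])^+=0$ as well. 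Hence almost surely $\sum_{i\in S}(\vali-\thresholdi)^+=(\vali[\tau]-\thresholdi[\tau])^+$, so $\Ex{\vali[\tau]-\sum_{i\in S}\costi}=\Ex{\kappa_\tau}$. Second, I would verify $\kappa_\tau=\max_{i\in[n]}\kappa_i$ almost surely by a brief case analysis on whether $\tau=k^*$ and whether $\vali[\tau]>\thresholdi[\tau]$: every opened $i\ne\tau$ satisfies $\kappa_i=\vali\le\vali[\tau]$ (using $\vali<\thresholdi$ from the previous step), while every unopened $i>k^*$ satisfies $\kappa_i\le\thresholdi\le\thresholdi[k^*+1]\le\max_{j\le k^*}\vali[j]=\vali[\tau]$ by the stopping criterion; in each sub-case one then checks that $\kappa_\tau$ dominates.

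The main obstacle will be this final case analysis, specifically the sub-case where $\vali[\tau]>\thresholdi[\tau]$ so that $\kappa_\tau=\thresholdi[\tau]$ is strictly less than $\vali[\tau]$: one must then exploit the monotonicity $\thresholdi[\tau]\ge\thresholdi[k^*+1]$ together with the stopping rule to conclude $\kappa_\tau\ge\kappa_i$ for every other~$i$. Once checked, the achievability bound matches the upper bound and the lemma follows.
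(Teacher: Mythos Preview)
The paper does not give its own proof of this lemma; it is quoted directly from \citet{KWW16} and used as a black box (the only related computation in the paper is Claim~\ref{cl:utility-pandora}, which applies the same independence-of-inspection-and-value trick to derive $\Ex{\varalloc_i\vali-\varinspect_i\costi}=\Ex{\varalloc_i\kappa_i}$ for strategies that claim above the index). Your argument is correct and is precisely the classical proof: the upper bound is Weitzman's amortization (replace each cost $\costi$ by the equivalent random charge $(\vali-\thresholdi)^+$, using that the event $\{i\in S\}$ is decided before $\vali$ is revealed), and the lower bound verifies that the index policy attains it.

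One small simplification for your final case analysis: the sub-case $\vali[\tau]>\thresholdi[\tau]$ can only occur when $\tau=k^*$, since you already established $\vali[\tau]<\thresholdi[\tau]$ whenever $\tau<k^*$. With $\tau=k^*$, every other opened box has index $i<k^*$ and hence $\kappa_i=\vali<\thresholdi[k^*]=\thresholdi[\tau]=\kappa_\tau$, while every unopened $i>k^*$ has $\kappa_i\le\thresholdi\le\thresholdi[k^*]\le\thresholdi[\tau]=\kappa_\tau$; this disposes of the ``main obstacle'' you flagged without further sub-cases.
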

\begin{restatable}[A slight generalization of \citealp{KWW16}]{thm}{epsNEpoa}
\label{thm:epsNEpoa}
Suppose $\dstrats$ is an $\eps$-NE in $\DA(\dists, \costs)$, then $\SW^{\DA(\dists, \costs)}(\dstrats) \ge (1-\frac{1}{e})\OPT^{(\dists, \costs)} - n\eps$. 
\end{restatable}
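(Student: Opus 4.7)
The plan is to generalize KWW16's original exact-NE proof to $\eps$-NE by carefully tracking the $\eps$ slack through each stage, yielding a final additive loss of $n\eps$. Concretely, I will translate $\dstrats$ into an $\eps$-NE of $\FPA(\dists^{\thresholds})$, invoke the smoothness-based price-of-anarchy bound of the FPA at $\eps$-NE, and translate the resulting welfare inequality back to the DA using Lemma~\ref{lem:optimal-welfare}.

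For the first step, I set $\fstrats \coloneqq \DtF^{(\dists,\thresholds)}(\dstrats)$ and show it is an $\eps$-NE of $\FPA(\dists^{\thresholds})$. Fix any bidder $i$ and any monotone deviation $\fstrati'$ in the FPA; define $\tilde{\dstrati}' \coloneqq \FtD^{\thresholdi}(\fstrati')$, which by construction claims above $\thresholdi$ and, by Claim~\ref{claim:strategy-equivalence} Part 2, satisfies $\DtF^{(\disti, \thresholdi)}(\tilde{\dstrati}') = \fstrati'$. Chaining Theorem~\ref{thm:DA_FPA_transform} Part 2 (an inequality at $\dstrati$ and an equality at $\tilde{\dstrati}'$ because the latter claims above) with the $\eps$-NE condition on $\dstrats$ gives
\[
\utili^{\FPA(\dists^{\thresholds})}(\fstrati,\fstratsmi) \;\geq\; \utili^{\DA(\dists, \costs)}(\dstrati,\dstratsmi) \;\geq\; \utili^{\DA(\dists, \costs)}(\tilde{\dstrati}',\dstratsmi) - \eps \;=\; \utili^{\FPA(\dists^{\thresholds})}(\fstrati',\fstratsmi) - \eps.
\]
Restricting to monotone deviations in the FPA is without loss of generality by Proposition~\ref{prop:monotone}, so $\fstrats$ is an $\eps$-NE of $\FPA(\dists^{\thresholds})$.

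For the second step, the FPA is $(1-1/e)$-semi-smooth in the sense of \citet{ST13}, the exact fact KWW16 invoke for exact NE. I will apply the semi-smoothness inequality together with the $\eps$-best-response condition of $\fstrats$, summed across the $n$ bidders, to obtain $\SW^{\FPA(\dists^{\thresholds})}(\fstrats) \geq (1-1/e)\,\OPT^{\FPA(\dists^{\thresholds})} - n\eps$; the $n\eps$ term arises precisely because smoothness adds one $\eps$ slack per bidder. For the third step, Lemma~\ref{lem:optimal-welfare} gives $\OPT^{\FPA(\dists^{\thresholds})} = \Ex{\max_i \kappa_i} = \OPT^{(\dists,\costs)}$, and I will use KWW16's realization-by-realization welfare equivalence $\SW^{\DA(\dists,\costs)}(\FtD^{\thresholds}(\fstrats)) = \SW^{\FPA(\dists^{\thresholds})}(\fstrats)$ for monotone $\fstrats$ to transfer the bound to the DA.

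The main obstacle is Step 3 when $\dstrats$ does not already claim above its thresholds, so that $\FtD^{\thresholds}(\fstrats) \ne \dstrats$. The utility inequality of Theorem~\ref{thm:DA_FPA_transform} Part 2 does not translate directly into a welfare inequality (welfare decomposes as utilities plus payments, and payments may move either way). I expect to handle this by replacing each $\dstrati$ with its claim-above variant one bidder at a time, arguing realization by realization that the substitution never decreases social welfare: intuitively, stopping the search earlier upon discovering a value above $\thresholdi$ either locks in a contribution of $\vali$ that would otherwise be forfeited or reallocates the item to a bidder with a comparably useful value; meanwhile the $\eps$-NE hypothesis guarantees that each such substitution is at most an $\eps$-deviation for the modifying bidder, so the accumulated loss is already contained in the $n\eps$ budget.
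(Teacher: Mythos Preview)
Your Steps~1 and~2 are sound: $\fstrats = \DtF^{(\dists,\thresholds)}(\dstrats)$ is indeed an $\eps$-NE of $\FPA(\dists^{\thresholds})$, and the semi-smoothness argument yields $\SW^{\FPA(\dists^{\thresholds})}(\fstrats) \geq (1-\tfrac{1}{e})\OPT^{(\dists,\costs)} - n\eps$.  The gap is entirely in Step~3.  The covered-call identity (the computation behind Theorem~\ref{thm:DA_FPA_transform}) gives
\[
\Ex{\varalloc_i(\dstrats)\vali - \varinspect_i(\dstrats)\costi} \;\leq\; \Ex{\varalloc_i(\dstrats)\,\kappa_i},
\]
with equality only for claim-above strategies; since the map $\DtF$ preserves the bid (hence allocation) distribution, summing yields $\SW^{\DA(\dists,\costs)}(\dstrats) \leq \SW^{\FPA(\dists^{\thresholds})}(\fstrats)$.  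This inequality points the wrong way, so your lower bound on $\SW^{\FPA}$ does not transfer.  Your proposed fix does not close this: replacing $\dstrati$ by its claim-above variant changes the allocation to other bidders and need not increase welfare realization by realization (a bidder with $\vali$ barely above $\thresholdi$ can now displace a much higher-value competitor); and the $\eps$-NE hypothesis controls only $\utili$ at the \emph{original} profile $\dstratsmi$, so after the first substitution it no longer applies.  Even if it did, that would be a \emph{second} $n\eps$ loss, not the same one.

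The paper avoids the back-translation altogether by running the smoothness argument directly in $\DA(\dists,\costs)$.  It writes $\SW^{\DA}(\dstrats) = \sum_i \utili^{\DA}(\dstrats) + \sum_i \Ex{\varalloc_i(\dstrats)\dbidi(\vali)}$, applies the $\eps$-NE of $\dstrats$ to deviate each bidder to a claim-above strategy $\dstrati'$ (the DA image of the standard FPA smoothness deviation), and then uses the covered-call \emph{equality} for $\dstrati'$ to turn $\utili^{\DA}(\dstrati',\dstratsmi)$ into a $\kappa$-value expression.  Equivalently, in your language: do not translate $\dstrats$ to the FPA; translate only the \emph{deviation} $\fstrati^*$ back to the DA via $\FtD^{\thresholdi}$, apply the $\eps$-NE of $\dstrats$ there, and use Theorem~\ref{thm:DA_FPA_transform} Part~2 (with equality, since $\FtD^{\thresholdi}(\fstrati^*)$ claims above) to identify $\utili^{\DA}(\FtD^{\thresholdi}(\fstrati^*),\dstratsmi) = \utili^{\FPA}(\fstrati^*,\fstratsmi)$.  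Summing and adding the (equal) revenues gives the bound with a single $n\eps$ loss.
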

\noindent The proof of Theorem~\ref{thm:epsNEpoa} follows the smoothness framework \citep{ST13} and is given in Appendix~\ref{sec:proof-thm:epsNEpoa}.

\subsubsection{Transformation with Samples}
\label{sec:sample-KWW16}

We are now ready to present our learning results on auctions with search costs.
In \citet{KWW16}, the utility- and equilibrium-preserving mappings $\FtD^{\thresholds}$ and $\DtF^{(\dists, \thresholds)}$ depend on the value distributions.
We examine the number of samples needed to compute approximations of these mappings, when the value distributions are unknown.
We find that, given search costs and value samples, $\tilde O(1 / \eps^2)$ samples 
suffice to construct mappings between strategies that approximately preserve utility; with $\tilde O(n / \eps^2)$ samples, any equilibrium of the first price auction without search costs on a transformed empirical distribution can be mapped to an approximate equilibrium of the descending auction on the true distribution.
By Theorem~\ref{thm:epsNEpoa}, such an approximate equilibrium in the descending auction must obtain a $(1 - 1/e)$-approximation to the optimal welfare. 
To make use of this result, a market designer could collect $\tilde O(n / \eps^2)$ value samples to compute an approximate Nash in the said FPA, which then maps to an approximate Nash in the Dutch auction.  This approximate Nash can serve as bidding guidance for the participants, and guarantees approximate efficiency of the market.



When value distribution $\disti$'s are unknown (but cost~$\costi$'s are known), the mapping~$\FtD^{\thresholds}$ 
is also unknown, 
because each index~$\thresholdi$ is determined by the distribution~$\disti$. 
Instead, we estimate an index~$\hat{\threshold}_i$ from samples and use the corresponding mapping~$\FtD^{\hat{\thresholds}}$. 

\begin{defn}\label{defn:fpa-pandora-empirical}
Partition the samples $\samples$ into two sets, $\samples^A$ and $\samples^B$, each of size $\nsample/2$. 
Denote the empirical product distributions on $\samples^A$ and $\samples^B$ as $\empDists^A$ and $\empDists$, respectively. 
The \emph{empirical indices} are the indices $\hat{\thresholds}$ for $(\empDists^A, \costs)$; namely, $\hat{\threshold}_i$ is the unique solution to
$\Ex[\vali\sim\empDisti^A]{\max\{\vali - \hat{\threshold}_i, 0\}} = \costi$.
The \emph{empirical counterpart} of $\DA(\dists, \costs)$ is $\FPA(\empDists^{\hat{\thresholds}})$. 
The \emph{empirical mappings} are $\FtD^{\hat{\thresholds}}$ and $\DtF^{(\dists, \hat{\thresholds})}$, computed as in Definition~\ref{def:strategy-mappings}.
\end{defn}

Note that $\DtF^{(\dists, \hat{\thresholds})}$ depends on distributions while $\FtD^{\hat{\thresholds}}$ does not.  The following theorem, analogous to Theorem~\ref{thm:DA_FPA_transform}, shows that the empirical mappings $\FtD^{\hat{\thresholds}}$ and $\DtF^{(\dists, \hat{\thresholds})}$ approximately preserve the utilities with high probability.  
\begin{thm}\label{thm:fpa-pandora-utility-intermediate}
For any $\eps, \delta > 0$, there is 
$M = O \left(\frac{H^2}{\eps^2} \left[\log\left(\frac{H}{\eps} \right) + \log\left(\frac{n}{\delta}\right)\right]\right)$, 
such that for all $\nsample > M$, with probability at least $1-\delta$ over the random draw of $\samples^A$, 
\begin{enumerate}
\item For any monotone mixed strategy profile $\fstrats$ in $\FPA(\dists^{\hat{\thresholds}})$, 
for each bidder~$i$, 
\[ \left| \utili^{\FPA(\dists^{\hat{\thresholds}})}(\fstrats) - \utili^{\DA(\dists, \costs)}(\FtD^{\hat{\thresholds}}(\fstrats)) \right| \le \eps.\]
\item For any mixed strategy profile $\dstrats$ in $\DA(\dists, \costs)$, for each bidder~$i$,
\[ \utili^{\DA(\dists, \costs)}(\dstrats) \le \utili^{\FPA(\dists^{\hat{\thresholds}})}(\DtF^{(\dists, \hat{\thresholds})}(\dstrats)) + \eps.\]
If $\dstrati$ claims above $\hat{\threshold}_i$, then we also have $\utili^{\DA(\dists, \costs)}(\dstrats) \ge \utili^{\FPA(\dists^{\hat{\thresholds}})}(\DtF^{(\dists, \hat{\thresholds})}(\dstrats)) - \eps$. 
\end{enumerate}

\end{thm}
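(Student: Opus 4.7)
The plan is to introduce ``virtual'' search costs $\hat{\costs}$ for which $\hat{\thresholds}$ are, by construction, the exact indices with respect to the true distribution $\dists$ (in the sense of Definition~\ref{def:fpa-pandora-index}), apply Theorem~\ref{thm:DA_FPA_transform} to the auxiliary instance $(\dists, \hat{\costs})$, and then argue that replacing $\hat{\costs}$ by the true $\costs$ perturbs every bidder's utility in the descending auction by at most $\eps$. Formally, set
\[\hat{\costi} \;\coloneqq\; \Ex[\vali \sim \disti]{\max\{\vali - \hat{\thresholdi}, 0\}}.\]
Then $\hat{\thresholds}$ is the vector of true indices of $(\dists, \hat{\costs})$, so Theorem~\ref{thm:DA_FPA_transform} gives the exact equality $\utili^{\FPA(\dists^{\hat{\thresholds}})}(\fstrats) = \utili^{\DA(\dists, \hat{\costs})}(\FtD^{\hat{\thresholds}}(\fstrats))$ and the analogous statement for $\DtF^{(\dists, \hat{\thresholds})}$, \emph{with $\hat{\costs}$ in place of $\costs$}.

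The key estimate is that $\max_i |\hat{\costi} - \costi| \le \eps$ with probability at least $1 - \delta$. Since the definition of $\hat{\thresholdi}$ forces $\costi = \Ex[\vali \sim \empDisti^A]{\max\{\vali - \hat{\thresholdi}, 0\}}$, the gap $|\hat{\costi} - \costi|$ is exactly the empirical-vs-true expectation deviation of the single function $v \mapsto \max\{v - \hat{\thresholdi}, 0\}$ inside the class $\funcClass^\star \coloneqq \{v \mapsto \max\{v - r, 0\} : r \in [0, H]\}$; because $\hat{\thresholdi}$ is itself data-dependent, I need \emph{uniform} convergence over~$r$. A direct counting argument shows $\Pdim(\funcClass^\star) = O(1)$: on any sample $v^j$ with witness $w^j$, the label $\sgn(\max\{v^j - r, 0\} - w^j)$ depends only on whether $r < v^j - w^j$, so as $r$ varies over $\mathbb{R}$ at most $m+1$ distinct sign patterns arise on $m$ samples, forcing $m \le 1$. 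Plugging this into Theorem~\ref{thm:pseudo-dimension} with range $[0, H]$ and union-bounding over $i \in [n]$ yields the stated~$M$.

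To finish, I would couple $\DA(\dists, \costs)$ with $\DA(\dists, \hat{\costs})$ under any common mixed strategy profile: strategies do not refer to the cost vector, so under any common value realization and any common mixed-strategy randomness, the inspection-and-claim behaviour is identical in the two auctions and only the fee charged upon inspection differs. Each bidder inspects at most once, so the two ex ante utilities differ by at most $|\costi - \hat{\costi}| \le \eps$. Chaining this $\eps$-perturbation with the equality from Theorem~\ref{thm:DA_FPA_transform} applied at $(\dists, \hat{\costs})$ gives part~1 directly, and combines with the inequality/equality dichotomy of Theorem~\ref{thm:DA_FPA_transform} (equality when $\dstrati$ claims above $\hat{\thresholdi}$) to give part~2. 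The main subtlety I anticipate is the boundary case where $\hat{\costi}$ is slightly larger than $\Ex[\vali \sim \disti]{\vali}$ and the virtual index $\hat{\thresholdi}$ formally escapes $[0, H]$; this is easily handled by clipping $\hat{\thresholdi}$ at $H$, which costs only an additional $O(\eps)$ in the final utility bound.
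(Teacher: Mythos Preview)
Your proposal is correct and follows essentially the same route as the paper: define virtual costs $\hat{\costs}$ so that $\hat{\thresholds}$ are the exact indices for $(\dists,\hat{\costs})$, invoke Theorem~\ref{thm:DA_FPA_transform} there, and then couple $\DA(\dists,\costs)$ with $\DA(\dists,\hat{\costs})$ to absorb the $|\costi-\hat{\cost}_i|\le\eps$ error obtained via a pseudo-dimension bound on $\{v\mapsto\max\{v-r,0\}\}$. Your single-breakpoint counting for the pseudo-dimension is in fact a bit sharper (yielding $\Pdim\le 1$) than the paper's two-stage partition giving $(d+1)^2$ cells, but the overall argument and resulting sample bound are the same.
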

\noindent 
Before proving Theorem~\ref{thm:fpa-pandora-utility-intermediate}, 
we first derive a few important consequences.

\begin{corollary}\label{lem:fpa-pandora-NE-intermediate}
For any $\eps, \delta > 0$, and $\nsample > M$ as in the condition of
Theorem~\ref{thm:fpa-pandora-utility-intermediate}, with probability at least $1-\delta$,  
\begin{enumerate}
\item For any monotone strategy profile $\fstrats$, 
if $\fstrats$ is an $\eps'$-NE in $\FPA(\dists^{\hat{\thresholds}})$, then $\FtD^{\hat{\thresholds}}(\fstrats)$ is an $(\eps'+2\eps)$-NE in $\DA(\dists, \costs)$. 
\item Conversely, for any strategy profile $\dstrats$ that claims above $\hat{\thresholds}$, if $\dstrats$ is an $\eps'$-NE in $\DA(\dists, \costs)$, then $\DtF^{(\dists, \hat{\thresholds})}(\dstrats)$ is an $(\eps'+2\eps)$-NE in $\FPA(\dists^{\hat{\thresholds}})$. 
\end{enumerate}
\end{corollary}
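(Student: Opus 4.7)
The plan is to translate the proof of Theorem~\ref{thm:fpa-pandora-NE-BNE} from \citet{KWW16} to the approximate setting, with the two-sided utility approximation of Theorem~\ref{thm:fpa-pandora-utility-intermediate} contributing the extra $2\eps$ slack on top of the $\eps'$ slack of the source equilibrium. In both parts I will condition on the high-probability event of Theorem~\ref{thm:fpa-pandora-utility-intermediate}. For each bidder~$i$ and each deviation, I will build a four-link chain: (i) the equilibrium profile's utility is approximated within $\eps$ by its image under the appropriate mapping; (ii) the $\eps'$-NE inequality in the source auction; (iii) the deviation's utility is approximated within $\eps$ under the inverse mapping; (iv) a composition identity from Claim~\ref{claim:strategy-equivalence} that replaces a nested $\DtF\circ\FtD$ (or $\FtD\circ\DtF$) with the identity so the ``image of the deviation'' lives inside the target profile.

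For part~1, fix bidder~$i$ and any DA deviation~$\dstrati'$. Since $\fstrats$ is monotone, part~1 of Theorem~\ref{thm:fpa-pandora-utility-intermediate} gives $\utili^{\DA(\dists,\costs)}(\FtD^{\hat{\thresholds}}(\fstrats)) \ge \utili^{\FPA(\dists^{\hat{\thresholds}})}(\fstrats) - \eps$. Applying the unconditional upper-bound direction of part~2 of the same theorem to the DA profile formed by substituting $\dstrati'$ for the $i$-th component of $\FtD^{\hat{\thresholds}}(\fstrats)$ yields an upper bound involving the FPA utility of $\bigl(\DtF^{(\disti,\hat{\threshold}_i)}(\dstrati'),\,\DtF^{(\distsmi,\hat{\thresholds})}(\FtD^{\hat{\thresholds}}(\fstratsmi))\bigr)$. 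Because $\fstratsmi$ is monotone, Claim~\ref{claim:strategy-equivalence}(2) collapses the nested mapping to $\fstratsmi$. Inserting the $\eps'$-NE inequality for $\FPA(\dists^{\hat{\thresholds}})$ between the two endpoints of the chain produces the desired $\eps'+2\eps$ bound.

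For part~2, fix bidder~$i$ and, without loss of generality by Proposition~\ref{prop:monotone}, a monotone FPA deviation~$\fstrati'$. Let $\tilde{\dstrati}'=\FtD^{\hat{\threshold}_i}(\fstrati')$. Since $\FtD$ always produces a strategy that claims above $\hat{\threshold}_i$, the hypothesis on $\dstrats$ implies that the full DA profile $(\tilde{\dstrati}', \dstratsmi)$ also claims above~$\hat{\thresholds}$, so the two-sided conclusion of part~2 of Theorem~\ref{thm:fpa-pandora-utility-intermediate} applies simultaneously to this deviation profile and to the equilibrium profile~$\dstrats$. Claim~\ref{claim:strategy-equivalence}(2) applied to the monotone $\fstrati'$ yields $\DtF^{(\disti,\hat{\threshold}_i)}(\tilde{\dstrati}') = \fstrati'$. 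Chaining the two-sided approximation on $\dstrats$, the $\eps'$-NE condition in DA applied to the deviation $\tilde{\dstrati}'$, and the two-sided approximation on $(\tilde{\dstrati}', \dstratsmi)$ then gives $\utili^{\FPA}(\DtF^{(\dists,\hat{\thresholds})}(\dstrats)) \ge \utili^{\FPA}(\fstrati', \DtF^{(\distsmi,\hat{\thresholds})}(\dstratsmi)) - (\eps'+2\eps)$.

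The delicate point is the asymmetry of Theorem~\ref{thm:fpa-pandora-utility-intermediate}: its FPA-to-DA direction requires monotonicity of the FPA profile, while its DA-to-FPA direction is two-sided only when the DA profile claims above $\hat{\thresholds}$. The plan above threads this needle by using, in part~1, monotonicity of the source $\fstrats$ together with only the one-sided upper bound on the deviation's DA utility (where neither monotonicity nor ``claims above'' is needed), and in part~2 by exploiting that $\FtD$ \emph{automatically} produces DA strategies claiming above $\hat{\thresholds}$, which combined with the hypothesis on~$\dstrats$ keeps both profiles inside the regime where the two-sided bound is available. Once the pairing of inequalities is set up correctly, the remainder is routine bookkeeping with the composition identities in Claim~\ref{claim:strategy-equivalence}.
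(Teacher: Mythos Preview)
Your proposal is correct and follows the same four-link chain as the paper: two $\eps$-slack steps from Theorem~\ref{thm:fpa-pandora-utility-intermediate}, the $\eps'$-NE inequality in the source auction, and a composition identity from Claim~\ref{claim:strategy-equivalence} to collapse the nested mappings. Part~1 is identical to the paper's argument. In Part~2 you make a dual choice: the paper applies item~1 of Theorem~\ref{thm:fpa-pandora-utility-intermediate} to the FPA profile $(\fstrati',\DtF^{(\dists,\hat{\thresholds})}(\dstratsmi))$ and then invokes Claim~\ref{claim:strategy-equivalence}(1) on the opponents, whereas you apply the ``claims above'' direction of item~2 to the DA profile $(\FtD^{\hat{\threshold}_i}(\fstrati'),\dstratsmi)$ and invoke Claim~\ref{claim:strategy-equivalence}(2) on bidder~$i$. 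Both routes yield the same bound; your version has the mild advantage that it does not require checking monotonicity of $\DtF^{(\dists,\hat{\thresholds})}(\dstratsmi)$, which item~1 of Theorem~\ref{thm:fpa-pandora-utility-intermediate} formally demands.
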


\begin{proof}
We prove the two items respectively, 
\begin{enumerate}
\item Let $\fstrats=(\fstrati, \fstratsmi)$ be an $\eps'$-NE in $\FPA(\dists^{\hat{\thresholds}})$ satisfying the condition in the statement.  For any strategy $\dstrati$, by Theorem~\ref{thm:fpa-pandora-utility-intermediate} item 2, 
\begin{align*}
    \utili^{\DA(\dists, \costs)}(\dstrati, \FtD^{\hat{\thresholds}}(\fstratsmi))
    \le  \utili^{\FPA(\dists^{\hat{\thresholds}})}(\DtF^{(\dists, \hat{\thresholds})}(\dstrati), \DtF^{(\dists, \hat{\thresholds})}(\FtD^{\hat{\thresholds}}(\fstratsmi)))  + \eps. & 
\end{align*}
Since $\fstratsmi$ is monotone, by Claim~\ref{claim:strategy-equivalence} item 2, we have $\DtF^{(\dists, \hat{\thresholds})}(\FtD^{\hat{\thresholds}}(\fstratsmi)) = \fstratsmi$. Thus, 
\begin{align*}
    \utili^{\DA(\dists, \costs)}(\dstrati, \FtD^{\hat{\thresholds}}(\fstratsmi))
    & \le \utili^{\FPA(\dists^{\hat{\thresholds}})}(\DtF^{(\dists, \hat{\thresholds})}(\dstrati), \fstratsmi) + \eps &\\
    \shortintertext{\hfill $\fstrats$ is an $\eps'$-NE in $\FPA(\dists^{\hat{\thresholds}})$}
    & \le \utili^{\FPA(\dists^{\hat{\thresholds}})}(\fstrats) + \eps' + \eps & \\
    \shortintertext{\hfill Theorem \ref{thm:fpa-pandora-utility-intermediate} item 1}
    & \le \utili^{\DA(\dists, \costs)}(\FtD^{\hat{\thresholds}}(\fstrats)) + \eps' + 2\eps. & 
\end{align*}

\item For any strategy $\fstrati$, by Proposition~\ref{prop:monotone}, there exists some monotone strategy $\fstrati'$, such that 
\begin{align*}
    \utili^{\FPA(\dists^{\hat{\thresholds}})}(\fstrati, \DtF^{(\dists, \hat{\thresholds})}(\dstratsmi)) \le \utili^{\FPA(\dists^{\hat{\thresholds}})}(\fstrati', \DtF^{(\dists, \hat{\thresholds})}(\dstratsmi)). 
\end{align*}
Then by Theorem~\ref{thm:fpa-pandora-utility-intermediate} item 1, 
\begin{align*}
    \utili^{\FPA(\dists^{\hat{\thresholds}})}(\fstrati', \DtF^{(\dists, \hat{\thresholds})}(\dstratsmi)) & \le \utili^{\DA(\dists, \costs)}(\FtD^{\hat{\thresholds}}(\fstrati'), \FtD^{\hat{\thresholds}}(\DtF^{(\dists, \hat{\thresholds})}(\dstratsmi))) + \eps.
\end{align*}
Since $\dstratsmi$ claims above $\hat{\thresholds}_{-i}$, by Claim~\ref{claim:strategy-equivalence} item 1, we have $\FtD^{\hat{\thresholds}}(\DtF^{(\dists, \hat{\thresholds})}(\dstratsmi)) = \dstratsmi$. Thus 
\begin{align*}
    \utili^{\FPA(\dists^{\hat{\thresholds}})}(\fstrati, \DtF^{(\dists, \hat{\thresholds})}(\dstratsmi)) & \le \utili^{\DA(\dists, \costs)}(\FtD^{\hat{\thresholds}}(\fstrati'), \dstratsmi) + \eps & \\
    \shortintertext{\hfill $\dstrats$ is an $\eps'$-NE in $\DA(\dists, \costs)$}
    & \le \utili^{\DA(\dists, \costs)}(\dstrats) + \eps' + \eps & \\
    \shortintertext{\hfill Theorem \ref{thm:fpa-pandora-utility-intermediate} item 2}
    & \le \utili^{\FPA(\dists^{\hat{\thresholds}})}(\DtF^{(\dists, \hat{\thresholds})}(\dstrats)) + \eps' + 2\eps. & 
\end{align*}
\end{enumerate}
\end{proof}


As a consequence of Corollary~\ref{lem:fpa-pandora-NE-intermediate},  Corollary~\ref{cor:find-BNE} and Theorem~\ref{thm:epsNEpoa}, any approximate BNE in $\FPA(\empDists^{\hat{\thresholds}})$ is transformed by $\FtD^{\hat{\thresholds}}$ to a nearly efficient approximate NE in $\DA(\dists, \costs)$, as formalized by the following theorem.
\begin{restatable}{thm}{fpapandorane}
\label{thm:fpa-pandora-NE-v2}
For any $\eps, \eps', \delta > 0$, there is $M = O \left(\frac{H^2}{\eps^2} \left[n\log n\log \left(\frac{H}{\eps} \right) + \log\left(\frac{n}{\delta}\right)\right]\right)$, such that for all $\nsample > M$, with probability at least $1-\delta$ over random draws of samples~$\samples$, we have: for
any monotone strategy profile $\fstrats$ that is an $\eps'$-BNE in $\FPA(\empDists^{\hat{\thresholds}})$, $\FtD^{\hat{\thresholds}}(\fstrats)$
is an $(\eps'+4\eps)$-NE in $\DA(\dists, \costs)$; 
moreover, $\SW^{\DA(\dists, \costs)}(\FtD^{\hat{\thresholds}}(\fstrats)) \ge (1-\frac{1}{e})\OPT^{(\dists, \costs)} - n(\eps'+4\eps)$
\end{restatable}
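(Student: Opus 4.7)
The plan is to chain three previously established results: Corollary~\ref{cor:find-BNE} (approximate BNE in the FPA on the empirical product distribution transfers to the FPA on the true product distribution), Corollary~\ref{lem:fpa-pandora-NE-intermediate} item~1 (approximate NE in $\FPA(\dists^{\hat{\thresholds}})$ transfers through $\FtD^{\hat{\thresholds}}$ to $\DA(\dists, \costs)$), and Theorem~\ref{thm:epsNEpoa} (welfare bound for approximate NE). Given the monotone $\eps'$-BNE $\fstrats$ in $\FPA(\empDists^{\hat{\thresholds}})$, I would first apply Corollary~\ref{cor:find-BNE} to get an $(\eps'+2\eps)$-BNE, hence an $(\eps'+2\eps)$-NE, in $\FPA(\dists^{\hat{\thresholds}})$; then Corollary~\ref{lem:fpa-pandora-NE-intermediate} item~1 yields an $(\eps'+4\eps)$-NE in $\DA(\dists, \costs)$ via $\FtD^{\hat{\thresholds}}$; and Theorem~\ref{thm:epsNEpoa} then gives $\SW^{\DA(\dists, \costs)}(\FtD^{\hat{\thresholds}}(\fstrats)) \geq (1 - 1/e)\OPT^{(\dists, \costs)} - n(\eps'+4\eps)$.

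The subtlety is that Corollary~\ref{cor:find-BNE} is stated for i.i.d.\@ samples drawn from a fixed product distribution, whereas here both $\empDists^{\hat{\thresholds}}$ and $\dists^{\hat{\thresholds}}$ depend on the random quantity $\hat{\thresholds}$. I would handle this via the sample split of Definition~\ref{defn:fpa-pandora-empirical}: once we condition on $\samples^A$, the thresholds $\hat{\thresholds}$ are deterministic, and the vectors in $\samples^B$ remain an independent i.i.d.\@ sample from $\dists$. Applying the coordinate-wise truncation $\vali \mapsto \min(\vali, \hat{\threshold}_i)$ to each vector in $\samples^B$ therefore yields an i.i.d.\@ sample from $\dists^{\hat{\thresholds}}$, and its empirical product distribution is exactly $\empDists^{\hat{\thresholds}}$. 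Corollary~\ref{cor:find-BNE} then applies conditionally on $\samples^A$, with the role of ``true'' distribution played by $\dists^{\hat{\thresholds}}$ and the role of ``samples'' by the truncated $\samples^B$.

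For the sample-complexity bookkeeping, I would take $|\samples^B|$ large enough to invoke Corollary~\ref{cor:find-BNE} at confidence $\delta/2$---this is the bottleneck and matches~\eqref{eq:util-learn-upper-bound-product}---and $|\samples^A|$ large enough to invoke Theorem~\ref{thm:fpa-pandora-utility-intermediate}, and hence Corollary~\ref{lem:fpa-pandora-NE-intermediate}, at confidence $\delta/2$, which only needs a weaker $\tilde O(1/\eps^2)$ bound. Choosing $\nsample \geq 2M$ with $M$ as in the statement suffices for both halves, and a union bound gives total failure probability at most $\delta$.

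The main obstacle is the conditioning argument above: if one naively applied Corollary~\ref{cor:find-BNE} without sample splitting, the random dependence of $\hat{\thresholds}$ on the samples used to construct $\empDists^{\hat{\thresholds}}$ would entangle with the uniform-convergence guarantee and break it. Once the independence of $\samples^A$ and $\samples^B$ is exploited cleanly, the remainder is routine bookkeeping: tracking the two additive $2\eps$ losses along the transfer chain and then feeding the resulting $(\eps'+4\eps)$-NE into Theorem~\ref{thm:epsNEpoa}.
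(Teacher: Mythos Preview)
Your proposal is correct and follows essentially the same route as the paper's proof: apply Corollary~\ref{cor:find-BNE} conditionally on $\samples^A$ (using that $\empDists^{\hat{\thresholds}}$ is the empirical product distribution of $\dists^{\hat{\thresholds}}$ built from the independent half $\samples^B$), pass from BNE to NE, then chain Corollary~\ref{lem:fpa-pandora-NE-intermediate} item~1 and Theorem~\ref{thm:epsNEpoa}, combining the two failure probabilities by a union bound. Your handling of the sample-split conditioning is in fact more explicit than the paper's, but the argument is the same.
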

\begin{proof}
First use Corollary~\ref{cor:find-BNE} on distributions $\dists^{\hat{\thresholds}}$.  
Note that $\empDists^{\hat{\thresholds}}$ is an empirical product distribution for $\dists^{\hat{\thresholds}}$; this is because $\empDists$ consists of samples $\samples^B$, whereas $\hat{\thresholds}$ is determined by samples in $\samples^A$, and $\samples^A$ and~$\samples^B$ are disjoint.  Thus, with probability at least $1-\delta/2$ over the random draw of $\samples^B$, any monotone strategy profile $\fstrats$ that is an $\eps'$-BNE in $\FPA(\empDists^{\hat{\thresholds}})$ is an $(\eps'+2\eps)$-BNE in $\FPA(\dists^{\hat{\thresholds}})$.  An $(\eps'+2\eps)$-BNE must be an $(\eps'+2\eps)$-NE in $\FPA(\dists^{\hat{\thresholds}})$, so by Corollary~\ref{lem:fpa-pandora-NE-intermediate}, with probability at least $1-\delta/2$ over the random draw of $\samples^A$, $\FtD^{\hat{\thresholds}}(\fstrats)$ is an $(\eps'+4\eps)$-NE in $\DA(\dists, \costs)$.
The welfare guarantee follows from Theorem~\ref{thm:epsNEpoa}. 
\end{proof}
Theorem \ref{thm:fpa-pandora-NE-v2} does not include the reverse direction, i.e., from an $\eps'$-NE in $\DA(\dists, \costs)$ to an $(\eps'+4\eps)$-BNE in $\FPA(\empDists^{\hat{\thresholds}})$ 
(cf.\@ Theorem~\ref{thm:fpa-pandora-NE-BNE}).  
This is for two reasons:
(1) Such a transformation will result in $(\eps'+4\eps)$-NE in $\FPA(\empDists^{\hat{\thresholds}})$, but $(\eps'+4\eps)$-NE in $\FPA(\empDists^{\hat{\thresholds}})$ is not necessarily an $(\eps'+4\eps)$-BNE.
(2) Unlike interim utility, ex ante utility cannot be learned from samples directly; in other words, $\utili^{\FPA(\empDists^{\hat{\thresholds}})}(\fstrats)$ does not necessarily approximate $\utili^{\FPA(\dists^{\hat{\thresholds}})}(\fstrats)$ even if $\fstrats$ is monotone.  This is because in the computation of ex ante utility we need to take expectation over bidder $i$'s own value but for interim utility we do not need to take such an expectation. 

\paragraph{Proof of Theorem~\ref{thm:fpa-pandora-utility-intermediate}.}
The main idea is as follows: For item 1, we need to show that the utility of a strategy profile $\fstrats$ in $\FPA(\dists^{\hat{\thresholds}})$ approximates the utility of its image $\dstrats=\FtD^{\hat{\thresholds}}(\fstrats)$ in $\DA(\dists, \costs)$. We wish to use Theorem~\ref{thm:DA_FPA_transform} to do so but it cannot be used directly because $\hat{\thresholds}$ is not the indices of $(\dists, \costs)$. Instead, we construct a set of ``empirical costs''~$\hat{\costs}$ such that $\hat{\thresholds}$ becomes the indices of $(\dists, \hat{\costs})$. Then Theorem~\ref{thm:DA_FPA_transform} can be used to show that $\utili^{\FPA(\dists^{\hat{\thresholds}})}(\fstrats)=\utili^{\DA(\dists, \hat{\costs})}(\dstrats)$. With an additional lemma (Lemma~\ref{lem:costs_close}) which shows that $\hat{\costs}$ approximates $\costs$ up to $\eps$-error, we are able to establish the following chain of approximate equations
\[
 \utili^{\FPA(\dists^{\hat{\thresholds}})}(\fstrats) = \utili^{\DA(\dists, \hat{\costs})}(\dstrats) \stackrel{\eps}{\approx} \utili^{\DA(\dists, \costs)}(\dstrats).
\]
The proof for item 2 is similar. 

Formally, define $\hat{\costs}=(\hat{\cost}_i)_{i\in[n]}$, where
\begin{equation}
\hat{\cost}_i \coloneqq \Ex[\vali\sim\disti]{\max\{\vali - \hat{\threshold}_i, 0\}}.
\end{equation}
Note that $\hat{\cost}_i$ is determined by samples~$\samples^A$ since the empirical index $\hat{\threshold}_i$ is computed from $\samples^A$. 
\begin{restatable}{lemma}{costsclose}
\label{lem:costs_close}
There is $M = O\left(\frac{H^2}{\eps^2}\left[\log\frac{H}{\eps} + \log\frac{n}{\delta}\right]\right)$, such that if $m/2 > M$, then with probability at least $1-\delta$ over the random draw of $\samples^A$, for each $i\in[n]$, $|\costi - \hat{\cost}_i|\le\eps$.
\end{restatable}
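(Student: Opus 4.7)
}
The plan rests on a simple algebraic identity that avoids comparing $\thresholdi$ and $\hat\threshold_i$ directly. For each $i \in [n]$, define
\[ g_i(r) \coloneqq \Ex[\vali \sim \disti]{\max\{\vali - r, 0\}}, \qquad \hat g_i(r) \coloneqq \Ex[\vali \sim \empDisti^A]{\max\{\vali - r, 0\}}. \]
By Definition~\ref{def:fpa-pandora-index} the true index satisfies $g_i(\thresholdi) = \costi$, and by Definition~\ref{defn:fpa-pandora-empirical} the empirical index satisfies $\hat g_i(\hat\threshold_i) = \costi$; also $\hat\costi = g_i(\hat\threshold_i)$ by construction. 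Therefore
\[ |\costi - \hat\costi| \;=\; \bigl|\hat g_i(\hat\threshold_i) - g_i(\hat\threshold_i)\bigr| \;\le\; \sup_{r \in [0,H]} \bigl|\hat g_i(r) - g_i(r)\bigr|. \]
So it suffices to establish, with the stated probability and sample size, uniform convergence of $\hat g_i$ to $g_i$ on $[0,H]$.

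To do so, I will bound the pseudo-dimension of the function class $\funcClass_c = \{f_r : r \in [0,H]\}$, where $f_r(v) = \max\{v-r, 0\}$ maps $[0,H] \to [0,H]$. For any fixed input $v$, $f_r(v)$ is non-increasing in $r$, so for any witness $\Pwitness$, the sign of $f_r(v) - \Pwitness$ flips at most once as $r$ sweeps through $[0,H]$. Hence for any set of $m$ inputs with any witnesses, as $r$ varies the resulting sign vectors form a totally ordered chain with at most $m+1$ distinct entries. Pseudo-shattering requires $2^m$ distinct sign vectors, so $2^m \le m+1$, giving $\Pdim(\funcClass_c) \le 1$.

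Now apply Theorem~\ref{thm:pseudo-dimension} to $\funcClass_c$ (which has range in $[0,H]$) with confidence $\delta/n$: there is some $M_0 = O\bigl((H^2/\eps^2)[\log(H/\eps) + \log(n/\delta)]\bigr)$ such that if $|\samples^A| \ge M_0$, then with probability at least $1 - \delta/n$ one has $\sup_{r \in [0,H]} |\hat g_i(r) - g_i(r)| \le \eps$. Taking a union bound over $i \in [n]$ yields the desired event with probability at least $1 - \delta$, and combining with the identity above gives $|\costi - \hat\costi| \le \eps$ for all $i$. Since $|\samples^A| = m/2$, requiring $m/2 \ge M_0$ matches the stated bound $M = O\bigl((H^2/\eps^2)[\log(H/\eps) + \log(n/\delta)]\bigr)$. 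I do not expect a real obstacle in this proof; the only slightly non-obvious step is recognizing the telescoping identity for $|\costi - \hat\costi|$, which reduces the problem to a one-dimensional uniform-convergence question that the pseudo-dimension machinery handles immediately.
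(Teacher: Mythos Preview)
Your approach is essentially the same as the paper's: both reduce $|\costi-\hat\costi|$ to $|\hat g_i(\hat\threshold_i)-g_i(\hat\threshold_i)|$, bound the pseudo-dimension of the class $\{v\mapsto\max\{v-r,0\}\}_r$ by $O(1)$, and then invoke Theorem~\ref{thm:pseudo-dimension} with a union bound over $i\in[n]$. Your monotonicity argument for the pseudo-dimension (at most $m+1$ sign vectors since each coordinate flips once) is in fact slightly cleaner than the paper's explicit interval-partition count, which yields $(d+1)^2$ sign patterns. One small technical point: you take $r\in[0,H]$, but the empirical index $\hat\threshold_i$ need not be nonnegative, since the assumption $\Ex[\vali\sim\disti]{\vali}\ge\costi$ is only made for the true distribution; the paper therefore works over $r\in[-H,H]$ (with range $[0,2H]$), and your argument extends to that range without change.
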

\begin{proof}
The main idea of the proof is to show that the class $\funcClass_i=\{h^{\threshold}\given\threshold\in[-H, H]\}$ where $h^{\threshold}(x)=\max\{x-r, 0\}$ has pseudo-dimension $\Pdim(\funcClass_i) = O(1)$ and thus uniformly converges with $O\left(\frac{H^2}{\eps^2}\left[\log\frac{H}{\eps} + \log\frac{1}{\delta}\right]\right)$ samples.

Formally, consider the pseudo-dimension $d$ of the class $\funcClass_i=\{h^{\threshold}\given\threshold\in[-H, H]\}$ where $h^\threshold(x)\coloneqq\max\{x-\threshold, 0\}$ for $x\in[0, H]$ (thus $h^\threshold(x)\in[0, 2H]$). We claim that $d=O(1)$. To see this, fix any $d$ samples $(x_1, x_2, \ldots, x_d)$ and any witnesses $(\Pwitnessi[1], \Pwitnessi[2], \ldots, \Pwitnessi[d])$, we bound the number of distinct labelings that can be given by $\funcClass_i$ to these samples. Each sample $x_j$ induces a partition of the parameter space (the space of $\threshold$) $[-H, H]$ into two intervals $[-H, x_j]$ and $(x_j, H]$, such that for any $\threshold\le x_j$, $h^{\threshold}(x_j) = x_j-r$, and for $\threshold > x_j$, $h^{\threshold}(x_j)=0$. All $d$ samples partition $[-H, H]$ into (at most) $d+1$ consecutive intervals, $I_1, \ldots, I_{d+1}$, such that within each interval $I_k$, $h^{\threshold}(x_j)$ is either $x_j-r$ for all $\threshold\in I_k$ or $0$ for all $\threshold\in I_k$, for each $j\in[d]$. We further divide each $I_k$ using witnesses $\Pwitnessi[j]$'s: for each $j\in[d]$, if $h^{\threshold}(x_j) = x_j-r$ for $\threshold\in I_k$, then we cut $I_k$ at the point $r = x_j - \Pwitnessi[j]$; in this way we cut each $I_k$ into at most $d+1$ sub-intervals. Within each sub-interval $I'\subseteq I_k$, the labeling of the $d$ samples given by all $h^{\threshold}$ ($\threshold\in I'$) is the same. Since there are at most $(d+1)^2$ sub-intervals in total, there are at most $(d+1)^2$ distinct labelings.  
To pseudo-shatter $d$ samples, we must have $2^d \leq (d+1)^2$, which gives $d=O(1)$. 

By the definition of $\hat{\threshold}_i$, we have 
\[\costi=\Ex[\vali\sim \empDisti^A]{\max\{\vali - \hat{\threshold}_i, 0\}} = \Ex[\vali\sim \empDisti^A]{\func^{\hat{\threshold}_i}(\vali)}, \]
and $\hat{\threshold}_i\in [-H, H]$. Also note that $\hat{\cost}_i = \Ex[\vali\sim \disti]{\func^{\hat{\threshold}_i}(\vali)}$. 
Thus the conclusion $|\costi-\hat{\cost}_i| \le \eps$ follows from Theorem~\ref{thm:pseudo-dimension} and a union bound over $i\in[n]$.
\end{proof}

\begin{lemma}\label{lem:DA_utility_close}
Suppose $|\costi-\hat{\cost}_i|\le\eps$, then for any strategies~$\dstrats$, 
\begin{equation*}
	\left| \utili^{\DA(\dists, \costs)}(\dstrats) - \utili^{\DA(\dists, \hat{\costs})}(\dstrats)\right|\le\eps.
\end{equation*}
\end{lemma}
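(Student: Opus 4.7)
The key observation is that for a \emph{fixed} strategy profile $\dstrats$, the run of the descending auction — who inspects, at what prices, who claims the item, at what bid — depends only on the strategies and the realized value vector $\vals$, and not on the search costs themselves. The search costs enter the utility only as the amount subtracted when a bidder actually inspects; they do not influence the allocation, the payment, or the inspection decision once $\dstrats$ is fixed. Hence the two auctions $\DA(\dists, \costs)$ and $\DA(\dists, \hat{\costs})$ induce identical random outcomes for every sample path, and differ only in how much is charged for inspection.

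Concretely, I would write bidder $i$'s utility as
\[
\utili^{\DA(\dists, \costs)}(\dstrats) \;=\; \Ex{\varalloc_i(\dstrats)\,\vali \;-\; \mathrm{Pay}_i(\dstrats) \;-\; \varinspect_i(\dstrats)\,\costi},
\]
where the expectation is over $\vals\sim\dists$ and any internal randomness of $\dstrats$, and $\varalloc_i(\dstrats)$, $\mathrm{Pay}_i(\dstrats)$, $\varinspect_i(\dstrats)$ are the (random) indicator of winning, payment, and indicator of inspection, all determined by $\dstrats$ and $\vals$ alone. The analogous identity holds for $\utili^{\DA(\dists, \hat{\costs})}(\dstrats)$ with the same random variables, just replacing $\costi$ by $\hat{\cost}_i$. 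Subtracting the two expressions, the allocation and payment terms cancel term-by-term, leaving
\[
\utili^{\DA(\dists, \costs)}(\dstrats) - \utili^{\DA(\dists, \hat{\costs})}(\dstrats) \;=\; \Ex{\varinspect_i(\dstrats)}\cdot(\hat{\cost}_i - \costi).
\]

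Since $\varinspect_i(\dstrats) \in \{0,1\}$, we have $\Ex{\varinspect_i(\dstrats)} \in [0,1]$, and by hypothesis $|\costi - \hat{\cost}_i| \le \eps$, so the absolute value of the right-hand side is at most $\eps$, which gives the claim. There is no real obstacle here; the only thing to be careful about is justifying that the inspection indicator $\varinspect_i$ truly does not depend on $\costi$ once the strategy $\dstrati = (\dtimei, \dbidi(\cdot))$ is fixed — this follows because the strategy already encodes the threshold price $\dtimei$ at which the bidder commits to inspect, and the event that the price reaches $\dtimei$ without the item being claimed is a function of the opponents' strategies and values only.
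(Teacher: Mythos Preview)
Your proof is correct and follows essentially the same approach as the paper: couple the two auctions on the same realized values (and strategy randomness), observe that allocation, payment, and the inspection event are identical under the fixed strategy profile $\dstrats$, so the only difference in bidder~$i$'s utility is the inspection cost, bounded by $|\costi-\hat{\cost}_i|\le\eps$. Your write-up is in fact more explicit than the paper's, spelling out the cancellation via the decomposition into allocation, payment, and inspection terms and justifying why $\varinspect_i$ does not depend on the costs.
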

\begin{proof}
Couple the realizations of values (and threshold prices and bids if the strategies are randomized) in $\DA(\dists, \costs)$ and $\DA(\dists, \hat{\costs})$. When bidders use the same strategies $\dstrats$ in the two auctions $\DA(\dists, \costs)$ and $\DA(\dists, \hat{\costs})$, bidder~$i$ receives the same allocation and pays the same price.  
The only difference between bidder~$i$'s utilities in these two auctions is the difference between the search costs she pays, which is upper-bounded by $|\costi-\hat{\cost}_i|\le \eps$.
\end{proof}

Now we finish the proof of Theorem~\ref{thm:fpa-pandora-utility-intermediate}. 
\begin{proof}[Proof of Theorem~\ref{thm:fpa-pandora-utility-intermediate}]
First consider item 1.  We use $a\stackrel{\eps}{\approx}b$ to denote $|a-b|\le\eps$.  Given any monotone strategies $\fstrats$ for $\FPA(\empDists^{\hat{\thresholds}})$, 
\begin{align*}
   \utili^{\FPA(\dists^{\hat{\thresholds}})}(\fstrats) 
  ={}& \utili^{\DA(\dists, \hat{\costs})}(\FtD^{\hat{\thresholds}}(\fstrats)) && \text{Theorem \ref{thm:DA_FPA_transform} item 1 } \\
 \stackrel{\eps}{\approx}{}& \utili^{\DA(\dists, \costs)}(\FtD^{\hat{\thresholds}}(\fstrats)) &&\text{Lemma \ref{lem:DA_utility_close}}.  
\end{align*}

Then for item 2, given any strategies $\dstrats$ for $\DA(\dists, \costs)$, by Lemma \ref{lem:DA_utility_close}, 
\begin{align*}
    \utili^{\DA(\dists, \costs)}(\dstrats) 
    \stackrel{\eps}{\approx}  \utili^{\DA(\dists, \hat{\costs})}(\dstrats)
\end{align*}
By Theorem \ref{thm:DA_FPA_transform} item 2, we have  $\utili^{\DA(\dists, \hat{\costs})}(\dstrats) \le \utili^{\FPA(\dists^{\hat{\thresholds}})}(\DtF^{(\dists, \hat{\thresholds})}(\dstrats))$ where ``$=$'' holds if $\dstrati$ claims above $\hat{\threshold}_i$, which concludes the proof. 
\end{proof}

\section{Conclusion}
\label{sec:conclusion}

In this work we obtained almost tight sample complexity bounds for learning utilities in first price auctions and all pay auctions.  
Whereas utilities for unconstrained bidding strategies are hard to learn, we show that learning is made possible by focusing on monotone bidding strategies, which is sufficient for all practical purposes.
We also extended the results to auctions with search costs.

Monotonicity is a natural assumption on bidding strategies in a single item auction, but it does not generalize to multi-parameter settings, where characterization of equilibrium is notoriously difficult.
It is an interesting question whether our results can be generalized to multi-item auctions, such as simultaneous first-price auctions, via more general, lossless structural assumptions on the bidding strategies.  

Our results also depend on the values being drawn independently.  
When bidders' values are correlated, the conditional distribution of opponents' values changes with a bidder's value, and any na\"ive utility learning algorithm needs a number of samples that grows linearly with the size of a bidder's type space.
It is interesting whether there are meaningful tractable middle grounds for utility learning between product distributions and arbitrary correlated distributions.

\bibliographystyle{abbrvnat}
\bibliography{bibs}

\appendix

\section{Missing Proofs from Section \ref{sec:prelim}}
\subsection{Proof of Proposition~\ref{prop:monotone}}
\label{sec:proof-prop:monotone}
\propmonotone*
\begin{proof}
For all practical purposes we may assume $\bidi(\typespacei)$ to be compact.
Fix the distributions $\distsmi$ and strategies $\bidsmi(\cdot)$ of other bidders.
To simplify notation when $\bidsmi(\cdot)$ is fixed, let the interim allocation $\alloci(\bidi)$ be $\Ex[\valsmi \sim \distsmi]{\alloci(\bidi, \bidsmi(\valsmi))}$,  
the interim payment
$\paymenti(\bidi) \coloneqq \Ex[\valsmi \sim \distsmi]{\paymenti(\bidi, \bidsmi(\valsmi))}$,
and the interim utility
$\utili(\vali, \bidi) \coloneqq \utili(\vali, \bidi, \bidsmi(\cdot))$.
Without loss of generality, we may assume for each $\vali$,  $\utili(\vali, \bidi(\vali)) = \max_{\val \in \typespacei} \utili(\vali, \bidi(\val))$
(Otherwise we can first readjust $\bidi(\cdot)$ this way, which only weakly improves the utility of all types.)

Suppose $\bidi(\cdot)$ is non-monotone, i.e., there exist $\vali' > \vali$, such that $\bidi(\vali') < \bidi(\vali)$.  
By the assumption that $\utili(\vali, \bidi(\vali)) = \max_{\val \in \typespacei} \utili(\vali, \bidi(\val))$ for each $\vali$, we have 
\begin{equation}\label{eq:monotone-maximize-1}
    \vali \interimalloc_i(\bidi(\vali)) - \interimpay_i(\bidi(\vali)) \ge \vali \interimalloc_i(\bidi(\vali')) - \interimpay_i(\bidi(\vali'));
\end{equation}
\begin{equation}\label{eq:monotone-maximize-2}
    \vali' \interimalloc_i(\bidi(\vali')) - \interimpay_i(\bidi(\vali')) \ge \vali' \interimalloc_i(\bidi(\vali)) - \interimpay_i(\bidi(\vali)). 
\end{equation}
Adding \eqref{eq:monotone-maximize-1} and \eqref{eq:monotone-maximize-2}, we obtain
\begin{equation}
    (\vali' - \vali)[\interimalloc_i(\bidi(\vali')) - \interimalloc_i(\bidi(\vali))] \ge 0. 
\end{equation}
Since $\vali' > \vali$, we get 
\begin{equation}
    \interimalloc_i(\bidi(\vali')) \ge \interimalloc_i(\bidi(\vali)).
\end{equation}
In both the first price auction and the all pay auction we also have $\interimalloc_i(\bidi(\vali')) \le \interimalloc_i(\bidi(\vali))$ because the probability that $i$ receives the item cannot decrease if her bid increases.  
Therefore, it must be
\begin{equation}\label{eq:monotone-equal-alloc}
    \interimalloc_i(\bidi(\vali')) = \interimalloc_i(\bidi(\vali)). 
\end{equation}

Plugging \eqref{eq:monotone-equal-alloc} into \eqref{eq:monotone-maximize-1} and \eqref{eq:monotone-maximize-2}, we obtain
\begin{equation}\label{eq:monotone-equal-pay}
    \interimpay_i(\bidi(\vali')) = \interimpay_i(\bidi(\vali)). 
\end{equation}


For the all pay auction, since bidder $i$ pays her bid whether or not she wins the item,   \eqref{eq:monotone-equal-pay} implies $\bidi(\vali)=\bidi(\vali')$, a contradiction.  

For the first price auction, for any bid~$\bid$ made by bidder~$i$,  $\interimpay_i(\bid) = \bid\cdot \interimalloc_i(\bid)$.  By \eqref{eq:monotone-equal-pay}, $\bidi(\vali')\interimalloc_i(\bidi(\vali')) = \bidi(\vali) \interimalloc_i(\bidi(\vali))$.  On the other hand, $\interimalloc_i(\bidi(\vali')) = \interimalloc_i(\bidi(\vali))$ and $\bidi(\vali')>\bidi(\vali)$, so we must have
\[\interimalloc_i(\bidi(\vali')) = \interimalloc_i(\bidi(\vali)) = 0. \]
In other words, $\bidi(\vali)$ must be monotone non-decreasing everywhere except maybe for values whose bids are so low that the bidder does not win and hence obtains zero utility.  
Letting the bidder bid~$0$ for all values on which her allocation is~$0$ does not affect her utility and yields a monotone bidding strategy.
\end{proof}

\section{Missing Proofs from Section \ref{sec:fpa}}
\subsection{Upper Bound}
\subsubsection{Proof of Lemma~\ref{lem:pseudo-dimension-utility-class}}
\label{sec:proof-lem:pseudo-dimension-utility-class}
\pdimutil*

\begin{proof}[Proof]
We discussed the case with $n=2$ in Section~\ref{sec:fpa-upper-bound}. Now we consider the general case with $n > 2$ bidders.  
We give the proof for the random-allocation tie-breaking rule;  
the proof for the no-allocation rule is similar (and in fact simpler).
For ease of notation, we use $\Pinputs^k$ to denote $\samplesmi^{k}$.
Recall that each $\Pinputs^k$ is a vector in $\mathbb R^{n-1}$.
We write its $j^{\text{-th}}$ component as $\Pinputi[j]^k$.
We start with a simple observation: for any $\vali$ and $\bids(\cdot)$, the output of $\func^{\vali, \bids(\cdot)}$ on any input 
$\Pinputs^k$
must be one of the following $n+1$ values: 
$\vali-\bidi, \frac{\vali-\bidi}{2}, \ldots, \frac{\vali-\bidi}{n}$, or~$0$; this value is fully determined by the $n-1$ comparisons $\bidi \lesseqqgtr \bidi[j](\Pinputi[j]^k)$ for each $j\ne i$. 
We argue that the hypothesis class $\funcClass_i$ can be divided into $O(\nsample^{2n})$ sub-classes $\{\funcClass_i^{\mathbf k }\}_{\mathbf k \in [m+1]^{2(n-1)}}$ 
such that each sub-class $\funcClass_i^{\mathbf k }$ generates at most $O(\nsample^n)$ different label vectors. 
Thus $\funcClass_i$ generates at most $O(\nsample^{3n})$ label vectors in total. 
To pseudo-shatter $\nsample$~samples, we need $O(\nsample^{3n})\ge 2^\nsample$, which implies $\nsample = O(n\log n)$. 

We now define sub-classes $\{\funcClass_i^{\mathbf k}\}_{\mathbf k}$, each indexed by $\mathbf k \in [m + 1]^{2(n-1)}$.  
For each dimension $j \ne i$, we sort the $\nsample$ samples by their $j^{\text{-th}}$ coordinates non-decreasingly, and use $\pi(j, \cdot)$ to denote the resulting permutation over $\{1, 2, \ldots, \nsample\}$; 
formally, let $\Pinputi[j]^{\pi(j, 1)} \le \Pinputi[j]^{\pi(j, 2)}\le \cdots \le \Pinputi[j]^{\pi(j, \nsample)}$.
For each hypothesis $\func^{\vali, \bids(\cdot)}(\cdot)$, for each $j$, 
we define two special positions; these positions are similar to the position~$k$ in the case for two bidders; 
we now need a pair, because of the need to keep track of ties, due to the more complex random-allocation tie-breaking rule.
Let $k_{j, 1}$ be $\max \{0, \{k: \bidi[j](\Pinputi[j]^{\pi(j, k)}) < \bidi(\vali) \} \}$, and let $k_{j, 2}$ be
$\min \{m + 1, \{k: \bidi[j](\Pinputi[j]^{\pi(j, k)}) > \bidi(\vali) \}\}$.
As in the case for two bidders, this is well defined because of the monotonicity of~$\bidi[j](\cdot)$. It also follows that, if $k_{j, 1} < k_{j, 2} - 1$, then for any $k$ such that $k_{j, 1} < k < k_{j, 2}$, we must have $\bidi[j](\Pinputi[j]^{\pi(j, k)}) = \bidi(\vali)$.
A hypothesis $\func^{\vali, \bids(\cdot)}(\cdot)$ belongs to sub-class $\funcClass_i^{\mathbf k }$ where the index $\mathbf k$ is $(k_{j, 1}, k_{j, 2})_{j\in[n]\backslash\{i\}}$.  
The number of sub-classes is clearly bounded by $(m+1)^{2(n-1)}$.


We now show that the hypotheses within each sub-class $\funcClass_i^{\mathbf k}$ give rise to at most $(m+1)^n$ label vectors.
Let us focus on one such class with index~$\mathbf k$. 
On the $k^{\text{-th}}$ sample~$\Pinputs^k$, 
a hypothesis's membership in  
$\funcClass_i^{\mathbf k}$ suffices to specify whether bidder~$i$ is a winner on this sample, and, if so, the number of other winning bids at a tie.
Therefore, the class index~$\mathbf k$ determines a mapping $c: [m] \to \{0, 1, \ldots, n\}$, with $c(k) > 0$ meaning bidder~$i$ is a winner on sample~$\Pinputs^k$ at a tie with $c(k)-1$ other bidders, and $c(k) = 0$ meaning bidder~$i$ is a loser on sample~$\Pinputs^k$.  
The output of a hypothesis $\func^{\vali, \bids(\cdot)}(\cdot) \in \funcClass_i^{\mathbf k}$ on sample~$\Pinputs^k$ is then $(\vali - \bidi(\vali)) / c(k)$ if $c(k) > 0$ and 0 otherwise.  
The same utility is output on two samples $\Pinputs^k$ and~$\Pinputs^{k'}$ whenever $c(k) = c(k')$.  
Therefore, if we look at the labels assigned to a set~$S$ of samples that are mapped to the nonzero integer by~$c$, there can be at most $|S| + 1 \leq \nsample + 1$ patterns of labels, because we compare the same utility with $|S|$ witnesses; the set of samples mapped to~$0$ by~$c$ have only one pattern of labels. 
The vector of labels generated by a hypothesis in such a sub-class is a concatenation of these patterns.  
The image of $c$ has $n$ nonzero integers, and so there are at most $(\nsample+1)^n$ label vectors.

To conclude, the total number of label vectors generated by $\funcClass_i=\bigcup_{\mathbf k } \funcClass_i^{\mathbf k }$ is at most 
\[ (\nsample+1)^{2(n-1)} (\nsample+1)^{n} \le (\nsample+1)^{3n}. \]
To pseudo-shatter $\nsample$~samples, we need $(\nsample+1)^{3n}\ge 2^\nsample$, which implies $\nsample=O(n\log n)$.

\old{
\bluecom{old general case}
Now we consider the general case with $n > 2$ bidders.  
We show the proof for the random-allocation tie-breaking rule.  
The proof for the no-allocation rule is similar (and is in fact simpler).
Again for ease of notation, we denote by $\Pinputs^k = \samplesmi^{k}$.
A simple observation is that, fixing $\vali$ and~$\bids(\cdot)$, the utility of bidder~$i$ on any sample must be one of $n+1$ values: 
$\vali-\bidi, \frac{\vali-\bidi}{2}, \ldots, \frac{\vali-\bidi}{n}$, or~$0$.

each sample $\Pinputs^k$. as an $(n-1)$ dimensional vector consisting of the values of the bidders other than~$i$,
can be classified into one of $(n+1)$ cases $C_1, \ldots, C_{n+1}$ that are characterized by the ex-post utility $\vali-\bidi, \frac{\vali-\bidi}{2}, \ldots, \frac{\vali-\bidi}{n}, 0$. 
Specifically, if there exists a dimension $j\ne i$ such that $\bidi[j](\samplei[j]^{k}) > \bidi$, then $\Pinputs^k\in C_{n+1}$; if all dimensions $j$'s satisfy $\bidi[j](\samplei[j]^k)<\bidi$, then $\Pinputs^k\in C_1$; if there are $1\le d\le n-1$ dimensions for which $\bidi[j](\samplei[j]^k)=\bidi$, then $\Pinputs^k\in C_{d+1}$. 

Consider the partition of the $\nsample$ samples into classes by different choices of $\vali$ and $\bids(\cdot)$. For each dimension $j$, let $\underline{x}\in\mathbb R_+$ be the smallest coordinate at which $\bidi[j](\underline{x})=\bidi$, then we have $\bidi[j](\samplei[j]^k) < \bidi$  if $\samplei[j]^k<\underline{x}$, because $\bidi[j](\cdot)$ is monotone. Similarly, for the largest coordinate $\overline{x}$ at which $\bidi[j](\overline{x})=\bidi$, we have  $\bidi[j](\samplei[j]^k) > \bidi$ whenever $\samplei[j]^k > \overline{x}$. And for $\samplei[j]^k\in[\underline{x}, \overline{x}]$, $\bidi[j](\samplei[j]^k) = \bidi$. We draw two hyperplanes at $x_j=\underline{x}$ and $x_j=\overline{x}$ for each dimension $j$. These $2(n-1)$ hyperplanes divide the space into several rectangular region and some infinite regions, such that in each region, the class of all samples is the same, e.g., all the samples in the top-right infinite regions belong to $C_{n+1}$, and the rectangle closest to the origin corresponds to $C_1$. A class may contain several regions.

In fact, in order to determine the classification of each sample, it suffices to determine two boundary samples $\Pinputs^{k_1}, \Pinputs^{k_2}$ that represent $\underline{x}, \overline{x}$ in each dimension $j$, namely, 
 \[ k_1=\argmax_{k}\{\samplei[j]^{k}\given \bidi[j](\samplei[j]^k)<\bidi\},\]
 \[k_2=\argmax_{k}\{\samplei[j]^k\given \bidi[j](\samplei[j]^k)\le \bidi\}, \]
then draw the hyperplanes at $x_j=\samplei[j]^{k_1}$ and $x_j=\samplei[j]^{k_2}$ (denote by $k_1=0$ or $k_2=0$ if there is no such $k_1$ or $k_2$). The number of different partitions produced by all $\vali, \bids(\cdot)$ is determined by the number of choices of $k_1$ and $k_2$ over all dimensions, which is upper-bounded by
\[ \left[\binom{\nsample+1}{2} + \nsample+1\right]^{n-1} \le (\nsample+1)^{2(n-1)}.\]
%
For each partition, the samples in each class $C_d$ have the same ex-post utility $\frac{\vali-\bidi}{d}$ or 0 if $d=n+1$. At most $m+1$ distinct labelings can be given to the samples in $C_d$ for different $\vali, \bidi$. 
Over all classes, we have at most $(\nsample+1)^{n+1}$ labelings. 

Therefore, in total: 
\[(\nsample+1)^{2(n-1)}(\nsample+1)^{n+1} = (\nsample+1)^{3n-1}. \]
Solving $2^\nsample \le (\nsample+1)^{3n-1}$ gives $\nsample=O(n\log n)$.
}

\end{proof}

\subsubsection{Proof of Lemma~\ref{lem:relation-uniform-convergence}}
\label{sec:proof-lem:relation-uniform-convergence}
\uniformprod*

\begin{proof}
Think of the samples $\samples$ as an $\nsample \times n$ matrix $(\samplei^j)$, where each row $j\in[\nsample]$ represents sample~$\samples^j$, and each column~$i\in[n]$ consists of the values sampled from~$\disti$. 
Then we draw $n$ permutations $\pi_1, ..., \pi_n$ of $[\nsample]=\{1, \ldots, \nsample\}$ independently and uniformly at random, and permute the $\nsample$ elements in column~$i$ by~$\pi_i$. 
Regard each new row $j$ as a new sample, denoted by $\permSamples^j = (\samplei[1]^{\pi_1(j)}, \samplei[2]^{\pi_2(j)}, ..., \samplei[n]^{\pi_n(j)})$. 
Given $\pi_1, \ldots, \pi_n$, the ``permuted samples'' $\permSamples^j$, $j=1, \ldots, \nsample$ then have the same distributions as $\nsample$ i.i.d.\@ random draws from~$\dists$. 

For $\func \in \funcClass$, let $p_\func$ be $\Ex[\vals\sim\dists]{h(\vals)}$. 
Then by the definition of $(\eps, \delta)$-uniform convergence (but not on product distribution),
\begin{equation}\label{eq:samples_pi}
\Prx[\samples, \pi]{\exists \func \in \funcClass,\ \left| p_\func - \frac{1}{\nsample }\sum_{j=1}^{\nsample} \func(\permSamples^j) \right|\ge\eps} \le \delta.
\end{equation}

For a set of fixed samples $\samples = (\samples^1, \ldots, \samples^\nsample)$, recall that $\empDisti[i]$ is the uniform distribution over $\{\samplei[i]^{1}, \ldots, \samplei[i]^{\nsample}\}$, and $\empDists = \prod_{i=1}^n \empDisti[i]$. 
We show that the expected 
value of $\func$ on $\empDists$ satisfies $\Ex[\vals\sim\empDists]{\func(\vals)} = \Ex[\pi]{\frac{1}{\nsample}\sum_{j=1}^\nsample \func(\permSamples^j)}$. This is because
\begin{align*}
    \Ex[\pi]{\frac{1}{\nsample}\sum_{i=1}^\nsample \func(\permSamples^j)}
    ={}&\frac{1}{\nsample} \sum_{j=1}^{\nsample} \Ex[\pi]{\func(\permSamples^j)} \\
    ={}&\frac{1}{\nsample}\sum_{j=1}^\nsample \sum_{(k_1, \ldots, k_n)\in[\nsample]^n} \func(\samplei[1]^{k_1}, \ldots, \samplei[n]^{k_n})\ \cdot \\
    & \hspace{8em}
    	\Prx[\pi]{\pi_1(j)=k_1, \ldots, \pi_n(j)=k_n}  \\
    ={}&\frac{1}{\nsample}\sum_{j=1}^\nsample \sum_{(k_1, \ldots, k_n)\in[\nsample]^n} \func(\samplei[1]^{k_1}, \ldots, \samplei[n]^{k_n})\cdot \frac{1}{\nsample^n}\\
    ={}&\frac{1}{\nsample^n} \sum_{(k_1, \ldots, k_n)\in[\nsample]^n} \func(\samplei[1]^{k_1}, \ldots, \samplei[n]^{k_n}) \\
    ={}&\Ex[\vals \sim \empDists]{\func(\vals)}.
\end{align*}

Thus,   
\begin{align*}
     \left| p_\func - \Ex[\vals\sim\empDists]{\func(\vals)}\right| 
     ={}&\left| p_\func - \Ex[\pi]{\frac{1}{\nsample}\sum_{j=1}^\nsample \func(\permSamples^j)} \right| \\
    \le{}& \Ex[\pi]{ \left| p_\func - \frac{1}{\nsample}\sum_{j=1}^\nsample \func(\permSamples^j) \right|}\\
    \le{}& \Prx[\pi]{ \left| p_\func - \frac{1}{\nsample}\sum_{j=1}^\nsample \func(\permSamples^j) \right|\ge \eps}\cdot H \\
    & \hspace{1em} + \left(1-\Prx[\pi]{\left| p_\func - \frac{1}{\nsample}\sum_{j=1}^\nsample \func(\permSamples^j) \right|\ge \eps}\right)\cdot\eps \\
    \le{}& \Prx[\pi]{\mathrm{Bad}(\func, \pi, \samples)}\cdot H + \eps, 
\end{align*}
where in the last step we define event
\[ \mathrm{Bad}(\func, \pi, \samples) = \mathbb{I}\left[\left| p_\func - \frac{1}{\nsample}\sum_{j=1}^\nsample \func(\permSamples^j) \right|\ge \eps\right].\]
By simple calculation,  whenever $\left| p_\func - \Ex[\vals\sim\empDists]{\func(\vals)}\right| \ge 2\eps$, we have $\Prx[\pi]{\mathrm{Bad}(\func, \pi, \samples)}\ge \eps/H$.

Finally, consider the random draw $\samples\sim\dists$, 
\begin{align*}
     \Prx[\samples]{\exists \func\in \funcClass, \ \left| p_\func - \Ex[\vals \sim \empDists]{\func(\vals)} \right|\ge 2\eps} 
    \le{}& \Prx[\samples]{\exists \func \in \funcClass, \ \Prx[\pi]{\mathrm{Bad}(\func, \pi, \samples)}\ge \frac{\eps}{H} } 
    \\
    \le{}& \Prx[\samples]{\Prx[\pi]{\exists \func \in \funcClass, \ \mathrm{Bad}(\func, \pi, \samples)\text{ holds} } \ge \frac{\eps}{H}}.
    \end{align*}
    
    By Markov's inequality, this is in turn upper bounded by
    \begin{align*}
     \frac{H}{\eps}\Ex[\samples]{\Prx[\pi]{\exists \func \in \funcClass, \ \mathrm{Bad}(\func, \pi, \samples)\text{ holds} }  }
    ={}& \frac{H}{\eps}\Prx[\samples, \pi]{\exists \func \in \funcClass, \ \mathrm{Bad}(\func, \pi, \samples)\text{ holds} } \\
    \le{}& \frac{H\delta}{\eps} && \text{ By \eqref{eq:samples_pi}} 
\end{align*}
\end{proof}

\subsection{Lower Bound: Proof of Theorem~\ref{thm:lower-bound-learning-util}}
\label{sec:proof-thm:lower-bound-learning-utility}

\lowerbound*

Fixing $\eps > 0$, fixing $\distconst = 2000$, we first define two value distributions.
Let $\posdist$ be a distribution supported on $\{0, 1\}$, and for $\val \sim \posdist$, $\Prx{\val = 0} = 1 - \frac{1 + \distconst \eps}{n}$, and $\Prx{\val = 1} = \frac{1 + \distconst \eps}{n}$.  
Similarly define $\negdist$: for $\val \sim \negdist$, $\Prx{\val = 0} = 1 - \frac{1 - \distconst \eps}{n}$, and $\Prx{\val = 1} = \frac{1 - \distconst \eps}{n}$.   

Let $\kl(\posdist; \negdist)$ denote the KL-divergence between the two distributions.
\begin{claim}
\label{cl:lb-kl}
$\kl(\posdist; \negdist)= O(\frac {\eps^2}{n})$.
\end{claim}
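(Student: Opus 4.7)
The plan is to compute $\kl(\posdist;\negdist)$ directly using the closed form for KL divergence of Bernoulli distributions. Writing $p = (1+\distconst\eps)/n$ and $q = (1-\distconst\eps)/n$, we have
\[
\kl(\posdist;\negdist) \;=\; p\log\frac{p}{q} + (1-p)\log\frac{1-p}{1-q}.
\]
The cleanest route I would take is to invoke the standard inequality $\kl(\posdist;\negdist) \le \chi^2(\posdist;\negdist) = (p-q)^2/(q(1-q))$ valid for Bernoulli distributions. Then $p-q = 2\distconst\eps/n$, and for $\eps$ sufficiently small (which holds since $\eps < 1/4000$ and $\distconst=2000$, so $\distconst\eps < 1/2$), we have $q \le 1/n$ and $1-q \ge 1/2$, giving $q(1-q) = \Theta(1/n)$. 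Plugging in yields $\kl = O(\eps^2/n^2) / \Theta(1/n) = O(\eps^2/n)$ as claimed.

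If one prefers a self-contained computation rather than quoting the $\chi^2$ bound, the alternative is to Taylor expand $\log(p/q) = \log\frac{1+\distconst\eps}{1-\distconst\eps} = 2\distconst\eps + O(\eps^3)$ and $\log\frac{1-p}{1-q} = \log\frac{n-1-\distconst\eps}{n-1+\distconst\eps} = -\frac{2\distconst\eps}{n-1} + O(\eps^3/n^3)$. The first-order terms $p\cdot 2\distconst\eps$ and $(1-p)\cdot(-2\distconst\eps/(n-1))$ are designed to almost cancel; keeping the second-order contributions gives a remainder of order $\eps^2/n$.

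The only subtlety I anticipate is precisely this cancellation: a sloppy use of $\log(1+x)\le x$ applied to each summand separately would only give a bound of $O(\eps/n)$, which is too weak. Using the $\chi^2$ upper bound avoids tracking the cancellation explicitly, so that is the approach I would present. The resulting $\kl = O(\eps^2/n)$ bound is exactly what is needed so that $m = o(n/\eps^2)$ samples give $m \cdot \kl = o(1)$, enabling a Fano/Pinsker-style indistinguishability argument in the proof of Theorem~\ref{thm:lower-bound-learning-util}.
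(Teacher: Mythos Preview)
Your proposal is correct and takes a genuinely different, cleaner route than the paper. The paper computes the KL divergence directly: it writes out both terms, regroups them by splitting $\frac{1+\distconst\eps}{n}$ as $\frac{1}{n}+\frac{\distconst\eps}{n}$ (and similarly for the other coefficient), combines the logarithms, and then applies elementary bounds such as $(1+x/(n-1))^{n-1}\ge 1+x$ and $\ln(1+x)\le x$ to arrive at the explicit constant $10\distconst^2$. Your approach bypasses all of this bookkeeping by invoking the standard inequality $\kl\le\chi^2$, after which the computation is a one-liner; your Taylor-expansion alternative is closer in spirit to what the paper does, though the paper never writes out a formal expansion.

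One minor presentational slip: to upper-bound $\chi^2=(p-q)^2/(q(1-q))$ you need a \emph{lower} bound on $q(1-q)$, so the relevant fact is $q\ge (1-\distconst\eps)/n\ge 1/(2n)$ (from $\distconst\eps<1/2$), not $q\le 1/n$. Your conclusion $q(1-q)=\Theta(1/n)$ is correct, but the inequality you cite points the wrong way for the bound you actually need. With that fixed, your argument yields $\kl\le 16\distconst^2\eps^2/n$, comparable to the paper's $10\distconst^2\eps^2/n$.
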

\begin{proof}
By definition,
\begin{align*}
     \kl(\posdist; \negdist) 
    ={}& \frac{1 + \distconst\eps}{n} \ln \left( \frac{1 + \distconst\eps}{1 - \distconst\eps} \right) 
      + \frac{n - 1 - \distconst \eps}{n} \ln \left(\frac{n - 1 - \distconst \eps}{n - 1 +\distconst \eps}\right) \\
   ={}& \frac 1n  \ln \left( \frac {1 + \distconst\eps}{1 - \distconst \eps} \cdot \frac{(1 - \frac{\distconst \eps}{n - 1})^{n-1}}{(1 + \frac{\distconst \eps}{n-1})^{n-1}}
   \right)
   + \frac {\distconst \eps}{n} \ln \left(\frac{1 + \distconst \eps}{1 - \distconst \eps} \cdot 
   \frac{1 + \frac{\distconst \eps}{n-1}}{1 - \frac{\distconst \eps}{n-1}}\right) \\
   \leq{}& \frac 1n  \ln \left(  \frac {1 + \distconst\eps}{1 - \distconst \eps} \cdot \frac{\left(1 - \frac{\distconst \eps}{n - 1}\right)^{n-1}}{1 + \distconst \eps} \right)
    + \frac {2\distconst \eps}{n} \ln \left(1 + \frac{2\distconst \eps}{1 - \distconst \eps} \right)
\\
   \leq{}& \frac 1 n \ln \left( 
   \frac{1 - \distconst \eps + \frac 1 2 (\distconst \eps)^2}{1 - \distconst \eps}
   \right) + \frac{8\distconst^2 \eps^2}{n} \\
   \leq{}& \frac{10\distconst^2 \eps^2}{n}.
\end{align*}
In the last two inequalities we used $\distconst \eps < \frac 1 2$ and $\ln (1+x) \leq 1+x$ for all $x > 0$.
\end{proof}

It is well known that upper bounds on KL-divergence implies information theoretic lower bound on the number of samples to distinguish distributions \cite[e.g.][]{MansourNotes}.
\begin{corollary}
\label{cor:lb-kl-single}
Given $t$ i.i.d.\@ samples from $\posdist$ or~$\negdist$, if $t \leq \frac{n}{80\distconst^2 \eps^2}$, no algorithm~$\diffalg$ that maps samples to $\{\posdist, \negdist\}$ can do the following: when the samples are from~$\posdist$, $\diffalg$ outputs~$\posdist$ with probability at least $\frac 2 3$, and if the samples are from~$\negdist$, $\diffalg$ outputs~$\negdist$ with probability at least~$\frac 2 3$.  
\end{corollary}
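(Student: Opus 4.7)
The plan is to use the standard reduction from distinguishing distributions to total variation distance, combined with Pinsker's inequality and the tensorization of KL divergence. Such an algorithm $\diffalg$ (possibly randomized) induces a test $T : \{0,1\}^t \to \{\posdist, \negdist\}$, and the hypothesized guarantee that $T$ outputs the correct label with probability at least $2/3$ under each of the two hypotheses forces
\[
\mathrm{TV}\bigl(\posdist^{\otimes t},\, \negdist^{\otimes t}\bigr) \;\geq\; \Prx[\posdist^{\otimes t}]{T = \posdist} - \Prx[\negdist^{\otimes t}]{T = \posdist} \;\geq\; \tfrac{2}{3} - \tfrac{1}{3} \;=\; \tfrac{1}{3}.
\]
(If $\diffalg$ is randomized, first average over its internal coins to get a deterministic test with the same guarantees, then apply the displayed inequality.)

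Next, I would invoke tensorization of KL divergence, $\mathrm{KL}(\posdist^{\otimes t};\, \negdist^{\otimes t}) = t \cdot \mathrm{KL}(\posdist;\, \negdist)$, and Pinsker's inequality in the form $\mathrm{TV}(P, Q) \leq \sqrt{\mathrm{KL}(P; Q)/2}$. Combined with the bound $\mathrm{KL}(\posdist;\, \negdist) \leq 10 \distconst^2 \eps^2 / n$ from Claim~\ref{cl:lb-kl}, this yields
\[
\mathrm{TV}\bigl(\posdist^{\otimes t},\, \negdist^{\otimes t}\bigr) \;\leq\; \sqrt{\frac{t \cdot 10\, \distconst^2 \eps^2}{2n}}.
\]
Substituting the assumed bound $t \leq n/(80 \distconst^2 \eps^2)$ gives $\mathrm{TV} \leq \sqrt{1/16} = 1/4$, which contradicts the lower bound $1/3$ derived above, completing the proof.

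There is no real obstacle: the only minor points to be careful about are (i) the (standard) reduction from a randomized algorithm to a deterministic test by conditioning on its randomness, and (ii) making sure Pinsker's inequality is applied with the correct constant. Everything else is a direct substitution from Claim~\ref{cl:lb-kl}.
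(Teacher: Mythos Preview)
Your argument is correct: the TV lower bound from a $2/3$-versus-$1/3$ test, Pinsker, tensorization of KL, and the explicit constant $10\distconst^2\eps^2/n$ from the proof of Claim~\ref{cl:lb-kl} combine exactly as you wrote, and the arithmetic $\sqrt{10/(2\cdot 80)}=1/4<1/3$ checks out. The paper itself does not prove this corollary at all --- it simply asserts it as ``well known'' with a citation --- so your proof is the standard textbook justification filling in what the paper leaves implicit.
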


We now construct product distributions using $\posdist$ and~$\negdist$.  
For any $S \subseteq [n - 1]$, define product distribution $\dists_S$ to be $\prod_i \disti$ where $\disti = \posdist$ if $i \in S$, and $\disti = \negdist$ if $i \in [n-1] \setminus S$, and $F_n$ is a point mass on value~$1$.
For any $j \in [n - 1]$ and $S \subseteq [n - 1]$, distinguishing $\dists_{S \cup \{j\}}$ and $\dists_{S \setminus \{j\}}$ by samples from the product distribution is no easier than distinguishing $\posdist$ and $\negdist$, because the coordinates of the samples not from $\disti[j]$ contains no information about~$\disti[j]$.  

\begin{corollary}
\label{cor:lb-kl}
For any $j \in [n - 1]$ and $S \subseteq [n - 1]$, given $t$ i.i.d.\@ samples from $\dists_{S \cup \{j\}}$ or $\dists_{S \setminus \{j\}}$, if $t \leq \frac n {80 \distconst^2 \eps^2}$, no algorithm~$\diffalg$ can do the following: when the samples are from $\dists_{S \cup \{j\}}$, $\diffalg$ outputs~$\dists_{S \cup \{j\}}$ with probability at least $\frac 2 3$, and when the samples are from $\dists_{S \setminus \{j\}}$, $\diffalg$ outputs~$\dists_{S \setminus \{j\}}$ with probability at least~$\frac 2 3$.
\end{corollary}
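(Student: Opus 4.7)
The plan is to prove this by a reduction: any algorithm $\diffalg$ that distinguishes $\dists_{S \cup \{j\}}$ from $\dists_{S \setminus \{j\}}$ with $t$ i.i.d.\@ product-distribution samples can be converted, at no extra sample cost, into an algorithm $\diffalg'$ that distinguishes $\posdist$ from $\negdist$ with $t$ single-coordinate samples; since Corollary~\ref{cor:lb-kl-single} forbids the latter when $t \leq \frac{n}{80 \distconst^2 \eps^2}$, the former is impossible as well.

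Concretely, I would define the simulating algorithm $\diffalg'$ as follows. First note that the two product distributions $\dists_{S \cup \{j\}}$ and $\dists_{S \setminus \{j\}}$ agree on every coordinate except coordinate~$j$: at coordinate~$j$ the former uses $\posdist$ and the latter uses $\negdist$, while at every $i \in [n-1] \setminus \{j\}$ both use $\posdist$ if $i \in S$ and $\negdist$ otherwise, and at coordinate~$n$ both place a point mass on~$1$. Given the $t$ input samples $\sample^1, \ldots, \sample^t \in \{0,1\}$ drawn i.i.d.\@ from an unknown source $\dist \in \{\posdist, \negdist\}$, the algorithm $\diffalg'$ constructs $t$ product-distribution samples $\samples^1, \ldots, \samples^t$ by setting $\samplei[j]^k = \sample^k$, drawing $\samplei^k$ independently from the shared marginal (either $\posdist$ or $\negdist$ depending on membership in~$S$) for each $i \in [n-1] \setminus \{j\}$, and setting $\samplei[n]^k = 1$. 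Finally, $\diffalg'$ runs $\diffalg$ on $\samples^1, \ldots, \samples^t$ and outputs $\posdist$ if $\diffalg$ outputs $\dists_{S \cup \{j\}}$, and $\negdist$ otherwise.

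The key correctness check is that the simulated sample distribution is exactly what $\diffalg$ expects: when $\dist = \posdist$ the constructed $\samples^k$'s are i.i.d.\@ from $\dists_{S \cup \{j\}}$, and when $\dist = \negdist$ they are i.i.d.\@ from $\dists_{S \setminus \{j\}}$; this follows because all non-$j$ coordinates are drawn from their (identical) correct marginals by $\diffalg'$, independently of the input samples and of each other. Therefore, if $\diffalg$ succeeds with probability at least $\tfrac{2}{3}$ in both cases, so does $\diffalg'$, contradicting Corollary~\ref{cor:lb-kl-single} whenever $t \leq \frac{n}{80 \distconst^2 \eps^2}$.

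There is essentially no technical obstacle here; the only mild care needed is to verify the independence structure of the constructed samples and to handle uniformly the two sub-cases $j \in S$ and $j \notin S$ (which amounts to observing that $j$'s own membership in $S$ is irrelevant because $\dists_{S \cup \{j\}}$ and $\dists_{S \setminus \{j\}}$ are defined so that coordinate~$j$ takes distribution $\posdist$ in the former and $\negdist$ in the latter regardless of the original~$S$). Once these observations are made, the reduction yields the claim immediately.
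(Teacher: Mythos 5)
Your proposal is correct and matches the paper's reasoning: the paper justifies this corollary with the one-line observation that the non-$j$ coordinates carry no information about $\disti[j]$, which is exactly the simulation reduction you spell out (draw the shared coordinates yourself, plug the given samples into coordinate~$j$, and invoke Corollary~\ref{cor:lb-kl-single}). Your version just makes the implicit reduction explicit; no further comment is needed.
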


We now use Corollary~\ref{cor:lb-kl} to derive an information theoretic lower bound on learning utilities for monotone bidding strategies, for distributions in $\{\dists_S\}_{S \subseteq [n]}$.

\begin{proof}[Proof of Theorem~\ref{thm:lower-bound-learning-util}]
Without loss of generality, assume $n$ is odd.  
Let $S$ be an arbitrary subset of~$[n - 1]$ of size either $\lfloor n/2 \rfloor$ or $\lceil n/ 2 \rceil$.
We focus on the interim utility of bidder~$n$ with value~$1$ and bidding $\frac 1 2$.  
Denote this bidding strategy by~$\bidi[n](\cdot)$.
The other bidders may adopt one of two bidding strategies.
One of them is $\posbid(\cdot)$: $\posbid(0) = 0$ and $\posbid(1) = \frac 1 2 + \eta$ for sufficiently small $\eta > 0$.  
The other bidding strategy $\negbid(\cdot)$ maps all values to~$0$.
For $T \subseteq [n-1]$, let $\bids_T(\cdot)$ be the profile of bidding strategies where $\bidi(\cdot) = \posbid(\cdot)$ for $i \in T$, and $\bidi(\cdot) = \negbid(\cdot)$ for $i \notin T$.

For the distribution $\dists_S$, 
\begin{align*}
     \utili[n]\left(1, \frac 1 2, \bids_T(\cdot)\right) 
   ={}& \frac 1 2  \Prx{\max_{i \in T} \vali = 0} \\
    ={}&  \frac 1 2 
    \left(1 - 
    \frac{1 + \distconst \eps}{n}
    \right)^{|S \cap T|}
    \left(1 - 
    \frac{1 - \distconst \eps}{n}
    \right)^{|T\setminus S|}
    \\
    ={}& \frac 1 2
    \left(
    1 - \frac{1 + \distconst \eps}{n}
    \right)^{|T|}
    \left(
    \frac{n - 1 + \distconst \eps}{n - 1 - \distconst \eps}
    \right)^{|T \setminus S|}.
\end{align*}
Therefore, for $T, T' \subseteq [n-1]$ with $|T| = |T'| $,
\begin{align*}
     \frac{\utili[n](1, \frac 1 2, \bids_T(\cdot))}{\utili[n](1, \frac 1 2, \bids_{T'}(\cdot))} ={}& \left(
    1 + \frac{2\distconst \eps / (n-1)}{1 - \frac{\distconst \eps}{n-1}} 
    \right)^{|T\setminus S| - |T' \setminus S|} \\
     \geq{}& 1 + \frac{2\distconst \eps}{n-1} \cdot (|T \setminus S| - |T' \setminus S|);
\end{align*}
Suppose $|T \setminus S| \geq |T' \setminus S|$ and $|T| = |T'| \geq \lfloor \frac n 2 \rfloor$, then
\begin{align}
 \utili[n]\left(1, \frac 1 2, \bids_T(\cdot)\right) - \utili[n]\left(1, \frac 1 2, \bids_{T'}(\cdot)\right) \ge{}& (|T \setminus S| - |T' \setminus S|) \cdot \frac {2\distconst \eps}{n-1} \cdot \utili[n]\left(1, \frac 1 2, \bids_{T'}(\cdot) \right) 
\notag \\
\geq{}& (|T \setminus S| - |T' \setminus S|) \cdot \frac {2\distconst \eps}{n-1} \cdot \frac 1 {8 e^2}, 
\label{eq:util-diff-eps}
\end{align}
where the last inequality is because $\utili[n](1, \frac 1 2, \bids_{T'}(\cdot)) \ge \frac{1}{2} (1 - \frac{2}{n})^n = \frac{1}{2} [(1 - \frac{2}{n})^\frac{n}{2}]^2\ge \frac{1}{2}  (\frac{1}{2e})^2 = \frac{1}{8e^2}$. 

Now suppose an algorithm~$\alg$ $(\eps, \delta)$-learns the utilities of all monotone bidding strategies with $t$ samples~$\samples$ for $t \leq \frac{n}{80\distconst^2 \eps^2}$.
Define $\diffalg: \mathbb R_+^{n \times t} \times \mathbb N \to 2^{[n-1]}$ be a function that outputs among all $T\subseteq [n-1]$ of size~$k$, the one that maximizes bidder~$n$'s utility when they bid according to bidding strategy $\bids_T$.  
Formally, 
\begin{align*}
    \diffalg(\samples, k) = \argmax_{T \subseteq [n-1], |T| = k} \alg\left(\samples, n, 1, (\bids_{T}(\cdot), \bidi[n](\cdot)) \right),
\end{align*}

By Definition~\ref{def:util-learn-ensemble}, for any $S$ with $|S| = \lfloor n / 2 \rfloor$, for samples drawn from~$\dists_S$, with probability at least $1 - \delta$,
\begin{equation*}
    \alg(\samples, n, 1, (\bids_{[n-1]\setminus S}(\cdot), \bidi[n](\cdot)) \\
    \geq \utili[n]\left(1, \frac 1 2, \bids_{[n-1] \setminus S}(\cdot) \right) - \eps;
\end{equation*}
and for any $T \subseteq[n-1]$ with $|T| = \lceil n / 2 \rceil$,
\begin{equation*}
    \alg(\samples, n, 1, (\bids_T(\cdot), \bidi[n](\cdot))
    \leq \utili[n]\left(1, \frac 1 2, \bids_T(\cdot) \right) + \eps.
\end{equation*}
Therefore, for $W = \diffalg(\samples, \lceil n / 2 \rceil)$, 
\begin{align*}
    \utili[n]\left(1, \frac 1 2, \bids_W(\cdot) \right) \geq 
    \utili[n] \left(1, \frac 1 2, \bids_{[n-1]\setminus S}(\cdot) \right) - 2\eps.
\end{align*}
Since $|W| = [n-1]\setminus S = \lceil n / 2 \rceil$, by \eqref{eq:util-diff-eps},
\begin{align*}
    \left(\lceil \frac n 2 \rceil - |W \setminus S| \right) \cdot \frac{\distconst \eps}{(n-1)4e^2} \leq 2\eps.
\end{align*}
So
\begin{align*}
    |W \cap S| \leq (n - 1) \cdot \frac{8e^2}{\distconst}.
\end{align*}
In other words, with probability at least $ 1- \delta$, $\diffalg(\samples, \lceil n / 2 \rceil)$ is the complement of~$S$ except for at most $\frac{8e^2}{\distconst}$ fraction of the coordinates in $[n-1]$.

Similarly, for $S$ of cardinality $\lceil n / 2 \rceil$, 
\begin{align*}
    |\diffalg(\samples, \lceil n / 2 \rceil) \cap S| \leq (n - 1) \cdot \frac{8e^2}{\distconst} + 1.
\end{align*}
Take $\diffconst$ to be $\frac{8e^2}{\distconst}$. We have $\diffconst<\frac 1 {20}$. For all large enough $n$ and all $S$ of size~$\lfloor n / 2 \rfloor$ or $\lceil n / 2 \rceil$, with probability at least $1 - \delta$, $\diffalg(\samples, \lceil n / 2 \rceil)$ correctly outputs the elements not in~$S$ with an exception of at most $\diffconst$ fraction of coordinates.

Let $\coordsets$ be the set of all subsets of $[n-1]$ of size either $\lceil n / 2 \rceil$ or $\lfloor n / 2 \rfloor$.
Consider any~$S \in \coordsets$.  
Let $\diffset(S) \subseteq [n-1]$ denote the set of coordinates whose memberships in~$S$ are correctly predicted by $\diffalg(\samples, \lceil n / 2 \rceil)$ with probability at least $2/3$; that is, $i \in \diffset(S)$ if{f} with probability at least $2/3$, $\diffalg(\samples, \lceil n / 2 \rceil)$ is correct about whether $i \in S$.  
Let the cardinality of~$|\diffset(S)|$ be $z(n-1)$. Suppose we draw coordinate $i$ uniformly at random from $[n-1]$, and independently draw samples $\samples$ from $\dists_S$, then the probability that $\diffalg(\samples, \lceil n / 2 \rceil)$ is correct about whether $i\in S$ satisfies:
\begin{align*}
     \Prx[i, \samples]{\diffalg(\samples, \lceil n / 2 \rceil)\text{ is correct about whether }i\in S}
     \geq{}& (1 - \diffconst) (1 - \delta) \\
     \geq{}& 0.9,
\end{align*}
and 
\begin{align*}
    \Prx[i, \samples]{\diffalg(\samples, \lceil n / 2 \rceil)\text{ is correct about whether }i\in S}
    \le{}& \Prx[i]{i\in \diffset(S)}\cdot 1 + \Prx[i]{i\notin \diffset(S)}\cdot\frac{2}{3}  \\
    ={}& z\cdot 1 + (1-z)\cdot \frac 2 3,
\end{align*} 
which implies $z > 0.6$.  
If a pair of sets $S$ and~$S'$ differ in only one coordinate~$i$, and $i \in \diffset(S) \cap \diffset(S')$, then $\diffalg(\cdot)$ serves as an algorithm that tells apart $\dists_S$ and~$\dists_{S'}$, contradicting Corollary~\ref{cor:lb-kl}.  
We now show, with a counting argument, that such a pair of $S$ and~$S'$ must exist.

Since for each $S \in \coordsets$, $|\diffset(S)| \geq 0.6(n-1)$, there exists a coordinate $i \in [n-1]$ and $\mathcal T \subseteq \coordsets$, with $|\mathcal T| \geq 0.6 |\coordsets|$, such that for each $S \in \mathcal T$, $i \in \diffset(S)$. 
But $\coordsets$ can be decomposed into $|\coordsets| / 2$ pairs of sets, such that within each pair, the two sets differ by one in size, and precisely one of them contains coordinate~$i$.  
Therefore among these pairs there must exist one $(S, S')$ with $S, S' \in \mathcal T$, i.e., $i \in \diffset(S)$ and $i \in \diffset(S')$.
Using $\diffalg$, which is induced by~$\alg$, we can tell apart $\dists_S$ and~$\dists_{S'}$ with probability at least $2/3$, which is a contradiction to Corollary~\ref{cor:lb-kl}.
This completes the proof of Theorem~\ref{thm:lower-bound-learning-util}.
\end{proof}

\section{Missing Proofs from Section \ref{sec:search}}
\subsection{Proof of Theorem \ref{thm:epsNEpoa}}
\label{sec:proof-thm:epsNEpoa}
\epsNEpoa*

Recall that $\varalloc_i(\dstrats)$ indicates whether bidder~$i$ receives the item, and 
$\varinspect_i(\dstrats)$ indicates whether bidder~$i$ inspects her value.
The expected utility of bidder~$i$ can be decomposed into a welfare term and a payment term, as follows:
\[\utili^{\DA(\dists, \costs)}(\dstrats) = \Ex{\varalloc_i(\dstrats)(\vali - \dbidi(\vali)) - \varinspect_i(\dstrats)\costi} = \Ex{\varalloc_i(\dstrats)\vali - \varinspect_i(\dstrats)\costi} - \Ex{\varalloc_i(\dstrats)\dbidi(\vali)},\]
where the randomness is over $\vals\sim\dists$ and the randomness of mixed strategies.
And the social welfare can be expressed as the sum of utilities and payments of bidders: 
\begin{align*}
    \SW^{\DA(\dists, \costs)}(\dstrats)  = \sum_{i=1}^n\Ex{ \varalloc_i(\dstrats)\vali - \varinspect_i(\dstrats)\costi}
    = \sum_{i=1}^n \left[ \utili^{\DA(\dists, \costs)}(\dstrats) + \Ex{\varalloc_i(\dstrats)\dbidi(\vali)}\right].
\end{align*}

Now suppose $\dstrats$ is an $\eps$-NE, then 
\begin{align}
    \SW^{\DA(\dists, \costs)}(\dstrats) & \ge \sum_{i=1}^n \left[\utili^{\DA(\dists, \costs)}(\dstrati', \dstratsmi)-\epsilon + \Ex{\varalloc_i(\dstrats)\dbidi(\vali)}\right] \nonumber \\
    & = \sum_{i=1}^n \utili^{\DA(\dists, \costs)}(\dstrati', \dstratsmi) + \sum_{i=1}^n \Ex{\varalloc_i(\dstrats)\dbidi(\vali)} - n\eps, 
\label{eq:SW-eps-NE}
\end{align}
for any set of strategies $\{\dstrati'\}_{i\in[n]}$.

For each $i$, let $\thresholdi$ be the index of $(\disti, \costi)$. 
Recall that we use $\kappa_i$ to denote $\min\{\vali, \thresholdi\}$. 
We will construct strategies $\dstrati'$ that satisfy the following inequality
\begin{equation}
\label{eq:smoothness-goal}
    \sum_{i=1}^n \utili^{\DA(\dists, \costs)}(\dstrati', \dstratsmi) \ge (1-\frac{1}{e})\Ex{\max_{i\in[n]} \kappa_i} - \sum_{i=1}^n \Ex{\varalloc_i(\dstrats)\dbidi(\vali)}. 
\end{equation}
By \eqref{eq:SW-eps-NE} and \eqref{eq:smoothness-goal} we have
$\SW^{\DA(\dists, \costs)}(\dstrats) \ge (1-\frac{1}{e})\Ex{\max_{i\in[n]} \kappa_i} - n\eps$. Since $\Ex{\max_{i\in[n]} \kappa_i} = \OPT^{(\dists, \costs)}$ (Lemma~\ref{lem:optimal-welfare}), the theorem is proved. 

Now we construct $\dstrati'$.  Each $\dstrati'$ is a mixed strategy that does the following: sample a random variable $Z\in [\frac{1}{e}, 1]$ with probability density function $f_Z(z) = \frac{1}{z}$; inspect the value at threshold price $\dtimei'=(1-Z)\thresholdi$; and claim the item at price $\dbidi'(\vali) = (1-Z)\kappa_i$.  Note that $\dstrati'$ claims above $\thresholdi$, and we will make use of the following property of strategies that claim above $\thresholdi$: 
\begin{claim}\label{cl:utility-pandora}
For strategy $\dstrati'$ that claims above $\thresholdi$, 
$\Ex{\varalloc_i(\dstrat', \dstratsmi)\vali - \varinspect_i(\dstrati', \dstratsmi)\costi} = \Ex{\varalloc_i(\dstrati', \dstratsmi)\kappa_i}$.
\end{claim}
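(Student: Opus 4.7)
The plan is to rewrite the claim as an algebraic identity, reduce it (using the definition of $\thresholdi$) to a statement about when bidder~$i$ wins, and then verify that statement using the ``claims above $\thresholdi$'' property. First I would rearrange the desired equality as
\[
\Ex{\varalloc_i(\dstrati',\dstratsmi)\,(\vali - \kappa_i)} = \Ex{\varinspect_i(\dstrati',\dstratsmi)}\cdot\costi,
\]
using that $\vali - \kappa_i = \max\{\vali-\thresholdi,0\}$. By Definition~\ref{def:fpa-pandora-index}, $\costi = \Ex[\vali\sim\disti]{\max\{\vali-\thresholdi,0\}}$, so the right-hand side becomes $\Ex{\varinspect_i}\cdot \Ex[\vali]{\max\{\vali-\thresholdi,0\}}$.

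Next I would use the independence of $\varinspect_i$ from $\vali$. The event that bidder~$i$ inspects depends only on whether the price descends to her (possibly random) threshold $\dtimei$ before any opponent claims; this is a function of $\valsmi$ together with the mixed-strategy randomness of all bidders, all of which is independent of $\vali$ (note that $\vali$ is only revealed by inspection). Hence
\[
\Ex{\varinspect_i}\cdot\costi \;=\; \Ex{\varinspect_i\,\max\{\vali-\thresholdi,0\}}.
\]
So the claim reduces to showing
\[
\Ex{\varalloc_i\,\max\{\vali-\thresholdi,0\}} \;=\; \Ex{\varinspect_i\,\max\{\vali-\thresholdi,0\}},
\]
or equivalently that $\varalloc_i \cdot \mathbb{1}[\vali > \thresholdi] = \varinspect_i \cdot \mathbb{1}[\vali > \thresholdi]$ almost surely.

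The substantive step is to verify this equality. Since $\varalloc_i \le \varinspect_i$ trivially (a bidder can only receive the item after inspecting), it suffices to show that whenever $\varinspect_i=1$ and $\vali>\thresholdi$, bidder~$i$ wins. Because $\dstrati'$ claims above $\thresholdi$, the condition $\vali \ge \thresholdi$ forces $\dbidi'(\vali)=\dtimei$; so once bidder~$i$ inspects at $\dtimei$ she claims immediately at that very price. By definition of $\varinspect_i=1$, no opponent has claimed at a price strictly above $\dtimei$ (otherwise the auction would already have ended), so bidder~$i$'s claim at $\dtimei$ is weakly the highest among all committed claims and she takes the item.

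The main obstacle is handling the possibility that some opponent has also committed to claim exactly at $\dtimei$ (which would give a tie). For the specific $\dstrati'$ used in the proof of Theorem~\ref{thm:epsNEpoa}, the threshold $\dtimei' = (1-Z)\thresholdi$ is a continuous random variable (since $Z$ has a density on $[\tfrac{1}{e},1]$), and for any fixed opponent strategies $\dstratsmi$ the event that an opponent's committed claim coincides with $\dtimei'$ has probability zero; thus the ``wins'' conclusion holds almost surely, which is all that is needed for the expectation identity. Together with the reductions above this yields $\Ex{\varalloc_i\,\vali - \varinspect_i\costi} = \Ex{\varalloc_i\,\vali} - \Ex{\varalloc_i\,\max\{\vali-\thresholdi,0\}} = \Ex{\varalloc_i\,\min\{\vali,\thresholdi\}} = \Ex{\varalloc_i\,\kappa_i}$, completing the proof.
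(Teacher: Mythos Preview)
Your proposal is correct and follows essentially the same route as the paper: use the definition of the index $\thresholdi$, exploit the independence of $\varinspect_i$ and $\vali$ to pull $\varinspect_i$ inside the expectation, and then invoke the ``claims above $\thresholdi$'' property to obtain $\varalloc_i=\varinspect_i$ whenever $\vali>\thresholdi$, which makes the residual term $(\varalloc_i-\varinspect_i)\max\{\vali-\thresholdi,0\}$ vanish. The only difference is cosmetic ordering (you work backwards from the target identity, the paper works forward from the left-hand side), and you are in fact more careful than the paper about the tie at the inspection price, which the paper's proof passes over silently.
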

\begin{proof}
For convenience we write $\varalloc_i = \varalloc_i(\dstrati', \dstratsmi)$ and $\varinspect_i = \varinspect_i(\dstrati', \dstratsmi)$.  By linearity of expectation and the definition of index,
\begin{align*}
    \Ex{\varalloc_i\vali - \varinspect_i\costi}
     & = \Ex{\varalloc_i\vali}  - \Ex{\varinspect_i}\costi \\
     & = \Ex{\varalloc_i\vali}  - \Ex{\varinspect_i}\Ex[\vali\sim\disti]{\max\{\vali - \thresholdi, 0\}}. 
\end{align*}
Note that $\vali$ and $\varinspect_i$ are independent because bidder~$i$ doesn't know her value before inspection. Thus
\begin{align*}
    \Ex{\varalloc_i\vali - \varinspect_i\costi}
    & = \Ex{\varalloc_i\vali}  - \Ex{\varinspect_i\max\{\vali - \thresholdi, 0\}} \\
    & = \Ex{\varalloc_i\vali - \varinspect_i \max\{\vali - \thresholdi, 0\}} \\
    & = \Ex{\varalloc_i\vali - \varalloc_i \max\{\vali - \thresholdi, 0\} +  (\varalloc_i - \varinspect_i) \max\{\vali - \thresholdi, 0\}}. 
\end{align*}
Because $\dstrati'$ claims above $\thresholdi$, we have $\varalloc_i = \varinspect_i$ whenever $\vali>\thresholdi$.  This implies $(\varalloc_i - \varinspect_i) \max\{\vali - \thresholdi, 0\} = 0$ and 
\begin{align*}
    \Ex{\varalloc_i\vali - \varinspect_i\costi}
     = \Ex{\varalloc_i(\vali - \max\{\vali - \thresholdi, 0\})}
     = \Ex{\varalloc_i\kappa_i}.
\end{align*}
\end{proof}

Now we argue that the $\{\dstrati'\}_{i\in[n]}$ constructed above satisfy \eqref{eq:smoothness-goal}. By Claim~\ref{cl:utility-pandora}, we have $\utili^{\DA(\dists, \costs)}(\dstrati', \dstratsmi) = \Ex{\varalloc_i(\dstrati', \dstratsmi)(\kappa_i - \dbidi'(\vali))}$.  Summing over $i\in[n]$,  
\begin{align*}
    \sum_{i=1}^n \utili^{\DA(\dists, \costs)}(\dstrati', \dstratsmi) & = \sum_{i=1}^n \Ex{\varalloc_i(\dstrati', \dstratsmi)(\kappa_i - \dbidi'(\vali))} \\ 
    & = \Ex{ \sum_{i=1}^n \varalloc_i(\dstrati', \dstratsmi)(\kappa_i - \dbidi'(\vali))}  \\
    & = \Ex{ \sum_{i=1}^n \varalloc_i(\dstrati', \dstratsmi) Z\kappa_i}.
\end{align*}
For any fixed value profile $\vals=(\vali)$, let $i^*\coloneqq \argmax_{i\in[n]}\{\kappa_i\}$.  Since  $\varalloc_i(\dstrati', \dstratsmi) Z\kappa_i \ge 0$, we have 
\begin{align}\label{eq:utility-maxindex}
    \sum_{i=1}^n \utili^{\DA(\dists, \costs)}(\dstrati', \dstratsmi) \ge \Ex{ \varalloc_{i^*}(\dstrati[i^*]', \dstratsmi[i^*]) Z\kappa_{i^*}}.
\end{align}
\begin{claim}\label{cl:smoothness-step}
For any $\vals$, $\Ex{ \varalloc_{i^*}(\dstrati[i^*]', \dstratsmi[i^*]) Z\kappa_{i^*}\mid \vals} \ge (1-\frac{1}{e}) \kappa_{i^*} - \sum_{i=1}^n \varalloc_i(\dstrats)\dbidi(\vali)$. 
\end{claim}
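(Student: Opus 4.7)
The plan is to fix the values $\vals$ along with a realization of the pure strategies drawn from the mixed profile $\dstrats$; proving the inequality pointwise and then averaging over the remaining randomness will recover the claim as stated. Set $p \coloneqq \sum_{i=1}^n \varalloc_i(\dstrats)\dbidi(\vali)$; on any such realization $p$ equals the winning price in $\dstrats$ if some bidder wins, and is zero otherwise.

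The main step will be the allocation comparison
\begin{equation*}
\varalloc_{i^*}(\dstrati[i^*]', \dstratsmi[i^*]) \;\ge\; \mathbf{1}\bigl[(1-Z)\kappa_{i^*} > p\bigr].
\end{equation*}
To justify it, I will rely on the following fact about Dutch auctions: under any pure strategy profile the winner is simply $\argmax_j \dbidi[j](\vali[j])$. Indeed, the feasibility constraint $\dbidi[j]\le \dtimei[j]$ guarantees that each bidder has already inspected by the time the descending price reaches her intended claim price, so the auction allocates at the largest realized claim price. Under the deviation, bidder $i^*$'s claim price is $(1-Z)\kappa_{i^*}$ (feasible since her new threshold $(1-Z)\thresholdi[i^*] \ge (1-Z)\kappa_{i^*}$), while every other bidder's realized claim price is identical to that in $\dstrats$ and hence at most $p$; so when $(1-Z)\kappa_{i^*} > p$ bidder $i^*$'s claim price is the strict maximum and she wins.

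Granted the allocation comparison, I integrate over $Z$ against its density $f_Z(z) = 1/z$ on $[1/e, 1]$:
\begin{align*}
\Ex{\varalloc_{i^*}(\dstrati[i^*]', \dstratsmi[i^*])\, Z\kappa_{i^*} \mid \vals}
  &\ge \kappa_{i^*} \int_{1/e}^{1} \mathbf{1}\bigl[(1-z)\kappa_{i^*} > p\bigr]\, dz \\
  &= \max\bigl\{0,\; (1-1/e)\kappa_{i^*} - p\bigr\} \\
  &\ge (1-1/e)\kappa_{i^*} - p,
\end{align*}
where the equality is a direct length computation on the set $\{z\in[1/e,1]:z<1-p/\kappa_{i^*}\}$ (with the edge case $\kappa_{i^*}=0$ handled by inspection), and the last step is $\max\{0,x\}\ge x$. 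Averaging this pointwise inequality over the remaining mixed-strategy randomness in $\dstrats$ and $\dstratsmi[i^*]$ preserves it and yields the claim.

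The main obstacle is the allocation comparison. At first sight a Dutch auction with search costs has a chicken-and-egg structure, since a bidder inspects only if the item is still available, yet availability depends on others' inspection and claim decisions. The feasibility constraint $\dbidi \le \dtimei$ cleanly breaks this circularity by forcing every bidder to have inspected by the moment she would claim at her intended price; the auction outcome thereby reduces to ``highest realized claim price wins,'' after which comparing $(1-Z)\kappa_{i^*}$ against $p$ is all that remains, and the integration against $1/z$ is routine.
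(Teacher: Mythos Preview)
Your proof is correct and follows essentially the same route as the paper's. The only cosmetic differences are that the paper sets $p\coloneqq\max_{j\ne i^*}\dbidi[j](\vali[j])$ and closes with the observation $\sum_i \varalloc_i(\dstrats)\dbidi(\vali)=\max_i \dbidi(\vali)\ge p$, whereas you take $p$ to be the winning price $\sum_i \varalloc_i(\dstrats)\dbidi(\vali)$ from the outset; and you fold the paper's two cases into the single expression $\max\{0,(1-1/e)\kappa_{i^*}-p\}$. Your explicit justification that the Dutch-auction winner is $\argmax_j \dbidi[j](\vali[j])$ (via the feasibility constraint $\dbidi[j]\le\dtimei[j]$) and your handling of mixed strategies by fixing a pure realization and averaging are details the paper leaves implicit.
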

\begin{proof}
Let $p\coloneqq \max_{j\ne {i^*}} \dbidi[j](\vali[j])$.
If $p\ge (1-\frac{1}{e})\kappa_{i^*}$, then $\Ex{ \varalloc_{i^*}(\dstrati[i^*]', \dstratsmi[i^*]) Z\kappa_{i^*}\mid \vals} \ge 0 \ge (1-\frac{1}{e}) \kappa_{i^*} - p$.
Otherwise, note that whenever bidder~$i^*$'s bid $(1-Z)\kappa_{i^*}$ is above $p$, she wins the item, thus
\begin{align*}
    \Ex{ \varalloc_{i^*}(\dstrati[i^*]', \dstratsmi[i^*]) Z\kappa_{i^*}\mid \vals} & = \int_{1/e}^{1-p/\kappa_{i^*}} z\kappa_{i^*} f_Z(z) \dd z  = \int_{1/e}^{1-p/\kappa_{i^*}} z\kappa_{i^*} \frac{1}{z} \dd z \\
    & = (1- \frac{1}{e} - \frac{p}{\kappa_{i^*}}) \kappa_{i^*}  = (1- \frac{1}{e})\kappa_{i^*}  - p.
\end{align*}
The proof is completed by observing that $\sum_{i=1}^n \varalloc_i(\dstrats)\dbidi(\vali) = \max_{i\in[n]} \bidi(\vali) \ge p$.
\end{proof}
Taking expectation over $\vals\sim\dists$, \eqref{eq:utility-maxindex} and Claim~\ref{cl:smoothness-step} immediately implies \eqref{eq:smoothness-goal}.

\end{document}